\newcounter{note}[section]
\newcommand{\hubert}[1]{\refstepcounter{note}$\ll${\sf Hubert'sComment~\thenote:} {\sf \textcolor{blue}{#1}}$\gg$\marginpar{\tiny\bf HC~\thenote}}
\newcommand{\yuetodo}[1]{{\large\color{green}[Yue todo: #1]}}
\definecolor{yue}{rgb}{0.7, 0, 0}
\renewcommand{\yuetodo}[1]{}
\newcommand{\mcal}[1]{\ensuremath{\mathcal {#1}}}
\definecolor{darkgreen}{rgb}{0,0.5,0}
\definecolor{lightblue}{RGB}{0,176,240}
\definecolor{darkblue}{RGB}{0,112,192}
\definecolor{lightpurple}{RGB}{124, 66, 168}
\definecolor{grey}{RGB}{139, 137, 137}
\definecolor{maroon}{RGB}{178, 34, 34}
\definecolor{green}{RGB}{34, 139, 34}
\definecolor{types}{RGB}{72, 61, 139}
\definecolor{gold}{rgb}{0.8, 0.33, 0.0}
\definecolor{darkgray}{gray}{0.3}
\definecolor{darkred}{rgb}{0.5, 0, 0}
\definecolor{darkgreen}{rgb}{0, 0.5, 0}
\definecolor{darkblue}{rgb}{0,0,0.5}
\newcommand\markx[2]{}
\newcommand{\R}{\mathbb{R}}
\newcommand{\Z}{\mathbb{Z}}
\newcommand{\ignore}[1]{}
\newcounter{task}
\newenvironment{proofof}[1]{\begin{proof}[Proof of #1]}{\end{proof}}
\newtheorem{theorem}{Theorem}[section]
\newtheorem{corollary}[theorem]{Corollary}
\newtheorem{fact}[theorem]{Fact}
\newtheorem{lemma}[theorem]{Lemma}
\theoremstyle{definition}
\newtheorem{definition}[theorem]{Definition}
\newtheorem{remark}[theorem]{Remark}
\newtheorem{assumption}[theorem]{Assumption}
\newcounter{cnt:challenge}
\newcommand*\samethanks[1][\value{footnote}]{\footnotemark[#1]}
\newcommand{\D}{\mathsf{D}}
\newcommand{\util}{\mathsf{u}}
\newcommand{\f}{\mathfrak{f}}
\newcommand{\g}{\mathfrak{g}}
\newcommand{\Pow}{\mathsf{Pow}}
\DeclareMathOperator*{\argmin}{arg\,min}
\DeclareMathOperator*{\argmax}{arg\,max}
\newcommand{\vecp}{\mathbf{p}}
\newcommand{\vecq}{\mathbf{q}}
\newcommand{\Bfg}{\mcal{B}^{\geq}_\f \times \mcal{B}^{\leq}_\g}
\title{Density Decomposition in Dual-Modular Optimization: \\
Markets, Fairness, and Contracts}
\author{T-H. Hubert Chan\thanks{Department of Computer Science, the University of Hong Kong.} \and Shinuo Ma\samethanks }
\date{}
\begin{document}

\begin{titlepage}
    \maketitle

    \begin{abstract}

We study a unified framework for optimization problems defined on dual-modular instances, where the input comprises a finite ground set $V$ and two set functions: a monotone supermodular reward function $\f$ and a strictly monotone submodular cost function $\g$.  This abstraction captures and generalizes classical models in economics and combinatorial optimization, including submodular utility allocation (SUA) markets and combinatorial contracts. At the core of our framework is the notion of density decomposition, which extends classical results to the dual-modular setting and uncovers structural insights into fairness and optimality.

We show that the density decomposition yields a canonical vector of reward-to-cost ratios (densities) that simultaneously characterizes market equilibria, fair allocations -- via both lexicographic optimality and local maximin conditions -- and best-response strategies in contract design. Our main result proves the equivalence of these fairness notions and guarantees the existence of allocations that realize the decomposition densities.

Our technical contributions include the analysis of a broad family of convex programs -- parameterized by divergences such as quadratic, logarithmic, and hockey-stick functions -- whose minimizers recover the density decomposition. We prove that any strictly convex divergence yields the same canonical density vector, and that locally maximin allocations act as universal minimizers for all divergences satisfying the data processing inequality.

As an application of our framework, we determine the structure and number of critical values in the combinatorial contracts problem. Additionally, we generalize a Frank-Wolfe-type iterative method for approximating the dual-modular density decomposition, establishing both convergence guarantees and practical potential through efficient gradient oracle design.

Our framework unifies and generalizes several lines of prior work and provides a principled foundation for studying density-based optimization in dual-modular and economic settings.

\end{abstract}

\noindent \textbf{Keywords:} Dual-modular optimization, Density decomposition, Market equilibrium, Convex programming

    \thispagestyle{empty}
\end{titlepage}

\section{Introduction}
\label{sec:intro}

In this work, we consider problems whose input instance
$(V; \f, \g)$
consists of a finite ground set~$V$ (where $n = |V|$) and
two set functions $\f, \g: 2^V \rightarrow \R_{\geq 0}$ satisfying
a \emph{dual-modular} requirement. Specifically, the \emph{reward} function $\f$ is supermodular and monotone,
and the \emph{cost} function $\g$ is  submodular and (strictly\footnote{We shall
see in Section~\ref{sec:strict_example} why strict monotonicity is required for $\g$.}) monotone.
The input instance is interpreted in the following economic problems
that we shall study in a unified framework.  We will describe the definitions
in an intuitive way in the introduction, while the fine technical details
are given in the main body.

\noindent \textbf{Dual-Modular Markets and Polymatroids.}  This is a generalization
of the \emph{submodular utility allocation market} (SUA) proposed by Jain and 
Vazirani~\cite{DBLP:journals/geb/JainV10},
which can be equivalently formulated by a supermodular reward function~$\f$
and the special case of a linear\footnote{A set function $\g: 2^V \rightarrow \R_{\geq 0}$
is linear if there exists a vector $y \in \R^V$
such that $\g(S) = \sum_{v \in S} y_v$; we also use $y(v) = y_v$ to
denote a coordinate of the vector.} cost function~$\g$ 
(which, as we shall see, is equivalent to assigning 
a fixed quantity $y_v = \g(\{v\})$ to each agent~$v$).

Here, the ground set~$V$ represents a collection of \emph{agents}
and there are total $\f(V)$ amount of reward and $\g(V)$ amount of cost,
both of which need to be distributed among the agents
such that each agent receives non-negative amounts of reward and cost.
Specifically, an \emph{allocation} is a pair $(x, y) \in \R_{\geq 0}^V \times \R_{\geq 0}^V$
of reward and cost allocation vectors that satisfy the following requirements:

\begin{compactitem}

\item  For all $S \subseteq V$, $\sum_{v \in S} x_v \geq \f(S)$;
this guarantees the minimum reward for the group $S$ of agents.
In addition,
the coordinates of $x$ sum to $\f(V)$.

Equivalently, $x$ is in
the \emph{base contrapolymatroid}~$\mcal{B}^{\geq}_\f$.

\item  For all $S \subseteq V$, $\sum_{v \in S} y_v \leq \g(S)$;
this puts a limit on the maximum cost charged to the group~$S$.
In addition, the coordinates of $y$ sum to $\g(V)$.

Equivalently, $y$ is in the \emph{base polymatroid}~$\mcal{B}^{\leq}_\g$.
\end{compactitem}

\noindent \emph{Market Equilibrium.}  The SUA market itself
is a generalization of the Fisher market~\cite{fisher1892}.
In the literature, market equilibrium can be characterized by 
several approaches.

\begin{compactitem}

\item \emph{Convex Program.}
An SUA market equilibrium is defined to be an optimal solution
of some Eisenberg-Gale-type convex program in~\cite{DBLP:journals/geb/JainV10},
but it is difficult to understand the intuition of the equilibrium
with this characterization.
Moreover,
there is actually nothing specific about the Eisenberg-Gale objective function and it is known~\cite{DBLP:conf/ipco/Nagano07,Panditi2016} 
that the SUA market equilibrium can be characterized by a
wide class of convex objective functions including the one
discovered by Shymrev~\cite{shmyrev2009}.

\item \emph{Fairness Notion.}
A market equilibrium may also be characterized by
fairness conditions 
based on the satisfaction of the agents.
In the context of dual-modular markets,
the satisfaction of an agent~$v \in V$ in an allocation $(x, y)$
is measured by the density $\rho_v := \frac{x_v}{y_v}$,
which is also known as the \emph{reward-to-cost} or
\emph{bang-to-buck} ratio. 

\emph{Lexicographic optimality}~\cite{fujishige1980lexicographically} takes a global view
on fairness.  First, consider the collection~$\mcal{A}_1$ of allocations in which the maximum density among all agents is minimized.
For $i \geq 1$, suppose we have formed the collection $\mcal{A}_i$ of allocations,
and in every such allocation, the largest~$i$ densities among the agents
form the same multi-set.
Then, $\mcal{A}_{i+1} \subseteq \mcal{A}_i$ contains those allocations
in which the $(i+1)$-st largest density is minimized.
Eventually $\mcal{A}_n$ is the collection of lexicographically optimal allocations.\footnote{In case the reader wonders,
lexicographic optimality may equivalently be defined in the complementary approach by 
first maximizing the minimum density among all agents. Keep reading until the end of the introduction
to see a proof.}

\emph{Locally maximin condition}~\cite{DBLP:conf/innovations/ChanX25} gives a way to check whether a given allocation is fair.
Loosely speaking,
the condition states that if an agent~$i$ is less
satisfied than agent~$j$ (because $\rho_i < \rho_j$) in
an allocation, then the allocation already \emph{prioritizes}
agent~$i$ over agent~$j$.

Formally, suppose in an allocation~$(x, y)$,
for $\rho > 0$, $S^{(\rho)}$ is the collection of agents~$v$
whose induced density $\rho_v = \frac{x_v}{y_v}$ is at least $\rho$.
Then, the locally maximin condition requires that for any $\rho > 0$,
the agents in $S^{(\rho)}$ together receive the worst possible reward and cost,
i.e, $\sum_{v \in S^{(\rho)}} x_v = \f(S^{(\rho)})$
and $\sum_{v \in S^{(\rho)}} y_v = \g(S^{(\rho)})$.

As we shall see, the two fairness notions can be proved
to be equivalent.  However, just from the definitions,
the existence of a lexicographically optimal allocation is
guaranteed, while checking whether an allocation is
locally maximin is convenient.

\end{compactitem}

It is known~\cite{DBLP:conf/ipco/Nagano07,DBLP:conf/innovations/ChanX25}
that these approaches are equivalent for some special cases (e.g., $\g$ is linear)
and they give rise to the same density vector $\rho^* \in \R^V$.
We shall extend these results to the more general dual-modular markets.

\noindent \textbf{Combinatorial Contracts~\cite{DBLP:conf/focs/DuttingEFK21}.} A \emph{principal} would
like to hire an \emph{agent} to perform a subset of actions
from some action set~$V$.  The principal specifies a contract parameter $\alpha \in [0, 1]$ such that if the agent performs $S \subseteq V$,
the agent will receive a utility $\util_a(\alpha, S) := \alpha \cdot \f(S) - \g(S)$,
while the principal will receive a utility $\util_p(\alpha, S) := (1 - \alpha) \cdot \f(S)$.

For each~$\alpha \in [0, 1]$,
the agent has some best response~$S_\alpha \in \arg\max_{S \subseteq V} \util_a(\alpha, S)$, where ties are resolved following two
guidelines. The first requirement is clear: the subset benefiting the principal more will be chosen.

The second guideline is more subtle.  Suppose that when the $S_\alpha$'s are chosen, the collection $\Lambda \subseteq [0, 1]$ of values~$\alpha$
at which $S_\alpha$ changes are known as the \emph{critical values}.
Since the principal would finally like to choose $\alpha \in [0, 1]$ to
maximize $\util_p(\alpha, S_\alpha)$,
it suffices to check the utility for each critical $\alpha \in \Lambda$.
Therefore,
the second guideline for resolving ties when picking all the $S_\alpha$'s
is that, as
 $\alpha$ increases continuously from $0$ to $1$,
the best response $S_\alpha$ will change ``as infrequently as possible''
such that the size of $\Lambda$ would be kept as small as possible.

It is known~\cite{DBLP:conf/focs/DuttingEFK21} that without the dual-modular requirements on $(V; \f, \g)$,
even computing $S_\alpha$ for a specific $\alpha$ is NP-hard,
and there could be exponential number of critical values.
However, it is subsequently shown~\cite{DBLP:conf/soda/DuttingFT24} that
for the dual-modular instances, there are efficient algorithms
to compute a best response $S_\alpha$ for each $\alpha$ and the resulting number
of critical values is at most $|V|$.
Our unified framework will reveal the structure of these best responses
and the critical values.

\subsection{Dual-Modular Density Decomposition}

We shall see that the above problems can be tackled by
the \emph{density decomposition}.
For the special case of linear $\g$,
Fujishige~\cite{fujishige1980lexicographically}
first considered it in the context
of lexicographically optimal base of polymatroids.
This decomposition has been independently rediscovered many times
in the computer science community.
For instance, a well-known special case has been studied~\cite{DBLP:conf/www/TattiG15}, where the ground set is the vertex set in a graph $G= (V, E)$.
Here, given $S \subset V$, $\f(S)$ is the number of edges such that both endpoints 
are in $S$ and $\g(S) = |S|$. The description of the decomposition
extends readily to the dual-modular setting from the special cases~\cite{DBLP:conf/www/TattiG15,DBLP:conf/www/DanischCS17,DBLP:conf/esa/HarbQC23}

The density of a subset $S \subseteq V$ is
$\rho(S) := \frac{\f(S)}{\g(S)}$.
A \emph{densest subset}~$S$ is one that maximizes $\rho(S)$.
Even though there can be more than one densest subset,
for a dual-modular instance,
it is known that the \emph{maximal} densest subset~$S_1$ is unique
and includes all other densest subsets.
The maximal densest subset~$S_1$ is the first step 
of the density decomposition,
and suppose $\rho_1 = \frac{\f(S_1)}{\g(S_1)}$ is the corresponding density.

\noindent \textbf{Connection to Dual-Modular Markets.}
The rules state that the agents in $S_1$ together will receive
a reward of at least $\f(S_1)$ and pay a cost of at most $\g(S_1)$.
Hence, if in an allocation $(x, y)$,
the agents in $S_1$ totally receive and pay exactly these worst-case amounts of reward and cost,
other agents outside $S_1$ would be in the best situation.
One of our main results is that the rules indeed allow the reward amount $\f(S_1)$ and the cost
amount $\g(S_1)$ to be distributed proportionally among $S_1$
such that the induced density of every agent in $S_1$ is exactly $\rho_1$.
If it is possible to allocate the remaining reward
and cost among other agents such that eventually $\rho_1$ turns out to be the maximum density among
all agents in~$V$, then we know that the locally maximin condition
cannot be violated due to an agent in $S_1$;
moreover, in any lexicographically optimal allocation,
all agents $v \in S_1$ must have density $\rho^*_v := \rho_1$.

To continue with the decomposition,
we recursively apply the process to the subinstance
$(V \setminus S_1; \f(\cdot | S_1), \g( \cdot | S_1))$,
where the marginal contribution
is $h(A | S) := h(A \cup S) - h(S)$.
The process terminates when an empty ground set is reached.
If the process terminates in $k$ steps,
then the decomposition gives a partition $V = \cup_{i=1}^k S_i$
on the ground set and a corresponding density vector $\rho^* \in \R^V$.
Specifically, for each $i$, each agent $v \in S_i$
will be assigned some density $\rho^*_v := \rho_i = \rho(S_i | S_{<i})$
in the subinstance in which it is created.
Our first main result connects 
the density decomposition and the dual-modular market fairness.

\begin{theorem}[Dual-Modular Density Decomposition and Market Fairness]
\label{th:main_market}
In a dual-modular instance $(V; \f, \g)$, suppose $\rho^* \in \R^V$ is the 
density vector obtained from the density decomposition.
Then, the following three conditions are equivalent
for an allocation~$(x, y) \in \mcal{B}^{\geq}_\f \times \mcal{B}^{\leq}_\g$:

\begin{compactitem}
\item[(i)] For all $v \in V$,
the induced density $\rho_v := \frac{x_v}{y_v}$ equals $\rho^*_v$.

\item[(ii)] The allocation $(x, y)$ is lexicographically optimal.

\item[(iii)]  The allocation $(x, y)$ is locally maximin.

\end{compactitem}

\end{theorem}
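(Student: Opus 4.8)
The plan is to establish the cycle of implications (i) $\Rightarrow$ (iii) $\Rightarrow$ (ii) $\Rightarrow$ (i), since (i) is the most concrete handle and the density decomposition produces it constructively. The central lemma I would prove first, by induction on the number $k$ of decomposition steps, is an \emph{existence} statement: there exists an allocation $(x,y) \in \Bfg$ such that for every $i$ and every $v \in S_i$, $x_v = \rho_i \cdot y_v$, and moreover $\sum_{v \in S_{\le i}} x_v = \f(S_{\le i})$ and $\sum_{v \in S_{\le i}} y_v = \g(S_{\le i})$ for each $i$. The key sub-claim for the base case is that the reward amount $\f(S_1)$ and cost amount $\g(S_1)$ can be split proportionally across $S_1$ so that each agent gets density exactly $\rho_1$ while respecting the contra-polymatroid lower bounds on subsets of $S_1$ and the polymatroid upper bounds on subsets of $S_1$; this follows because $S_1$ is the \emph{maximal} densest subset, so for every $T \subsetneq S_1$ one has $\rho(T) < \rho_1$ (equivalently $\f(T) < \rho_1 \g(T)$ and, by the complementary inequality inside the base polymatroid, $\f(S_1) - \f(S_1 \setminus T) \le \rho_1(\g(S_1) - \g(S_1 \setminus T))$), which is exactly the feasibility condition for the proportional split. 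The recursion then applies the same argument to the dual-modular subinstance $(V \setminus S_1; \f(\cdot \mid S_1), \g(\cdot \mid S_1))$, and one checks that combining a feasible allocation on $S_1$ with a feasible allocation on the subinstance yields a point of $\Bfg$ — here I would invoke the standard fact that the base (contra-)polymatroid decomposes along a tight set.

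Next, (i) $\Rightarrow$ (iii): assuming the induced densities are $\rho^*_v$, the level set $S^{(\rho)} = \{v : \rho_v \ge \rho\}$ is exactly a union $S_{\le j}$ of the first few decomposition blocks (for $\rho$ in the appropriate range; for $\rho$ strictly between consecutive $\rho_i$ values the level set is still one of these prefixes, and the condition is vacuous outside the range). By the existence lemma, $\sum_{v \in S_{\le j}} x_v = \f(S_{\le j})$ and $\sum_{v \in S_{\le j}} y_v = \g(S_{\le j})$, so the locally maximin condition holds. One subtlety: the densities $\rho_i$ produced by the decomposition must be strictly decreasing, $\rho_1 > \rho_2 > \cdots > \rho_k$; I would prove this as part of the decomposition analysis (it is the standard consequence of taking the maximal densest set at each stage), so that "first few blocks" is well-defined.

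For (iii) $\Rightarrow$ (ii) and (ii) $\Rightarrow$ (i), I would argue via the sorted density vector. For (iii) $\Rightarrow$ (ii): if $(x,y)$ is locally maximin, then for each threshold $\rho$ the set $S^{(\rho)}$ is tight on both $\f$ and $\g$; a short exchange/averaging argument shows that any allocation minimizing the sorted density vector lexicographically must have every level set tight as well, and conversely that tightness of all level sets pins down the entire sorted multiset of densities — hence a locally maximin allocation is automatically lexicographically optimal. For (ii) $\Rightarrow$ (i): since by the existence lemma there \emph{is} an allocation realizing $\rho^*$, and that allocation is locally maximin hence (by the previous step) lexicographically optimal, the sorted vector of $\rho^*$ is the unique lexicographically optimal sorted vector; any lexicographically optimal $(x,y)$ must have this same sorted density vector, and then a standard uncrossing argument (using submodularity of $\g$ and supermodularity of $\f$ to show that the set of agents achieving the top density must itself be tight, forcing it to be $S_1$, then inducting) shows $\rho_v = \rho^*_v$ for every individual $v$, not just as a multiset. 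I would also slip in here the footnote remark that the "max-min first" and "min-max first" formulations of lexicographic optimality coincide, which drops out of the same uniqueness argument.

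The main obstacle I anticipate is the existence lemma, specifically verifying that the proportional split on $S_1$ is feasible for \emph{all} subsets, not just using $\rho(T) < \rho_1$ for $T \subseteq S_1$ but also the "upper" constraints coming from the base polymatroid $\mcal{B}^{\le}_\g$ restricted to $S_1$ and the "lower" constraints from $\mcal{B}^{\ge}_\f$ — this requires the strict monotonicity of $\g$ (so $\g(\{v\}) > 0$ and densities are well-defined) and a careful two-sided polymatroid feasibility check, essentially a max-flow/Hall-type condition. The recursion step, showing that gluing feasible allocations across the tight set $S_1$ stays in $\Bfg$, is the second-most delicate point but should follow from the direct-sum decomposition of base polymatroids along a tight set together with the corresponding statement for contra-polymatroids.
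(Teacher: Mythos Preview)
Your approach is genuinely different from the paper's. The paper never constructs an allocation realizing $\rho^*$ directly; instead it passes through the convex program $\min_{(x,y)\in\Bfg}\D_\vartheta(x\|y)$ for a strictly convex $\vartheta$, represents an optimal $(x,y)$ via permutation distributions $(\vecp,\vecq)$, and uses an adjacent-swap exchange argument (Lemma~\ref{lemma:maximin_opt}) to show any optimizer is locally maximin. Existence of a locally maximin allocation thus comes for free from compactness of the feasible region. A second lemma (Lemma~\ref{lemma:maximin_density}) then shows every locally maximin solution induces $\rho^*$, and Fact~\ref{fact:equiv_fair} handles (ii)$\iff$(iii). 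Your combinatorial route avoids the convex program entirely, which is cleaner for this theorem in isolation; the paper's route is chosen because the convex program is reused throughout (Hockey--Stick duality, Frank--Wolfe, the universal-minimizer result).

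Two issues to flag. First, your (i)$\Rightarrow$(iii) step is circular as written: you invoke the existence lemma to get $x(S_{\le j})=\f(S_{\le j})$, but the existence lemma produces \emph{one} allocation with that property, whereas (i)$\Rightarrow$(iii) must hold for an \emph{arbitrary} $(x,y)$ satisfying (i). The fix is direct: from $x_v=\rho_i y_v$ on $S_i$, the base constraints $x(S_{\le j})\ge\f(S_{\le j})$ and $y(S_{\le j})\le\g(S_{\le j})$, and $\f(S_j\mid S_{<j})=\rho_j\,\g(S_j\mid S_{<j})$, an induction on $j$ forces tightness at every prefix. Second, your existence lemma is the real content, and ``$\rho(T)<\rho_1$ for $T\subsetneq S_1$'' is false in general (proper densest subsets can exist); you only get $\rho(T)\le\rho_1$, which is what you actually need. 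The feasibility condition you are reaching for is exactly Frank's discrete separation (sandwich) theorem: since $\f/\rho_1$ is supermodular, $\g$ is submodular, and $\f(T)/\rho_1\le\g(T)$ on $2^{S_1}$ with equality at $S_1$, there is a modular $y$ with $\f/\rho_1\le y\le\g$ on $2^{S_1}$ and $y(S_1)=\g(S_1)$. Name that theorem explicitly rather than ``Hall-type''; the gluing across the tight set $S_1$ then goes through as you indicate.
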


We shall prove a series of structural lemmas
from which we can draw Theorem~\ref{th:main_market}
as a conclusion.  As aforementioned,  it is not clear \emph{a priori}
whether there exists an allocation satisfying condition~(i) or~(iii).
However, a lexicographically optimal allocation must exist by definition,
because the feasible space $\mcal{B}^{\geq}_\f \times \mcal{B}^{\leq}_\g$ is compact.

\noindent \textbf{Revealing Optimal Contract Response with
Density Decomposition.}  We next show how the density decomposition
is connected to the dual-modular contracts problem.
Recall that given the contract parameter $\alpha \in [0, 1]$,
the associated utility
of an agent is $\util_a(\alpha, S) = \alpha \cdot \f(S) - \g(S)
= \alpha \cdot (\f(S) - \frac{1}{\alpha} \cdot \g(S))$.
Hence, intuitively, elements with densities
above the threshold $\frac{1}{\alpha}$ would cause positive utility,
while those with densities below the threshold would cause negative utility.
This intuition can indeed be formalized in the following result.

\begin{theorem}[Density Decomposition Reveals Optimal Contract Response]
\label{th:main_contracts}
Suppose in the density decomposition $V = \cup_{i=1}^k S_i$,
each component $S_i$ is associated with the density $\rho_i$,
where $\rho_1 > \rho_2 > \cdots > \rho_k$.

Given the contract parameter~$\alpha \in [0,1]$,
suppose $i$ is the largest index such that $\rho_i \geq \frac{1}{\alpha}$.
Then, the subset $S_{\leq i} := \cup_{j=1}^i S_j \in \arg \max_{S \subseteq V} \util_a(\alpha, S)$
is an optimal response.

Moreover, if $\rho_i > \frac{1}{\alpha} > \rho_{i+1}$,
then the only optimal response is $S_{\leq i}$.
This implies that the collection of critical values
with the smallest size must be $\{\frac{1}{\rho_i}: 1 \leq i \leq k\} \cap [0,1]$.
\end{theorem}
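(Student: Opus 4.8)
The plan is to turn the agent's optimization $\max_{S\subseteq V}\util_a(\alpha,S)=\max_S\big(\alpha\f(S)-\g(S)\big)$ into a purely coordinatewise selection problem by evaluating everything against one well-chosen allocation. Concretely, I would first invoke Theorem~\ref{th:main_market}: a lexicographically optimal allocation exists by compactness, so there is an allocation $(x,y)\in\mcal{B}^{\geq}_\f\times\mcal{B}^{\leq}_\g$ that is locally maximin and has induced densities $\rho_v=\rho^*_v$ for all $v$. Two features of this allocation will drive the argument: (a) strict monotonicity of $\g$ forces $y_v=y(V)-y(V\setminus\{v\})\ge\g(V)-\g(V\setminus\{v\})>0$ for every $v$, so each $\rho_v$ is well defined and the sign of $\alpha\rho_v-1$ is meaningful; and (b) since $\rho^*$ takes the distinct values $\rho_1>\cdots>\rho_k$ with $S_j=\{v:\rho^*_v=\rho_j\}$, we have $S_{\le j}=S^{(\rho_j)}$, so local maximinity yields $x(S_{\le j})=\f(S_{\le j})$ and $y(S_{\le j})=\g(S_{\le j})$ for every $j$.

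Next I would establish optimality of $S_{\le i}$. For an arbitrary $S$, the one-sided (contra)polymatroid bounds $\f(S)\le x(S)$ and $\g(S)\ge y(S)$ give $\util_a(\alpha,S)\le\alpha x(S)-y(S)=\sum_{v\in S}y_v(\alpha\rho_v-1)$. Because $y_v>0$, this last sum is maximized over all $S$ by exactly the sets sandwiched between $\{v:\rho_v>\tfrac1\alpha\}$ and $\{v:\rho_v\ge\tfrac1\alpha\}$; with $i$ the largest index such that $\rho_i\ge\tfrac1\alpha$, the set $S_{\le i}=S^{(\rho_i)}$ is one of them. Meanwhile feature (b) makes the upper bound exact at $S_{\le i}$: $\util_a(\alpha,S_{\le i})=\alpha\f(S_{\le i})-\g(S_{\le i})=\alpha x(S_{\le i})-y(S_{\le i})=\sum_{v\in S_{\le i}}y_v(\alpha\rho_v-1)$. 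Chaining these inequalities gives $\util_a(\alpha,S_{\le i})\ge\util_a(\alpha,S)$ for all $S$, which is the first claim.

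For uniqueness under $\rho_i>\tfrac1\alpha>\rho_{i+1}$, observe that any optimal $S$ must turn all the above inequalities into equalities, so $\sum_{v\in S}y_v(\alpha\rho_v-1)$ attains its maximum; under this strict separation no agent has $\rho_v=\tfrac1\alpha$, so (using $y_v>0$) the maximizer is unique and equals $\{v:\rho_v>\tfrac1\alpha\}=S_{\le i}$, forcing $S=S_{\le i}$. The critical-values statement then follows in two steps. For the upper bound, the explicit rule $\alpha\mapsto S_{\le i(\alpha)}$ with $i(\alpha):=\max\{i:\rho_i\ge\tfrac1\alpha\}$ (and $S_{\le 0}:=\emptyset$) is a valid choice of best responses by the first part, respects the principal's tie-break since $S_{\le i}$ has the largest $\f$-value among the tied optimal sets (all of which lie below $S_{\le i}$ in the subset lattice, and $\f$ is monotone), and changes only at those $\tfrac1{\rho_i}$ lying in $[0,1]$, i.e.\ those with $\rho_i\ge1$. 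For the matching lower bound, take any valid selection and any $j$ with $\rho_j>1$: the uniqueness part pins $S_\alpha=S_{\le j-1}$ on the nonempty open interval below $\tfrac1{\rho_j}$ (with the convention $\rho_0:=+\infty$) and $S_\alpha=S_{\le j}$ on the nonempty open interval above it; since these differ ($S_j\ne\emptyset$) and the two intervals abut at $\tfrac1{\rho_j}$, $S_\alpha$ must change at $\tfrac1{\rho_j}$, so $\tfrac1{\rho_j}\in\Lambda$. Hence the minimum-size critical set is $\{\tfrac1{\rho_i}:1\le i\le k\}\cap[0,1]$, with the single boundary point $\alpha=1$ (where $\util_p(1,\cdot)\equiv0$) requiring only a trivial separate remark.

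The main obstacle I anticipate is making the two uses of the reference allocation cooperate: the upper bound needs the \emph{loose} directions of the polymatroid constraints, whereas the value of $\util_a$ at $S_{\le i}$ needs them \emph{tight}, and it is precisely local maximinity at the threshold set $S^{(\rho_i)}$ that supplies the tightness — so the argument genuinely rests on Theorem~\ref{th:main_market} rather than on bare polymatroid theory. A secondary subtlety is the lower bound on the number of critical values, where one must reason about \emph{every} admissible best-response selection and about the tie-breaking convention, not just the canonical one, in order to conclude that $\Lambda$ cannot be made smaller.
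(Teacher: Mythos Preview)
Your optimality argument is the paper's own (Lemma~\ref{lemma:main_dual_contracts}, the Hockey-Stick duality), just phrased directly via the polymatroid bounds $x(S)\ge\f(S)$ and $y(S)\le\g(S)$ rather than through a permutation transformation; the two are the same argument. Where you genuinely diverge is uniqueness. The paper drops the reference allocation entirely and argues purely combinatorially (its ``Unique Best Response'' lemma in Section~\ref{sec:contracts}) that if $S\neq S_{\le i}$ one can strictly improve $\f(S)-\gamma\g(S)$ by either adding the missing piece $S_j\setminus S$ for the smallest $j\le i$ with $S_j\not\subseteq S$, or removing $S\cap S_j$ for the largest $j>i$ with $S\cap S_j\neq\emptyset$, using only the decomposition inequalities $\f(R\mid S_{<j})\le\rho_j\,\g(R\mid S_{<j})$ together with dual-modularity. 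You instead keep the locally maximin $(x,y)$ and observe that any optimal $S$ must maximize the linear functional $T\mapsto\alpha x(T)-y(T)$, which has a unique maximizer once no $\rho_v$ equals the threshold. Your route is more uniform (one tool throughout) and slightly shorter; the paper's uniqueness proof buys independence from Theorem~\ref{th:main_market} --- it needs only Definition~\ref{defn:density_decomp} and strict monotonicity of $\g$, not the existence of a locally maximin allocation.
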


\subsection{Analyzing Dual-Modular Density Decomposition, Markets
and Contracts with Convex Program}

To establish Theorem~\ref{th:main_market},
we will analyze the density decomposition with convex programs.
This approach has first been proposed by
Fujishige~\cite{fujishige1980lexicographically} for the case
of linear $\g$, and have been analyzed 
extensively~\cite{DBLP:conf/ipco/Nagano07,Panditi2016,DBLP:conf/nips/HarbQC22}.
For the special case
of linear $\g$, we consider a fixed vector $y \in \R^V$ with  $y_v = \g(\{v\})$.

The objective functions can be described as follows.
Given a convex function~$\vartheta: \R_{\geq 0} \rightarrow \R$,
the $\vartheta$-divergence between two
vectors $x, y \in \R^V_{\geq 0}$ is denoted as:

$$\D_\vartheta(x \| y) := \sum_{u \in V} y_u \cdot \vartheta(\frac{x_u}{y_u}).$$

\noindent \textbf{Convex Program.}  One can pick
a convex function~$\vartheta$ and form a convex program as follows.
The domain of the variable $x$ is $\mcal{B}^{\geq}_{\f}$,
and the objective function to be minimized is $\Phi_\vartheta(x) := \D_\vartheta(x \| y)$,
recalling that $y$ is fixed for linear $\g$.
Many works in the literature have analyzed convex programs that
fall under this framework with different choices of
convex $\vartheta$.

\begin{compactitem}

\item Fujishige~\cite{fujishige1980lexicographically}
and subsequent works on graph density decomposition~\cite{DBLP:conf/www/DanischCS17,DBLP:conf/esa/HarbQC23} picked $\vartheta(t) := t^2$.

\item The Eisenberg-Gale convex program and subsequent works on SUA markets~\cite{DBLP:journals/geb/JainV10}
picked $\vartheta(t) := - \log t$.

\item The Shymrev convex program for linear exchange markets~\cite{shmyrev2009} picked $\vartheta(t) := t \log t$.
\end{compactitem}

The following surprising result shows that any strictly convex\footnote{
A function $\vartheta: \R_{\geq 0} \to \R$ is
strictly convex if
 for $0 \leq a < b$
and $0 < \lambda < 1$,
$\vartheta(\lambda a + (1-\lambda) b) < \lambda \vartheta(a) 
+ (1 - \lambda) \vartheta(b)$.
}
 $\vartheta$ may be selected
and the optimal solution is related to the density decomposition.

\begin{fact}[Convex Program and Density Decomposition~\cite{DBLP:conf/ipco/Nagano07,Panditi2016}]
\label{fact:convex_program}
Given an instance $(V; \f, \g)$ with supermodular $\f$ and linear $\g$ (corresponding
to a fixed $y \in \R^V_{>0}$), for any strictly convex $\vartheta$,
the objective function $\Phi_\vartheta(x) := \D_\vartheta(x \| y)$
has exactly the same minimizer $x \in \mcal{B}^{\geq}_{\f}$,
which is the unique lexicographically optimal allocation (with respect to densities).

Moreover, for each $v \in V$, $\frac{x_v}{y_v} = \rho^*_v$,
where $\rho^* \in \R^V$ is the density vector from the density decomposition.
\end{fact}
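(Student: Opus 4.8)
The plan is to produce a \emph{single} allocation $x^\star\in\mcal{B}^{\geq}_\f$, read off directly from the density decomposition, and to prove that it is the unique minimizer of $\Phi_\vartheta$ for \emph{every} strictly convex $\vartheta$; this settles all the claims at once. First I would check that each program is well posed: $\mcal{B}^{\geq}_\f$ is a nonempty compact convex polytope (the base of the contrapolymatroid of the supermodular function $\f$), and $\Phi_\vartheta(x)=\sum_{u\in V} y_u\,\vartheta(x_u/y_u)$ is continuous and \emph{strictly} convex on $\R^V$ --- each summand $x_u\mapsto y_u\vartheta(x_u/y_u)$ is strictly convex because $y_u>0$ and $\vartheta$ is strictly convex, and a separable sum of coordinatewise strictly convex functions is strictly convex. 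Hence $\Phi_\vartheta$ has a unique minimizer over $\mcal{B}^{\geq}_\f$, so ``the minimizer'' is well defined; what remains is to show it does not move with $\vartheta$.

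Next I would introduce the candidate $x^\star_v:=\rho^*_v\,y_v$ for all $v\in V$; for $v$ in the $i$-th decomposition component $S_i$ this is $x^\star_v=\rho_i\,y_v$. Since $\g$ is linear, $\g(S\mid S_{<i})=\sum_{v\in S}y_v$ for $S\subseteq S_i$, so $\sum_{v\in S_i}x^\star_v=\rho_i\,\g(S_i\mid S_{<i})=\f(S_i\mid S_{<i})$ by the defining relation $\rho_i=\rho(S_i\mid S_{<i})$; telescoping over the layers gives $\sum_{v\in S_{\leq j}}x^\star_v=\f(S_{\leq j})$ for every prefix $S_{\leq j}=\cup_{\ell\le j}S_\ell$, in particular $\sum_{v\in V}x^\star_v=\f(V)$. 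The remaining inequalities $\sum_{v\in S}x^\star_v\ge \f(S)$ follow from supermodularity of $\f$ together with the uniqueness/maximality of the densest subset at each stage of the decomposition (for $S\subseteq S_1$ this is immediate: $\sum_{v\in S}x^\star_v=\rho_1\,\g(S)\ge\rho(S)\,\g(S)=\f(S)$ because $S_1$ is a densest subset; the general case is handled layer by layer), so $x^\star\in\mcal{B}^{\geq}_\f$; for linear $\g$ this membership is classical and can also be quoted from the structural lemmas. The single consequence I want is: \emph{every prefix $S_{\leq j}$ is tight at $x^\star$}, i.e.\ $\sum_{v\in S_{\leq j}}x^\star_v=\f(S_{\leq j})$.

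Then comes the heart of the argument: checking first-order optimality \emph{uniformly in $\vartheta$}. Since $\Phi_\vartheta$ is convex and $\mcal{B}^{\geq}_\f$ is a polytope, $x^\star$ is its minimizer iff the one-sided directional derivative of $\Phi_\vartheta$ at $x^\star$ is nonnegative along every feasible direction; by the standard polymatroid exchange theory (and separability of the objective) it suffices to verify this along the generators of the feasible cone, which are the directions $e_u-e_v$ for \emph{exchangeable} pairs $(u,v)$ at $x^\star$ --- where $(u,v)$ is exchangeable iff $x^\star+\epsilon(e_u-e_v)\in\mcal{B}^{\geq}_\f$ for small $\epsilon>0$, equivalently no set $S$ with $v\in S\not\ni u$ is tight at $x^\star$. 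Because $\Phi_\vartheta$ is separable, its derivative along $e_u-e_v$ equals $\vartheta'_+(\rho^*_u)-\vartheta'_-(\rho^*_v)$, using the one-sided derivatives of $\vartheta$, which exist and are nondecreasing (indeed strictly increasing by strict convexity). Now invoke the tightness from the previous step: if $\rho^*_u<\rho^*_v$, with $u\in S_i$, $v\in S_j$ and therefore $i>j$ (since $\rho_1>\cdots>\rho_k$), then $S_{\leq j}$ is a tight set containing $v$ and excluding $u$, so $(u,v)$ is \emph{not} exchangeable. Contrapositively, every exchangeable pair satisfies $\rho^*_u\ge\rho^*_v$, whence $\vartheta'_+(\rho^*_u)\ge\vartheta'_-(\rho^*_v)$ and the directional derivative is $\ge 0$. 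Hence $x^\star$ minimizes $\Phi_\vartheta$, and by the first step it is \emph{the} minimizer --- with the \emph{same} $x^\star$ for every strictly convex $\vartheta$, and $x_v/y_v=\rho^*_v$ at the minimizer, giving the ``moreover'' part. Lexicographic optimality of $x^\star$ and its uniqueness among lexicographically optimal allocations then follow from the characterization of lex-optimality by the local exchange condition established among the structural lemmas (equivalently: were some $x'\in\mcal{B}^{\geq}_\f$ to have a lexicographically smaller sorted density profile than $x^\star$, then $\Phi_\vartheta(x')<\Phi_\vartheta(x^\star)$ for a sufficiently steep strictly convex $\vartheta$, e.g.\ a high power $\vartheta(t)=t^p$, contradicting optimality of $x^\star$; uniqueness is then immediate from uniqueness of the minimizer).

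The main obstacle is the polymatroidal bookkeeping: making precise that the feasible-direction cone at an arbitrary (possibly non-vertex) point of $\mcal{B}^{\geq}_\f$ is generated by exchange directions over exchangeable pairs, and that for separable convex objectives it is enough to test optimality along these generators --- together with the verification that the ``uniform-density'' allocation $x^\star$ honours all contrapolymatroid inequalities with the prefixes $S_{\leq j}$ exactly tight. For linear $\g$ all of this is classical polymatroid theory, but it is precisely where supermodularity of $\f$ and the maximal-densest-subset structure enter, so in the dual-modular generalization this is where the real work lies. Handling non-differentiable strictly convex $\vartheta$ via one-sided derivatives, and the boundary case $\rho^*_v=0$, is a routine but necessary technicality.
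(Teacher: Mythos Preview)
Your proposal is correct and takes a genuinely different route from the paper. The paper does not prove Fact~\ref{fact:convex_program} directly (it is cited), but it proves the dual-modular generalization in Section~\ref{sec:technical} by a \emph{swap/exchange argument on permutation distributions}: starting from an arbitrary minimizer represented as $(\vecp,\vecq)\in\Delta(\mcal{S}_V)^2$, Lemma~\ref{lemma:maximin_opt} repeatedly swaps adjacent elements $v\prec_\sigma u$ in any permutation with positive weight whenever $\rho_u>\rho_v$; strict convexity forces each such swap to be degenerate ($F=G=0$), so the induced densities are preserved and one reaches a locally maximin solution, which Lemma~\ref{lemma:maximin_density} then identifies with the density decomposition. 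Your approach is instead \emph{constructive plus KKT}: you write down $x^\star_v=\rho^*_v y_v$, verify feasibility with the prefixes $S_{\le j}$ tight, and check first-order optimality along the polymatroid exchange directions $e_u-e_v$, using that exchangeability forces $\rho^*_u\ge\rho^*_v$ so that $\vartheta'_+(\rho^*_u)-\vartheta'_-(\rho^*_v)\ge 0$ regardless of which strictly convex $\vartheta$ is chosen. Your argument is cleaner and more elementary for the linear-$\g$ case, where $y$ is fixed and this is classical separable convex minimization over a single base polytope; the paper's permutation-distribution machinery is engineered precisely so that it extends to the dual-modular setting, where both $x$ and $y$ vary and a direct KKT analysis on $\mcal{B}^{\geq}_\f\times\mcal{B}^{\leq}_\g$ would be awkward. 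One small point: your final paragraph on lexicographic optimality via ``sufficiently steep $\vartheta$'' is a bit loose; the paper closes this via Fact~\ref{fact:equiv_fair} (locally maximin $\Leftrightarrow$ lex-optimal once existence is known), which you could invoke directly since your $x^\star$ is visibly locally maximin by the prefix-tightness you already established.
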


\noindent \emph{Dual-Modular Instance.}
Given a dual-modular instance $(V; \f, \g)$ and
a convex function~$\vartheta$,  it is natural to consider
the objective function $\Phi_\vartheta(x, y) := \D_\vartheta(x \| y)$
with the domain $(x, y) \in \mcal{B}^{\geq}_\f \times \mcal{B}^{\leq}_\g$.
Although the same objective function is used, for general submodular $\g$,
the variable $y$ is no longer fixed, which would complicate the analysis.
Moreover, in the dual-modular setting, a minimizer of $\Phi_\vartheta$ may not be unique, but
the following result together with Theorem~\ref{th:main_market}
shows that any minimizer would induce the density vector~$\rho^*$
from the density decomposition.  Furthermore, it also implies
the existence of a locally maximin allocation.

\begin{theorem}[Equivalence Between Convex Program and Local Maximin Condition]
\label{th:main_maximin}
Given a dual-modular instance $(V; \f, \g)$ and
a strictly convex function~$\vartheta$,
an allocation $(x, y) \in \mcal{B}^{\geq}_\f \times \mcal{B}^{\leq}_\g$
is a minimizer of $\Phi_\vartheta(x, y) := \D_\vartheta(x \| y)$
\emph{iff} it is locally maximin.

If the convex $\vartheta$ does not 
satisfy strict convexity,
we still have the direction that any local maximin allocation
is a minimizer.
\end{theorem}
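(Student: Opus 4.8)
The plan is to prove the two implications separately, observing that only the ``minimizer $\Rightarrow$ locally maximin'' direction uses strict convexity. First I would record two standing facts. (a) Since $\g$ is strictly monotone, every feasible $y$ has $y_v = y(V) - y(V\setminus\{v\}) \ge \g(V) - \g(V\setminus\{v\}) > 0$, so $\Phi_\vartheta$ is finite on $\mcal{B}^{\geq}_\f \times \mcal{B}^{\leq}_\g$ and, being a sum of perspectives of the convex $\vartheta$, jointly convex there. (b) For supermodular $\f$ (resp.\ submodular $\g$) the collection of tight sets $\{S : x(S) = \f(S)\}$ (resp.\ $\{S : y(S) = \g(S)\}$) is closed under union and intersection, so every $u \in V$ has a unique minimal tight set $S_u^x$ (resp.\ $S_u^y$) containing it ($V$ itself is always tight). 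Throughout, for a given $(x,y)$ I write $\rho_u = x_u/y_u$, let $\rho^{(1)} > \cdots > \rho^{(m)}$ be the distinct densities with level sets $U_j$ and prefixes $W_j = U_1 \cup \cdots \cup U_j$; then $\{S^{(\rho)} : \rho > 0\}$ consists of $\emptyset$ and a subcollection of $\{W_1,\dots,W_m\}$, and since $\emptyset$ and $W_m = V$ are automatically tight, local maximinity amounts to the statement: $x(W_j) = \f(W_j)$ and $y(W_j) = \g(W_j)$ for every $j$.

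For ``locally maximin $\Rightarrow$ minimizer'', which I expect to go through for any convex $\vartheta$, I would use joint convexity of $\Phi_\vartheta$, so that it suffices to verify the first-order inequality $\langle g,\,(x'-x,\,y'-y)\rangle \ge 0$ for every feasible $(x',y')$ and some subgradient $g$ of $\Phi_\vartheta$ at $(x,y)$. I would take $g$ built from level-wise subgradients $s^{(j)} \in \partial\vartheta(\rho^{(j)})$: its $x$-block is constant $s^{(j)}$ on $U_j$ (non-increasing in $j$), and its $y$-block is $\tau^{(j)} := \vartheta(\rho^{(j)}) - \rho^{(j)} s^{(j)}$ on $U_j$, which the subgradient inequalities show is non-decreasing in $j$. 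Then Abel summation along the chain $W_1 \subset \cdots \subset W_m$, using $x'(V)=x(V)$, $y'(V)=y(V)$, $x'(W_j) \ge \f(W_j) = x(W_j)$ (as $x' \in \mcal{B}^{\geq}_\f$ and $(x,y)$ is locally maximin) and $y'(W_j) \le \g(W_j) = y(W_j)$, turns the $x$-part into $\sum_{j<m} (s^{(j)}-s^{(j+1)})\,(x'(W_j)-x(W_j)) \ge 0$ and the $y$-part into $\sum_{j<m} (\tau^{(j)}-\tau^{(j+1)})\,(y'(W_j)-y(W_j)) \ge 0$ (a product of two non-positive factors); adding them gives $\Phi_\vartheta(x',y') \ge \Phi_\vartheta(x,y)$.

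For ``minimizer $\Rightarrow$ locally maximin'' I would run a local-exchange argument that crucially uses strict convexity. Let $(x,y)$ be a minimizer. For coordinates $u,w$ with $\rho_u > \rho_w$ and $\rho_u > 0$, shifting a small amount of $x$-mass from $u$ to $w$ (keeping $y$ fixed) has directional derivative $\vartheta'_+(\rho_w) - \vartheta'_-(\rho_u) < 0$ by strict convexity; since $(x,y)$ is optimal this direction must be infeasible for all small $\delta$, which by fact (b) forces $w \in S_u^x$. Hence $S_u^x \subseteq \{v : \rho_v \ge \rho_u\} = W_{j(u)}$ (the case $\rho_u = 0$ is vacuous, and then $j(u)=m$, $W_m=V$ is tight anyway). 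Taking the union over $u \in W_j$, $\bigcup_{u\in W_j} S_u^x$ is tight, contains $W_j$, and is contained in $\bigcup_{u\in W_j} W_{j(u)} = W_j$, so $x(W_j) = \f(W_j)$. The mirror argument on the $y$-coordinates replaces $\vartheta'$ by the map $\rho \mapsto \vartheta(\rho) - \rho\vartheta'(\rho)$, which is \emph{strictly} decreasing on $(0,\infty)$ under strict convexity (derivative $-\rho\vartheta''(\rho) < 0$; in the nonsmooth case compare intercepts of supporting lines): shifting $y$-mass from a lower-density $w$ onto a higher-density $u$ again has negative directional derivative, hence is infeasible, forcing $S_u^y \subseteq W_{j(u)}$ and then $y(W_j) = \g(W_j)$ for every $j$. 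Thus $(x,y)$ is locally maximin. (Alternatively, once strict convexity of the perspective transform is used on the midpoint of two minimizers to show all minimizers share one density vector, one may instead conclude via the equivalence (i)$\Leftrightarrow$(iii) of Theorem~\ref{th:main_market}.)

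The step I expect to be the main obstacle is making the exchange argument airtight: I must show that a mass-shift between two coordinates is feasible for all sufficiently small $\delta$ \emph{precisely when} the destination lies in the minimal tight set of the source — this rests on the union/intersection closure of tight sets — and I must handle the boundary with care: coordinates where $x_u = 0$ (density $0$, where no lower-density partner exists, so the argument is vacuous and the only prefix $W_m = V$ is trivially tight) and points where $\vartheta$ is not differentiable (handled uniformly by one-sided derivatives and subgradients). The remaining ingredients — joint convexity of $\Phi_\vartheta$, the sign bookkeeping in the Abel summation, and the monotonicity of $\rho \mapsto \vartheta(\rho) - \rho\vartheta'(\rho)$ — are routine.
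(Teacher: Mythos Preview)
Your approach is correct and genuinely different from the paper's. (One slip to fix: where you write ``forces $w \in S_u^x$'' you mean ``forces $w \notin S_u^x$'' --- infeasibility of the shift produces a tight set $T \ni u$ with $w \notin T$, and minimality gives $S_u^x \subseteq T$, hence $w \notin S_u^x$; your next sentence confirms this is what you intended.)

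The paper instead lifts everything to the permutation-distribution representation $(\vecp,\vecq) \in \Delta(\mcal{S}_V)^2$. For ``minimizer $\Rightarrow$ locally maximin'' it shows (Lemma~\ref{lemma:maximin_opt}) that any optimal $(\vecp,\vecq)$ can be transformed, by repeatedly swapping adjacent out-of-order elements in permutations carrying positive mass, into a locally maximin solution; under strict convexity each swap is forced to leave $(\f^\vecp,\g^\vecq)$ unchanged, since otherwise the same four-point strict-convexity inequality you use would make the objective strictly drop. For the reverse direction the paper does not run a first-order/Abel argument at all: it proves (Lemma~\ref{lemma:maximin_density}) that every locally maximin solution induces the density-decomposition vector $\rho^*$ and hence a fixed objective value, which then matches the minimum already found via Lemma~\ref{lemma:maximin_opt}. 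Your route stays entirely inside $\mcal{B}^{\geq}_\f \times \mcal{B}^{\leq}_\g$ and leans on classical polymatroid structure (the lattice of tight sets and minimal tight sets) plus a clean subgradient/Abel-summation optimality check; this is more self-contained and avoids the auxiliary $n!$-dimensional space. What the paper's route buys is that the link to the density decomposition (condition~(i) of Theorem~\ref{th:main_market}) falls out as an integral byproduct, whereas you would have to argue it separately. Relatedly, your parenthetical alternative invoking Theorem~\ref{th:main_market} would be circular here, since in the paper Theorem~\ref{th:main_market} is derived from the present theorem.
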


\noindent \emph{Augmenting the Domain with Permutations.}
One strategy towards proving Theorem~\ref{th:main_maximin}
is to first show a weaker statement that there exists a locally maximin allocation that is a minimizer of the convex program. To prove this weaker statement, one can start with any minimizer $(x, y)$ of the convex program.
A general idea is to construct a sequence of ``small modifications''
to the solution such that each modification brings it closer
to being locally maximin but without increasing the objective value.
As opposed to the case of linear $\g$, the challenge here is that we need to modify both $x$ and $y$, and we will also need a notion of progress
towards the locally maximin condition.

In order to define what a small modification is,
we need the \emph{permutation} interpretation
of the bases $\mcal{B}^{\geq}_\f$ and $\mcal{B}^{\leq}_\g$.
Given a permutation $\sigma$ of~$V$,
we use $u \prec_\sigma v$ to mean $u$ appears earlier than $v$.
Given a set function $h: 2^V \to \R$,
we can imagine the agents in $V$ arrive one-by-one
according to the permutation and
the value received by $u$ is its marginal contribution
of $h$ given the agents arriving before it.
In other words, we denote $h^\sigma \in \R^V$ as the vector
defined by: 

$$h^\sigma(u) := h(\{u\} | \{w\in V: w\prec_{\sigma} u\}), \forall u \in V.$$

Because of supermodular $\f$ and submodular $\g$,
an agent would prefer being near the end of the permutation to
receive a higher reward and a lower cost.

\begin{fact}[~\cite{SFMCP}]
\label{fact:perm_polytope}
The base $\mcal{B}^{\geq}_\f$ is the convex
hull of $\f^\sigma$ over all permutations~$\sigma$ of $V$;
similarly, 
$\mcal{B}^{\leq}_\g$ is the convex hull
of $\g^\sigma$ over all permutations~$\sigma$.
\end{fact}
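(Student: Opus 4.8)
The plan is to prove the statement for the base polymatroid $\mcal{B}^{\leq}_\g$ first (this is Edmonds' classical greedy characterization) and then derive the base contrapolymatroid case $\mcal{B}^{\geq}_\f$ by complementation. Throughout I use the standard normalization $\g(\emptyset)=\f(\emptyset)=0$ that is implicit in the definitions of the two bases. For the easy containment $\mathrm{conv}\{\g^\sigma\}\subseteq\mcal{B}^{\leq}_\g$, I would fix a permutation $\sigma$, list $V$ in $\sigma$-order as $u_1\prec_\sigma\cdots\prec_\sigma u_n$, and observe that the marginals telescope: $\sum_{j\le k}\g^\sigma(u_j)=\g(\{u_1,\dots,u_k\})$, so the coordinates of $\g^\sigma$ sum to $\g(V)$. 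For arbitrary $S$, enumerate $S=\{u_{j_1},\dots,u_{j_m}\}$ with $j_1<\cdots<j_m$ and bound each marginal by submodularity, $\g^\sigma(u_{j_\ell})=\g(\{u_{j_\ell}\}\mid\{u_1,\dots,u_{j_\ell-1}\})\le\g(\{u_{j_\ell}\}\mid\{u_{j_1},\dots,u_{j_{\ell-1}}\})$, since a smaller conditioning set gives a larger marginal; summing over $\ell$ the right-hand side telescopes to $\g(S)$, so $\sum_{v\in S}\g^\sigma(v)\le\g(S)$. Hence every $\g^\sigma$ lies in the convex set $\mcal{B}^{\leq}_\g$, and so does its convex hull.

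For the reverse containment $\mcal{B}^{\leq}_\g\subseteq\mathrm{conv}\{\g^\sigma\}$, I would show the two compact convex sets have the same support function: if some $y\in\mcal{B}^{\leq}_\g$ lay outside $\mathrm{conv}\{\g^\sigma\}$ it could be strictly separated by some $c\in\R^V$, contradicting $\langle c,y\rangle\le\max_\sigma\langle c,\g^\sigma\rangle$. To establish this last inequality, fix $c$, relabel so that $c_{u_1}\ge c_{u_2}\ge\cdots\ge c_{u_n}$, let $\sigma$ be this order, and use summation by parts: for any $y\in\mcal{B}^{\leq}_\g$,
\[
\langle c,y\rangle=\sum_{k=1}^{n-1}(c_{u_k}-c_{u_{k+1}})\,y(\{u_1,\dots,u_k\})+c_{u_n}\,y(V).
\]
Each difference $c_{u_k}-c_{u_{k+1}}$ is nonnegative, $y(\{u_1,\dots,u_k\})\le\g(\{u_1,\dots,u_k\})$, and $y(V)=\g(V)$; substituting $y=\g^\sigma$ turns all these into equalities by the telescoping above, so $\langle c,y\rangle\le\langle c,\g^\sigma\rangle$. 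Thus the linear objective $c$ is maximized over $\mcal{B}^{\leq}_\g$ at $\g^\sigma$, the support functions agree, and $\mcal{B}^{\leq}_\g=\mathrm{conv}\{\g^\sigma\}$.

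To pass from the polymatroid to the contrapolymatroid, define $\g'(S):=\f(V)-\f(V\setminus S)$. Supermodularity and monotonicity of $\f$ make $\g'$ submodular and monotone with $\g'(\emptyset)=0$ and $\g'(V)=\f(V)$, so the previous part applies to $\g'$. Taking complements $T=V\setminus S$ in the defining (in)equalities and using $x(V)=\f(V)$ gives $\mcal{B}^{\geq}_\f=\mcal{B}^{\leq}_{\g'}$ as sets. Finally, for a permutation $\sigma$ with $\sigma$-order $u_1\prec_\sigma\cdots\prec_\sigma u_n$ and reverse $\bar\sigma$, a direct computation shows
\[
(\g')^{\bar\sigma}(u_i)=\g'(\{u_i,\dots,u_n\})-\g'(\{u_{i+1},\dots,u_n\})=\f(\{u_1,\dots,u_i\})-\f(\{u_1,\dots,u_{i-1}\})=\f^\sigma(u_i),
\]
so $\{(\g')^\tau:\tau\}=\{\f^\sigma:\sigma\}$ and therefore $\mcal{B}^{\geq}_\f=\mathrm{conv}\{(\g')^\tau\}=\mathrm{conv}\{\f^\sigma\}$.

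I expect the only genuinely delicate point to be the greedy-optimality step: one must get the summation-by-parts bookkeeping exactly right so that the sign of each coefficient $c_{u_k}-c_{u_{k+1}}$ matches the direction of the base inequality $y(\{u_1,\dots,u_k\})\le\g(\{u_1,\dots,u_k\})$, and must notice that the lone unsigned coefficient $c_{u_n}$ multiplies $y(V)$, which is pinned to $\g(V)$ by the equality constraint. The remaining ingredients — telescoping, the complementation identities, and the separation argument for compact convex sets — are routine.
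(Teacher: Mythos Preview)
Your proof is correct; it is the standard Edmonds greedy argument for the base polytope of a polymatroid, followed by the usual complementation trick $\g'(S)=\f(V)-\f(V\setminus S)$ to handle the contrapolymatroid. The paper does not actually prove this fact: it is stated as a known result cited from~\cite{SFMCP} and used as a black box, so there is no in-paper argument to compare against. Your write-up matches what one finds in the cited source, and the bookkeeping you flag as ``delicate'' (the summation-by-parts with the equality constraint absorbing the possibly negative final coefficient $c_{u_n}$) is handled correctly.
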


In view of Fact~\ref{fact:perm_polytope},
each allocation $(x, y) \in \mcal{B}^{\geq}_\f \times \mcal{B}^{\leq}_\g$
can be represented by a pair $(\vecp, \vecq)$
of permutation distributions,
and we can consider a convex program with the variables 
$(\vecp, \vecq)$ instead.

Note that it is easy to make a small modification to a permutation~$\sigma$.
Suppose there are two adjacent agents $u$ and $v$ in the permutation
such that the $v$ is strictly more satisfied than $u$ in terms of density, but
comes after $u$ (which is a more preferred spot) in the permutation.
To achieve better fairness,
one would like to swap their positions to get a different permutation~$\sigma'$.
When applied to a permutation distribution $\vecp$,
this means transferring all the weight from coordinate~$\sigma$ to
coordinate~$\sigma'$. 

We shall give a formal definition of local maximin for $(\vecp, \vecq)$.
Indeed, Lemma~\ref{lemma:maximin_opt} uses a sequence
of such small modifications to move the solution closer to being locally maximin,
but without increasing the objective function in each step.
The notion of progress towards the local maximin condition resembles swapping adjacent elements in sorting algorithm analysis to reduce the number of inversions.
Even though the number of variables increases
from $2n$ to $2(n!)$, the purpose of the augmented domain is for
analysis only.

\noindent \textbf{Dual-Modular Contracts and Hockey-Stick Divergences.}
Writing $\gamma = \frac{1}{\alpha} \geq 1$ in the contracts problem,
the best response for an agent is equivalent
to the problem: 

$\max\{ \f(S) - \gamma \cdot \g(S): S \subseteq V\}$.

This turns out to a dual problem for the convex program
with $\vartheta_\gamma(t) := \max\{t - \gamma, 0\}$.
Indeed, $\mathsf{HS}_\gamma(P \| Q) := \D_{\vartheta_\gamma}(P \| Q)
= \sup_{S \subseteq \Omega} P(S) - \gamma \cdot Q(S)$
is the \emph{Hockey-Stick} divergence.
Theorem~\ref{th:main_contracts} is a direct consequence of the following result.

\begin{lemma}[Duality Between Contracts and Hockey-Stick]
\label{lemma:main_dual_contracts}
For $\gamma \geq 0$, any $S \subseteq V$ and $(x, y) \in \Bfg$, we have:
$\f(S) - \gamma \cdot \g(S) \leq \mathsf{HS}_\gamma(x \| y)$.

Moreover, equality is attained by the subset constructed in Theorem~\ref{th:main_contracts} and any locally maximin~$(x, y)$.
\end{lemma}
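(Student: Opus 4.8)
The plan is to prove the inequality by a direct "pointwise" estimate and then establish tightness by exhibiting the subset from Theorem~\ref{th:main_contracts} together with any locally maximin allocation. First I would establish the inequality $\f(S) - \gamma\cdot\g(S) \le \mathsf{HS}_\gamma(x\|y)$ for every $S\subseteq V$ and every $(x,y)\in\Bfg$. Since $x\in\mcal{B}^{\geq}_\f$ we have $\sum_{v\in S} x_v \ge \f(S)$, and since $y\in\mcal{B}^{\leq}_\g$ we have $\sum_{v\in S} y_v \le \g(S)$; hence for $\gamma\ge 0$,
\[
\f(S) - \gamma\cdot\g(S) \;\le\; \sum_{v\in S} x_v - \gamma\sum_{v\in S} y_v \;=\; \sum_{v\in S}(x_v - \gamma y_v) \;\le\; \sum_{v\in S}\max\{x_v - \gamma y_v,\,0\}.
\]
The right-hand side is at most $\sum_{v\in V} y_v\cdot\vartheta_\gamma(x_v/y_v) = \D_{\vartheta_\gamma}(x\|y) = \mathsf{HS}_\gamma(x\|y)$, because each term with $x_v-\gamma y_v\le 0$ contributes $0$ on the left but a nonnegative amount on the right, and each term with $x_v-\gamma y_v>0$ equals $y_v\cdot(x_v/y_v-\gamma)=y_v\cdot\vartheta_\gamma(x_v/y_v)$. (I should be careful about $y_v=0$: for such $v$, $x_v\ge 0$ and one checks $\vartheta_\gamma(x_v/y_v)$ is interpreted via the recession/perspective convention so that $y_v\vartheta_\gamma(x_v/y_v)=\max\{x_v,0\}=x_v$, which dominates $x_v-\gamma y_v=x_v$; this edge case is why strict monotonicity of $\g$ matters, as flagged in the paper.) This proves the first claim for all $\gamma\ge 0$, in particular for $\gamma=1/\alpha\ge 1$.

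Next I would establish tightness. Take a locally maximin allocation $(x,y)$ (its existence is guaranteed by Theorem~\ref{th:main_maximin} together with Fact~\ref{fact:convex_program}/Theorem~\ref{th:main_market}), so by Theorem~\ref{th:main_market}(i) the induced density of every $v\in S_j$ equals $\rho_j$, and by the locally maximin condition, for every threshold $\rho$ the set $S^{(\rho)}$ of agents with density $\ge\rho$ satisfies $\sum_{v\in S^{(\rho)}} x_v = \f(S^{(\rho)})$ and $\sum_{v\in S^{(\rho)}} y_v = \g(S^{(\rho)})$. Let $i$ be the largest index with $\rho_i\ge\gamma$ and $T := S_{\leq i} = \cup_{j=1}^i S_j$; note $T = S^{(\rho_i)}$ since the densities are exactly the $\rho_j$'s and $\rho_1>\cdots>\rho_k$. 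Then all three inequalities in the displayed chain become equalities: the first because $\sum_{v\in T} x_v=\f(T)$ and $\sum_{v\in T} y_v=\g(T)$ by local maximinality; the second because for $v\in T$ we have $x_v/y_v=\rho_j\ge\rho_i\ge\gamma$, so $x_v-\gamma y_v = y_v(\rho_j-\gamma)\ge 0$; and for the identity with the full sum over $V$, I need that agents outside $T$ contribute $0$ to $\mathsf{HS}_\gamma(x\|y)$, i.e. $x_v-\gamma y_v\le 0$ for $v\notin T$. This holds when $\gamma>\rho_{i+1}$ (the "strict gap" regime): then every $v\notin T$ has density $\rho_j\le\rho_{i+1}<\gamma$, so $x_v-\gamma y_v = y_v(\rho_j-\gamma)\le 0$. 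Hence $\mathsf{HS}_\gamma(x\|y) = \sum_{v\in T}(x_v-\gamma y_v) = \f(T)-\gamma\g(T)$, giving equality; combined with the inequality direction, $T=S_{\leq i}$ is a maximizer, which is exactly the statement of Theorem~\ref{th:main_contracts}, and the "only optimal response" claim follows because any other $S$ with $\f(S)-\gamma\g(S)=\mathsf{HS}_\gamma(x\|y)$ must force all three inequalities to be tight, which pins down its density-based structure to be $S^{(\gamma)}=T$. For the boundary values $\gamma=\rho_i$ exactly, $S_{\leq i}$ and $S_{\leq i-1}$ are both optimal (ties), consistent with the claim that the minimal critical-value set is $\{1/\rho_i\}\cap[0,1]$.

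The main obstacle I anticipate is not the inequality — that is a short computation — but rather the careful handling of the $y_v=0$ degeneracy in the divergence (the perspective-function convention) and, more importantly, making the tightness argument fully rigorous at the boundary thresholds $\gamma=\rho_i$, where the best response is non-unique and one must argue that no strictly smaller critical-value set is possible. For the latter I would invoke the uniqueness of maximal densest subsets in dual-modular instances (stated in the excerpt) to argue that at each $\gamma$ strictly between consecutive $\rho$'s the optimal response $S^{(\gamma)}=S_{\leq i}$ is unique, so each $S_{\leq i}$ is genuinely realized on an open interval of $\gamma$-values and the changes cannot be avoided; hence $|\Lambda|$ is exactly $|\{1/\rho_i:1\le i\le k\}\cap[0,1]|$. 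I would also double-check monotonicity of $\g$ is used precisely where the perspective convention forces $y_v=0\Rightarrow x_v=0$ is needed (if $\g$ is only weakly monotone, an agent with $\g(\{v\}\mid\cdot)=0$ could receive positive reward "for free," breaking the clean identification), which is the reason Section~\ref{sec:strict_example} insists on strict monotonicity.
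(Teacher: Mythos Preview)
Your proof is correct and, for the inequality direction, more direct than the paper's. You use the polytope constraints $x(S)\ge\f(S)$ and $y(S)\le\g(S)$ immediately to get $\f(S)-\gamma\g(S)\le\sum_{v\in S}(x_v-\gamma y_v)\le\sum_{v\in V}\max\{x_v-\gamma y_v,0\}=\mathsf{HS}_\gamma(x\|y)$. The paper instead passes through the permutation representation: it writes $(x,y)=(\f^{\vecp},\g^{\vecq})$, transforms each $\sigma$ with $p_\sigma>0$ or $q_\sigma>0$ by stably moving $S$ to the front to obtain $\overline\sigma$, and then uses super/submodularity to compare $\f^\sigma$ with $\f^{\overline\sigma}$ on $S$. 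Your route avoids this detour entirely; the permutation machinery buys nothing extra here, since the base-polytope inequalities already encode exactly what is needed. For the equality part both arguments coincide in substance: the paper invokes Lemma~\ref{lemma:maximin_density} to evaluate $\Phi_{\vartheta_\gamma}$ at a locally maximin solution as $\sum_{j\le i}\g(S_j|S_{<j})(\rho_j-\gamma)=\f(S_{\le i})-\gamma\g(S_{\le i})$, while you trace tightness through your three-step chain using the level-set identities $x(S_{\le i})=\f(S_{\le i})$, $y(S_{\le i})=\g(S_{\le i})$ and the density threshold.

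One remark: your anticipated ``boundary obstacle'' at $\gamma=\rho_i$ is not actually an obstacle. By the definition of $i$ as the largest index with $\rho_i\ge\gamma$, one automatically has $\rho_{i+1}<\gamma$ (when $i<k$), so every $v\notin S_{\le i}$ has $x_v-\gamma y_v<0$ and contributes zero to $\mathsf{HS}_\gamma$; there is no separate case to handle. Also, your discussion of uniqueness of the best response and the minimality of the critical-value set goes beyond what Lemma~\ref{lemma:main_dual_contracts} asserts; in the paper that is treated by a separate direct argument (showing that for $\rho_i>\gamma>\rho_{i+1}$ any $S\ne S_{\le i}$ can be strictly improved), not as a by-product of the duality lemma.
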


\noindent \textbf{Locally Maximin Allocation as a Universal Minimizer.}
Theorem~\ref{th:main_market} shows that any locally maximin $(x, y)$
is a minimizer of $\D_\vartheta(x \| y)$ for all convex~$\vartheta$.
As a bonus, the argument
in~\cite[Theorem 5.12]{DBLP:conf/innovations/ChanX25}
shows that as long as $(x, y)$ is a minimizer of
$\mathsf{HS}_\gamma(x \| y)$ for all $\gamma \geq 0$,
it will be a universal minimizer for a much wider class
of divergences satisfying the \emph{data processing inequality}
(which is formalized in Definition~\ref{defn:dpi}).

\begin{theorem}[Universal Minimizer for Data-Processing Divergences]
\label{th:main_universal}
Any locally maximin allocation $(x, y) \in \Bfg$
is a minimizer of $\D(x \| y)$
for all divergences $\D$ satisfying the data processing inequality.
\end{theorem}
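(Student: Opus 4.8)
The plan is to deduce Theorem~\ref{th:main_universal} from two ingredients already in hand: first, Lemma~\ref{lemma:main_dual_contracts}, which says that a locally maximin allocation $(x,y)$ attains $\mathsf{HS}_\gamma(x\|y) = \max_{S\subseteq V}\{\f(S) - \gamma\cdot\g(S)\}$ for every $\gamma\ge 0$; and second, the abstract fact (to be imported as Definition~\ref{defn:dpi} together with the cited argument from~\cite[Theorem 5.12]{DBLP:conf/innovations/ChanX25}) that hockey-stick divergences are ``extremal'' among all data-processing divergences. Concretely, the idea is that for any two candidate allocations $(x,y)$ and $(x',y')$ in $\Bfg$ with the \emph{same} margins $\f(V) = \sum_v x_v = \sum_v x'_v$ and $\g(V) = \sum_v y_v = \sum_v y'_v$, if $\mathsf{HS}_\gamma(x\|y)\le \mathsf{HS}_\gamma(x'\|y')$ for all $\gamma\ge 0$, then $\D(x\|y)\le \D(x'\|y')$ for every data-processing divergence $\D$. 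Granting this, the theorem follows: a locally maximin $(x,y)$ minimizes $\mathsf{HS}_\gamma$ over $\Bfg$ for each $\gamma$ (the $\le$ direction of Lemma~\ref{lemma:main_dual_contracts} gives the upper bound, the ``moreover'' gives that it is attained, and since every $(x',y')\in\Bfg$ also satisfies $\f(S)-\gamma\g(S)\le\mathsf{HS}_\gamma(x'\|y')$, we get $\mathsf{HS}_\gamma(x\|y)=\max_S\{\f(S)-\gamma\g(S)\}\le\mathsf{HS}_\gamma(x'\|y')$), hence it is dominated by every other allocation in every hockey-stick divergence simultaneously, hence — by the extremality principle — in every data-processing divergence.

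The main work, then, is to set up the extremality principle cleanly. The plan is to recall that every divergence $\D$ satisfying the data processing inequality admits an \emph{integral representation} in terms of hockey-stick divergences: there is a measure $\mu$ on $[0,\infty)$ (depending only on $\D$) such that $\D(P\|Q) = \int_0^\infty \mathsf{HS}_\gamma(P\|Q)\, d\mu(\gamma)$ up to additive terms linear in the total masses $P(\Omega)$ and $Q(\Omega)$. This is the classical statement that every $f$-divergence is a nonnegative combination of hockey-stick (``elementary'') divergences; the data processing inequality is exactly what forces $\mu\ge 0$. Once this representation is in place, monotonicity is immediate: $\D(x\|y) = \int \mathsf{HS}_\gamma(x\|y)\,d\mu(\gamma) + (\text{linear in }\f(V),\g(V)) \le \int\mathsf{HS}_\gamma(x'\|y')\,d\mu(\gamma) + (\text{same linear terms}) = \D(x'\|y')$, because the integrand is pointwise dominated and $\mu$ is nonnegative, while the linear correction terms are identical since all allocations in $\Bfg$ share the margins $\f(V)$ and $\g(V)$.

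A couple of technical points will need care. One is that the objects here are finite nonnegative vectors indexed by $V$, not probability distributions, so I should state the integral representation in the appropriately homogenized form — $\mathsf{HS}_\gamma(x\|y)=\max_{S}\{x(S)-\gamma\,y(S)\}$ makes sense verbatim for subprobability-like vectors, and the representation $\D(x\|y)=\int\mathsf{HS}_\gamma(x\|y)\,d\mu(\gamma)+a\cdot x(V)+b\cdot y(V)$ holds with constants $a,b$ absorbing the behavior of the generating convex function at $0$ and at $\infty$. I should double check (or cite) that the data-processing property of $\D$ over the relevant class of channels yields $\mu\ge0$; this is the one place where an external lemma is genuinely being invoked rather than re-derived, so I will point to Definition~\ref{defn:dpi} and the cited argument. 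The second point is purely bookkeeping: confirming that all members of $\Bfg$ do share common margins $x(V)=\f(V)$ and $y(V)=\g(V)$, which is built into the definitions of $\mcal{B}^{\geq}_\f$ and $\mcal{B}^{\leq}_\g$ (the coordinates of $x$ sum to $\f(V)$, those of $y$ to $\g(V)$), so the linear correction terms really do cancel.

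The \textbf{main obstacle} I anticipate is making the integral/mixture representation of data-processing divergences precise in the non-probability (finite measure) setting and pinning down exactly which class of ``channels'' the data processing inequality must hold over for the representation to have nonnegative mixing measure — equivalently, importing Definition~\ref{defn:dpi} with the right strength. Everything downstream of that (pointwise domination, nonnegativity of $\mu$, cancellation of linear terms, invocation of Lemma~\ref{lemma:main_dual_contracts}) is routine. If the paper's Definition~\ref{defn:dpi} already packages the divergence as an $f$-divergence-like object with the requisite closure properties, the proof collapses to the two-line computation displayed above; if not, the representation lemma itself becomes the crux and should be stated and proved (or cited to~\cite{DBLP:conf/innovations/ChanX25}) as a standalone step before the theorem.
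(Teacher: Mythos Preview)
Your high-level strategy matches the paper exactly: first show that a locally maximin allocation minimizes every hockey-stick divergence $\mathsf{HS}_\gamma$ over $\Bfg$, then upgrade to all of $\mcal{D}_{\mathsf{DPI}}$. The paper carries out step one via Corollary~\ref{cor:local_maximin} (equivalently your route through Lemma~\ref{lemma:main_dual_contracts}), and then simply \emph{imports} step two as a black box, Fact~\ref{fact:HStoDPI}, citing~\cite[Theorem~5.12]{DBLP:conf/innovations/ChanX25}.

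Where your proposal diverges is in the attempted \emph{derivation} of step two, and there the argument has a genuine gap. You claim that every $\D\in\mcal{D}_{\mathsf{DPI}}$ admits an integral representation $\D(P\|Q)=\int_0^\infty \mathsf{HS}_\gamma(P\|Q)\,d\mu(\gamma)$ plus terms linear in the total masses. That representation is correct for $f$-divergences (it is just the statement that a convex $f$ is a nonnegative mixture of the elementary hockey-stick functions $t\mapsto\max\{t-\gamma,0\}$), but Definition~\ref{defn:dpi} is strictly broader than the class of $f$-divergences. Any $\int\mathsf{HS}_\gamma\,d\mu$ is \emph{separable} across coordinates of $V$, i.e.\ has the form $\sum_{u}\psi(x_u,y_u)$, whereas plenty of DPI divergences are not: take $\D(P\|Q):=\bigl(\mathsf{HS}_1(P\|Q)\bigr)^2=\mathrm{TV}(P,Q)^2$, which satisfies DPI (it is a monotone function of a DPI quantity) but on a three-point space involves cross terms like $|P(a)-Q(a)|\cdot|P(b)-Q(b)|$ and so cannot be written as a coordinatewise sum. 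Hence your mixture argument would only prove the theorem for $f$-divergences.

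The correct mechanism behind Fact~\ref{fact:HStoDPI} (and what \cite{DBLP:conf/innovations/ChanX25} actually does) is Blackwell sufficiency via the power/tradeoff function: domination $\mathsf{HS}_\gamma(x\|y)\le\mathsf{HS}_\gamma(x'\|y')$ for all $\gamma\ge 0$ is equivalent to the existence of a channel $T$ with $T(x')=x$ and $T(y')=y$, after which a single application of the data processing inequality gives $\D(x\|y)=\D(T(x')\|T(y'))\le\D(x'\|y')$ for \emph{every} $\D\in\mcal{D}_{\mathsf{DPI}}$, separable or not. If you replace your integral-representation paragraph with this Blackwell step (or simply cite it, as the paper does), the rest of your proof goes through unchanged.
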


\subsection{Iterative Approximation Methods}

\noindent \emph{Exact Decomposition vs Approximation.}
As mentioned in~~\cite{DBLP:conf/nips/HarbQC22},
the generalized densest subset problem can be solved in polynomial time
by submodular function minimization (e.g.,
with running time~$O(n^6)$~\cite{DBLP:journals/mp/Orlin09}).
Therefore, the exact density decomposition in Definition~\ref{defn:density_decomp}
can also be found in $\widetilde{O}(n^7)$ time.
On the other hand, some special cases like graph density decomposition
can be solved in near-linear time via the minimum quadratic cost flow algorithm~\cite{DBLP:conf/focs/ChenKLPGS22}.
However, for large instances, it is more practical to use iterative approaches~\cite{DBLP:conf/www/DanischCS17,DBLP:conf/nips/HarbQC22}
to obtain approximation solutions.  

\noindent \emph{First-Order Iterative Methods for Convex Program.}
In view of Theorem~\ref{th:main_maximin}, we can 
use well-known first-order iterative methods to 
obtain approximate solutions for the convex program.
In the literature, there are two broad approaches.

\begin{compactitem}

\item \emph{Projected Gradient Descent Variants.}
For special cases like graph density decomposition,
when a gradient descent update step causes the solution
to go outside the feasible set, the projection operation
can be implemented very efficiently~\cite{DBLP:conf/nips/HarbQC22}.
Hence, for these special cases, gradient descent with momentum
can be used to get a fast approximation algorithm.

However, in the dual-modular setting, it is not clear if there is
a fast algorithm to project 
a tentative (infeasible) point in $\R^V \times \R^V$ back into the feasible 
set~$\Bfg$.

\item \emph{Frank-Wolfe Variants.} The idea of Frank-Wolfe is that
given a current solution $z^{(t)}$ for the objective function $\Phi$\
and feasible $\mcal{C}$,
a \emph{gradient oracle} is consulted
to get some feasible $s^{(t)} \in \arg \min_{s \in \mcal{C}} \langle s, \nabla \Phi(z^{(t)}) \rangle$.

Then, the updated solution $z^{(t+1)}$ is obtained as some convex combination
of $z^{(t)}$ and $s^{(t)}$.  The advantage is that no projection step is needed,
and hence is suitable for the dual-modular convex program
in Theorem~\ref{th:main_maximin}.

\end{compactitem}

\noindent \emph{Universal Gradient Oracle.} In Theorem~\ref{th:main_maximin}, we see that the convex program can be formulated with many possible choices
of convex~$\vartheta$. It turns out that the Frank-Wolfe approach in~\cite{DBLP:conf/esa/HarbQC23} used for linear $\g$ can be generalized to the dual-modular setting for any choice of~$\vartheta$.  Specifically,
given any current allocation $(x^{(t)}, y^{(t)})$, no matter what convex~$\vartheta$ is used,
it suffices to sort the elements~$v \in V$ in non-increasing
order of the induced densities $\frac{x^{(t)}(v)}{y^{(t)}(v)}$
to produce a permutation~$\sigma$.  The gradient oracle then returns~$(\f^\sigma, \g^\sigma)$, which forms a convex combination with
$(x^{(t)}, y^{(t)})$ to get an updated $(x^{(t+1)}, y^{(t+1)})$.

\noindent \emph{Approximation Guarantees.}
Even though different choices of the convex~$\vartheta$
in the objection function $\Phi_\vartheta$ would lead
to the same gradient oracle in Frank-Wolfe,
the theoretical approximation bounds do depend on the choice
of $\vartheta$. Given an estimation~$\widehat{\rho}$
for the target density vector~$\rho^*$,
we could consider the \emph{absolute error}~$\|\widehat{\rho} - \rho^*\|_2$,
and would depend on the normalization $\f(V) = \g(V) = 1$.

On the other hand, we can also consider
the coordinate-wise \emph{multiplicative error}~$\max_{u \in V} \epsilon_u$,
where $\epsilon_u := \frac{|\widehat{\rho}(u) - \rho^*(u)|}{\rho^*(u)}$.
Even though conventionally  Frank-Wolfe is analyzed under
absolute error,  it is easy to convert it to multiplicative error
if we have a lower bound on the correct densities as follows:
$\|\widehat{\rho} - \rho^*\|_2 \geq |\widehat{\rho}(u) - \rho^*(u)| \geq \epsilon_u \cdot \f_{\min}$,
where $\f_{\min} := \min_{u, \sigma} \frac{\f^\sigma(u)}{\f(V)}$.

We have analyzed the error using different choices of~$\vartheta$.  Here is a typical result using~$\vartheta(t) := t^2$.

\begin{theorem}[Approximation Guarantees of Frank-Wolfe]
\label{th:main_approx}
After $T$ steps of Frank-Wolfe,
the solution $(x^{(T)}, y^{(T)}) \in \mcal{B}^{\geq}_\f\times\mcal{B}^{\leq}_\g$
induces a density vector $\rho^{(T)} \in \R^V$ with the following approximation 
guarantees:

\begin{compactitem}

\item \emph{Absolute error}.

$$\| \rho^{(T)} - \rho^* \|_2 \leq \frac{O(1)}{\g_{\min}^{2.5} \cdot \sqrt{T+2}}.$$

\item \emph{Coordinate-wise Multiplicative error.}
For all $u \in V$,

$$\left|\frac{\rho^{(T)}_u - \rho^*_u}{\rho^*_u} \right| \leq \frac{O(1)}{\f_{\min} \cdot \g_{\min}^{2.5} \cdot \sqrt{T+2}}.$$

\end{compactitem}
\end{theorem}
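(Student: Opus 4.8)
The plan is to instantiate the standard Frank--Wolfe convergence analysis for the convex program $\Phi_\vartheta(x,y) = \D_\vartheta(x\|y)$ with $\vartheta(t) = t^2$, i.e. $\Phi(x,y) = \sum_{u\in V} \frac{x_u^2}{y_u}$, over the compact convex domain $\mcal{C} = \mcal{B}^{\geq}_\f \times \mcal{B}^{\leq}_\g$, and then translate the resulting bound on the optimality gap $\Phi(x^{(T)},y^{(T)}) - \Phi(x^*,y^*)$ into a bound on the density error. First I would verify that the ``universal gradient oracle'' described in the text — sort $V$ by non-increasing $x^{(t)}_u / y^{(t)}_u$ to get $\sigma$, and return $(\f^\sigma,\g^\sigma)$ — really is the Frank--Wolfe linear-minimization oracle for $\nabla\Phi$ at $(x^{(t)},y^{(t)})$. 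Computing $\nabla\Phi$, we get $\partial\Phi/\partial x_u = 2x_u/y_u = 2\rho_u$ and $\partial\Phi/\partial y_u = -x_u^2/y_u^2 = -\rho_u^2$; minimizing the linear functional $\langle (s,r), \nabla\Phi\rangle = \sum_u (2\rho_u s_u - \rho_u^2 r_u)$ over $\mcal{C}$ decouples into minimizing $\sum_u 2\rho_u s_u$ over $s\in\mcal{B}^{\geq}_\f$ and maximizing $\sum_u \rho_u^2 r_u$ over $r\in\mcal{B}^{\leq}_\g$; by the greedy characterization of (contra)polymatroid bases (Fact~\ref{fact:perm_polytope} and the standard greedy algorithm), both are solved by the permutation $\sigma$ that orders $V$ by non-increasing $\rho_u$, giving $s = \f^\sigma$ and $r = \g^\sigma$ as claimed.

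Next I would bound the curvature constant (or an $L$-smoothness modulus) of $\Phi$ on $\mcal{C}$. The denominators $y_u$ are bounded below: since $y\in\mcal{B}^{\leq}_\g$ has $y_u = \g^\sigma(u)$ as a convex combination over permutations and $\g$ is strictly monotone, one gets $y_u \geq \g_{\min} := \min_{u,\sigma}\g^\sigma(u)/\g(V) > 0$ after the normalization $\f(V)=\g(V)=1$; similarly $x_u \leq \f(V) = 1$ and $y_u \leq 1$. On this region $\Phi(x,y) = \sum_u x_u^2/y_u$ has second derivatives of order $O(1/\g_{\min}^3)$ (the worst term being $\partial^2\Phi/\partial y_u^2 = 2x_u^2/y_u^3$), and the diameter of $\mcal{C}$ is $O(1)$, so the Frank--Wolfe curvature constant is $C_\Phi = O(1/\g_{\min}^3)$. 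Plugging into the textbook guarantee $\Phi(z^{(T)}) - \Phi(z^*) \leq \frac{2C_\Phi}{T+2}$ yields an optimality gap of $O\!\big(\tfrac{1}{\g_{\min}^3 (T+2)}\big)$.

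Then I would convert the objective gap to the density error. By Theorem~\ref{th:main_maximin} the minimizer $(x^*,y^*)$ is locally maximin and induces $\rho^*$, and by strong convexity of $\Phi$ in a suitable sense along feasible directions I want an inequality of the form $\Phi(x,y) - \Phi(x^*,y^*) \geq c\cdot \sum_u (\rho_u - \rho^*_u)^2$ for some $c = \Omega(\g_{\min}^2)$ — this is the quadratic-growth / Polyak--Łojasiewicz-type step, and it is where the exponent $2.5$ in the statement comes from: combining $\sqrt{\Phi\text{-gap}} = O(1/(\g_{\min}^{1.5}\sqrt{T+2}))$ with the $1/\sqrt{c} = O(1/\g_{\min})$ factor gives $\|\rho^{(T)} - \rho^*\|_2 = O(1/(\g_{\min}^{2.5}\sqrt{T+2}))$. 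Finally, the multiplicative bound follows from the displayed inequality in the text, $\|\widehat\rho - \rho^*\|_2 \geq |\widehat\rho(u)-\rho^*(u)| \geq \epsilon_u\cdot\f_{\min}$, i.e. dividing the absolute bound by $\f_{\min}$.

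I expect the main obstacle to be the quadratic-growth inequality $\Phi(x,y) - \Phi(x^*,y^*) \geq \Omega(\g_{\min}^2)\,\|\rho - \rho^*\|_2^2$: unlike the linear-$\g$ case where $y$ is fixed and $\Phi$ is a fixed strongly convex function of $x$ alone, here both $x$ and $y$ vary, the minimizer is not unique, and $\Phi$ is only jointly convex (not strongly convex) on all of $\mcal{C}$ — strong convexity must be recovered \emph{in the $\rho$-coordinates}, exploiting that all minimizers share the same $\rho^*$ (Theorem~\ref{th:main_market}) and that the decomposition densities $\rho_i$ are separated across blocks. Handling this likely requires a block-by-block argument along the density decomposition $V = \cup_i S_i$, or invoking the structural lemmas underlying Theorem~\ref{th:main_market} to show that the sublevel sets of $\Phi$ control the density vector; the dependence on $\g_{\min}$ in the constant $c$ will need to be tracked carefully through that argument. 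The smoothness bound and the Frank--Wolfe iteration itself are routine once the gradient oracle is validated.
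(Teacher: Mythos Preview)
Your outline matches the paper's approach closely: validate the sorted-density gradient oracle, bound the curvature $C_\Phi = O(1/\g_{\min}^3)$ via the Hessian spectral norm and the $O(1)$ diameter, apply the standard Frank--Wolfe gap bound $2C_\Phi/(T+2)$, and then convert the objective gap to a density error via a quadratic-growth inequality with constant $\Omega(\g_{\min}^2)$. Your identification of the exponents and of the quadratic-growth step as the crux is exactly right.

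Where you diverge from the paper is in how you propose to obtain that quadratic-growth inequality. You anticipate needing a block-by-block argument over the density decomposition or the structural lemmas behind Theorem~\ref{th:main_market}; the paper's argument is much more elementary and does not use the decomposition at all. It simply interpolates along the segment $z(\theta) = z^* + \theta(z-z^*)$, sets $\Gamma(\theta) = \Phi(z(\theta))$, and computes
\[
\Gamma''(\theta) \;=\; \sum_{u\in V} \frac{2}{y_u(\theta)}\,\bigl\{(x_u - x^*_u) - \rho_u(\theta)(y_u - y^*_u)\bigr\}^2.
\]
The inner brace rewrites as $\rho^*_u\bigl[(\epsilon_u - \epsilon_u(\theta))\,y_u + \epsilon_u(\theta)\,y^*_u\bigr]$ with $\epsilon_u(\theta) := (\rho_u(\theta)-\rho^*_u)/\rho^*_u$. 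Because $x_u(\theta)$ and $y_u(\theta)$ are affine in $\theta$ with positive denominator, $\rho_u(\theta)$ is monotone on $[0,1]$, so $\epsilon_u(\theta)$ shares the sign of $\epsilon_u$ and has $|\epsilon_u(\theta)|\le|\epsilon_u|$; hence the two summands inside the bracket have the same sign and the absolute value is at least $\rho^*_u\,|\epsilon_u|\,\g_{\min}$. This gives $\Gamma''(\theta) \ge 2\g_{\min}^2\,\|\rho-\rho^*\|_2^2$ uniformly in $\theta$, and then $\mcal{E}(z) = \Gamma(1)-\Gamma(0) \ge \tfrac12\inf_\theta\Gamma''(\theta)$ since $\Gamma'(0)\ge 0$ by optimality of $z^*$. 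So the step you flagged as the main obstacle is handled by a direct second-derivative calculation plus a one-line sign observation, with no appeal to the block structure or to non-uniqueness of the minimizer beyond $\Gamma'(0)\ge 0$.
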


\begin{remark}
For typical values $\f_{\min} = \Theta(\frac{1}{n})$
and $\g_{\min} = \Theta(\frac{1}{n})$,
it turns out that for several choices of~$\vartheta$,
we will get the same theoretical bounds:
achieving an absolute error takes $O(n^5)$ iterations,
while a multiplicative error needs $O(n^7)$ iterations.
However, each iteration will take $O(n \log n)$ time, due to the sorting step involved.
While the theoretical running times seem worse than
that for exact decomposition, experiments show that the empirical performance
of the iterative methods
are usually better in practice~\cite{DBLP:conf/esa/HarbQC23}.
\end{remark}

\ignore{
\noindent \textbf{Significance of the Frank-Wolfe Iterative Approach.}
Using the iterative approach, we can estimate every coordinate
of the density vector~$\rho^*$ with multiplicative error, which
can be used to give approximation solutions for the
dual-modular combinatorial contracts problem.  The details are given
in Section~\ref{sec:contracts_approx}.  
}

\noindent \textbf{Bonus Insights: Symmetry Between Reward and Cost.}
Besides giving approximate solutions,
another surprising aspect of the Frank-Wolfe approach is that
it reveals the symmetry between the two dual-modular functions
in an instance $(V; \f, \g)$ with strictly monotone $\f$ and $\g$.
We can reverse the roles of the two set functions
by considering supermodular $\overline{\g}(S) := \g(V) - \g(V \setminus S)$
and submodular $\overline{\f}(S) := \f(V) - \f(V \setminus S)$
to form its \emph{complementary} instance $(V; \overline{\g}, \overline{\f})$.
Such a symmetry between vertices and edges has been explored for the special case of graph density decomposition~\cite{DBLP:conf/innovations/ChanX25}, and a natural question is whether the two complementary dual-modular instances
will induce the same partition on~$V$, albeit in reversed order and with reciprocal densities.

The Frank-Wolfe gradient oracle actually gives a simple affirmative answer.
Observe that for any permutation~$\sigma$, if we denote $\overline{\sigma}$ as
its reversed order, then we have the identities:

 $\f^\sigma = \overline{\f}^{\overline \sigma} \in \mcal{B}^{\geq}_\f = \mcal{B}^{\leq}_{\overline \f}$
and $\g^\sigma = \overline{\g}^{\overline \sigma} \in
\mcal{B}^{\leq}_\g = \mcal{B}^{\geq}_{\overline \g}$.

Hence, given a current solution $(x, y)$, if we produce the order~$\sigma$ by sorting $v \in V$
in non-increasing order of $\frac{x_v}{y_v}$ in the instance $(V; \f, \g)$, it is equivalent
to the order $\overline{\sigma}$ by sorting in non-increasing order of $\frac{y_v}{x_v}$
in the instance $(V; \overline{\g}, \overline{\f})$.
Therefore, running Frank-Wolfe on the two complementary instances will produce exactly the same behavior,
with the roles of $x$ and $y$ exchanged.  Because the Frank-Wolfe procedure
is supposed to converge to the correct density vector, this simple observation\footnote{As promised earlier,
this implies that lexicographically optimality can equivalently be defined by the procedure that starts with
maximizing the minimum density among all agents.} proves the following statement.

\begin{corollary}[Symmetry Between Reward and Cost]
\label{cor:symmetric}
Given a dual-modular instance $(V; \f, \g)$ with strictly monotone~$\f$ and $\g$,
suppose $(V; \overline{\g}, \overline{\f})$ is its complementary instance
as defined above.  Then, the two instances produce density vectors $\rho^*$ and
$\overline{\rho}^*$ such that for all $u \in V$, 

$\rho^*(u) \cdot \overline{\rho}^*(u) = 1$.
\end{corollary}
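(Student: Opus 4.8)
The plan is to make the heuristic argument in the preceding paragraph fully rigorous by going through the locally maximin characterization rather than by appealing to convergence of Frank–Wolfe (which would only give an approximate statement). First I would record the two algebraic identities claimed in the text: for any permutation $\sigma$ of $V$ with reversal $\overline\sigma$, one has $\f^\sigma = \overline{\f}^{\,\overline\sigma}$ and $\g^\sigma = \overline{\g}^{\,\overline\sigma}$. These follow directly from the definitions $\overline{\f}(S) = \f(V) - \f(V\setminus S)$ and $h^\sigma(u) = h(\{u\}\mid\{w: w\prec_\sigma u\})$: if $u$ is at position $j$ in $\sigma$, then the set of predecessors of $u$ in $\sigma$ is the complement (within $V\setminus\{u\}$) of the set of predecessors of $u$ in $\overline\sigma$, and a one-line marginal computation turns the $\f$-marginal into an $\overline{\f}$-marginal. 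Consequently $\mcal{B}^{\geq}_\f = \mcal{B}^{\leq}_{\overline{\f}}$ and $\mcal{B}^{\leq}_\g = \mcal{B}^{\geq}_{\overline{\g}}$ by Fact~\ref{fact:perm_polytope}, so the feasible region $\Bfg$ of the instance $(V;\f,\g)$ coincides with the feasible region $\mcal{B}^{\geq}_{\overline{\g}}\times\mcal{B}^{\leq}_{\overline{\f}}$ of the complementary instance $(V;\overline{\g},\overline{\f})$, with the roles of the two coordinates exchanged. Here strict monotonicity of $\f$ is what guarantees $\overline{\g}$ is strictly monotone, so that the complementary instance is itself a legitimate dual-modular instance and densities are well defined (no zero denominators).

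Next I would show that the locally maximin condition is self-complementary under swapping coordinates. Suppose $(x,y)\in\Bfg$ is locally maximin for $(V;\f,\g)$; I claim $(y,x)$ is locally maximin for $(V;\overline{\g},\overline{\f})$. In the complementary instance the induced density of agent $v$ at allocation $(y,x)$ is $y_v/x_v$, i.e.\ the reciprocal of $\rho_v = x_v/y_v$. Fix $\tau>0$ and let $T := \{v : y_v/x_v \geq \tau\} = \{v : \rho_v \leq 1/\tau\} = V \setminus S^{(\rho)}$ where $\rho$ is taken slightly above $1/\tau$ so that $S^{(\rho)} = \{v:\rho_v > 1/\tau\}$; the local maximin condition for $(x,y)$ applied at the level $\rho$ (using that there are only finitely many distinct density values, so we may pick $\rho$ in a gap) gives $\sum_{v\in S^{(\rho)}} x_v = \f(S^{(\rho)})$ and $\sum_{v\in S^{(\rho)}} y_v = \g(S^{(\rho)})$. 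Subtracting from the totals $x(V)=\f(V)$, $y(V)=\g(V)$ and using the definitions of $\overline{\f},\overline{\g}$ yields $\sum_{v\in T} y_v = \overline{\g}(T)$ and $\sum_{v\in T} x_v = \overline{\f}(T)$, which is exactly the local maximin condition for $(y,x)$ in the complementary instance at level $\tau$. The only subtlety is handling boundary levels $\tau$ equal to some reciprocal density, for which I would invoke the standard observation that the family of tight sets is closed under the relevant operations; passing to a nearby generic $\rho$ in an open gap between consecutive density values is the cleanest route.

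With these two pieces in hand the corollary is immediate. By Theorem~\ref{th:main_market} (equivalence of (i) and (iii)) applied to $(V;\f,\g)$, a locally maximin $(x,y)$ has $x_v/y_v = \rho^*(u)$ for the density vector $\rho^*$ of that instance; such an allocation exists because a lexicographically optimal one exists by compactness. By the previous paragraph $(y,x)$ is locally maximin for $(V;\overline{\g},\overline{\f})$, and again by Theorem~\ref{th:main_market} its induced density vector is $\overline{\rho}^*$, the canonical density vector of the complementary instance; but that induced density vector is, coordinatewise, $y_v/x_v = 1/\rho^*(u)$. Hence $\rho^*(u)\cdot\overline{\rho}^*(u) = 1$ for every $u\in V$, as claimed; since $\rho^*$ and $\overline{\rho}^*$ are canonical (independent of the particular maximin allocation chosen, again by Theorem~\ref{th:main_market}), the identity holds unconditionally.

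I expect the main obstacle to be purely bookkeeping: getting the complementation of the tight-set condition exactly right at the threshold levels $\tau = 1/\rho_i$, i.e.\ matching $S^{(\rho)}$ with its complement and making sure the "at least $\rho$" versus "strictly greater than $\rho$" conventions line up so that the finitely-many-density-values argument goes through cleanly. Everything else — the two $h^\sigma$ identities, the polytope equalities via Fact~\ref{fact:perm_polytope}, and the final substitution — is routine once that is pinned down. (One could alternatively carry out the whole argument on the permutation-distribution side, observing that reversal $\sigma\mapsto\overline\sigma$ is a bijection intertwining the two Frank–Wolfe dynamics and that the fixed point is unique, but the locally-maximin route above avoids any appeal to convergence rates and is self-contained given the earlier theorems.)
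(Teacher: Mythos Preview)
Your proposal is correct and takes a genuinely different route from the paper. The paper's argument is the Frank--Wolfe one sketched just before the corollary: the identities $\f^\sigma=\overline{\f}^{\,\overline\sigma}$ and $\g^\sigma=\overline{\g}^{\,\overline\sigma}$ force the Frank--Wolfe iterates on the two complementary instances to coincide at every step (with the $x,y$ coordinates swapped), and then the convergence guarantee of Theorem~\ref{th:main_approx} is invoked to pass to the limit and conclude that the limiting density vectors are coordinatewise reciprocal. You instead bypass the iterative scheme entirely: after recording the same permutation identities and the polytope equalities, you show directly that the \emph{locally maximin} condition on allocations is preserved under the swap $(x,y)\mapsto(y,x)$, and then invoke Theorem~\ref{th:main_market} on each instance. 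Your route is more self-contained---it uses only the structural results of Section~\ref{sec:technical} and needs nothing from Section~\ref{sec:FW}, in particular no limiting argument---while the paper's route is more algorithmic in flavor and doubles as motivation for the universal gradient oracle. The threshold bookkeeping you flag is the only real wrinkle, and your finite-density-values gap argument handles it; an alternative that avoids even that is to work on the permutation side via Definition~\ref{defn:local_maximin} and Fact~\ref{fact:local_represent}, where the complementation becomes simply $\sigma\mapsto\overline\sigma$ on every permutation in the support of $(\vecp,\vecq)$.
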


\subsection{Related Work}

As aforementioned, the closest related works to 
our dual-modular setting $(V; \f, \g)$
considered the special case of linear~$\g$.
Fujishige~\cite{fujishige1980lexicographically}
considered the lexicographically optimal base
of a polymatroid by considering a convex program
using quadratic functions.  Jain and Vazirani~\cite{DBLP:journals/geb/JainV10}
studied the same setting from the perspective of submodularity utility allocation (SUA)
markets, where the linear $\g$ corresponds to a fixed budget per agent;
moreover, market equilibrium is defined by a unique optimal solution
to a convex program involving logarithmic functions.
As pointed out by Nagano~\cite{DBLP:conf/ipco/Nagano07}, those two problems are equivalent,
and details about the connection with the decomposition are explained
in the survey~\cite{Panditi2016}.  Harb et al.~\cite{DBLP:conf/esa/HarbQC23}
applied the Frank-Wolfe method to the convex program and readily derived an iterative algorithm
that converges to the optimal solution.

For the special case of (hyper)graphs,
the connection between the density decomposition, Fisher market equilibrium
and locally maximin allocation has been recently explored by
Chan and Xue~\cite{DBLP:conf/innovations/ChanX25}, who also showed
that a locally maximin allocation corresponds to a universal
minimizer over the class of divergences satisfying the data processing inequality.

The (generalized) densest subset problem~\cite{DBLP:journals/csur/LancianoMFB24} can be viewed as 
returning the subset with the largest density in the decomposition.
Kawase and Miyauchi~\cite{DBLP:conf/isaac/KawaseM16a} considered several density notions
on edge-weighted graphs, and the densest subset problem
for one of them can be solved in polynomial time and
falls under our dual-modular setting $(V; \f, \g)$ with $\g(S) := \phi(|S|)$ for some 
concave function~$\phi$. The function~$\g$ with same
form has also been considered~\cite{DBLP:conf/soda/ChekuriQT22},
but their focus is to show that \emph{iterative peeling} procedure can approximate
the densest subset with $O(1)$ ratio.

Interestingly, locally maximin solutions under the dual-modular framework
have implicitly appeared as technical tools in many works such as tree packing~\cite{DBLP:conf/stoc/Thorup08},
online matroid intersection maintenance~\cite{DBLP:conf/soda/BuchbinderGHKS24},
online submodular assignment problems~\cite{DBLP:conf/focs/HathcockJPSZ24},
quotient sparsification for submodular functions~\cite{DBLP:conf/soda/Quanrud24a}.
In these works, the technical parts would typically consider two set functions on
some ground set~$V$, where one function $\mathfrak{l}$ is linear
and the other may be a monotone super/sub-modular function~$\mathfrak{h}$.
However, more often than not, each of these works would have to re-define the
density decomposition or argue that the desired vector
is the optimal solution of some convex program (e.g, the one for the SUA market~\cite{DBLP:journals/geb/JainV10}).

Besides showing that the results can be extended to the more general
dual-modular setting, our work offers a comprehensive view when
one makes sense of how these tools are employed.
For instance, when $\mathfrak{l}$ is linear and $\mathfrak{h}$ is submodular,
we know from Corollary~\ref{cor:symmetric} that we can 
formulate this situation as either $(V; \mathfrak{l}, \mathfrak{h})$
or its complement instance $(V; \overline{\mathfrak{h}}, \mathfrak{l})$,
both of which have appeared in the literature.

\subsection{Paper Organization}

The introduction has presented all the main ideas and results of this paper.
Section~\ref{sec:prelim} gives more detailed descriptions of the background materials.  The main technical proofs are given in Section~\ref{sec:technical},
which shows the connection between locally maximin solutions and the convex program under the dual-modular framework.  In Section~\ref{sec:contracts},
we show how this framework can be used to analyze the dual-modular combinatorial contracts problem.  Section~\ref{sec:FW} gives the detailed analysis of the Frank-Wolfe iterative approach to approximate the convex program and the density vector.  Finally, in Section~\ref{sec:general_div},
we show that any locally maximin allocation is a universal minimizer under
the wide class of divergences satisfying the data processing inequality.

\ignore{
*****

Nagano~\cite{DBLP:conf/ipco/Nagano07,Panditi2016}: many convex programs give density decomposition.

survey~\cite{Panditi2016}

book~\cite{fujishige2005submodular}

\textbf{Generalized Densest Supermodular Subset Problem (GDSS)}: Chekuri et al.~\cite{DBLP:conf/nips/HarbQC22} describe the GDSS as follows: given an instance $(V;f,g)$, where $f$ is a nonnegative monotone supermodular set function defined on set $V$ and $g$ is a nonnegative monotone submodular set function defined on set $V$, find $S\subseteq V$ to maximize the density ${f(S)}/{g(S)}$. Furthermore, we can iteratively define a density decomposition process and the density vector with respect to the input instance $(V;f,g)$, $i.e.$ let $S_1$ be the maximal denest subset in $V$ and then consider to find the maximal densest subset $S_2$ in $V-S_1$ with respect to the marginal density defined by ${f(S|S_1)}/{g(S|S_1)}$ for $S\subseteq V-S_1$, where the marginal supermodular function with respect to a subset $A$ is defined as $f(S|A):=f(S\cup A )-f(A)$ for $S\subseteq V$ and the marginal submodular function with respect to a subset is defined similarly. Keep doing this procedure till the ground set is empty and we will obtain a sequence of maximal densest subsets $S_{0}(=\emptyset),S_1,\dots,S_k$. Then, we can obtain the density vector $\rho\in \mathbb{R}^{|V|}$ which is defined by $\rho_{u}=f(S_i|U_{i-1})/g(S_i|U_{i-1})$ for $u\in S_i$, where $U_{i-1}=\cup_{i=0}^{i-1}S_j$. For the case that $g(S)=|S|$ for $S\subseteq V$, it is the so-called densest supermodular set problem (DSS)~\cite{DBLP:conf/nips/HarbQC22}. To solve the density vector for DSS, Chekuri et al.~\cite{DBLP:conf/esa/HarbQC23} derive a noisy(approximate) Frank-Wolfe algorithm  which converges to the exact density vector in $O(\alpha_{f}/\epsilon^2)$ iterations, where $\alpha_f$ is a constant only depending on $f$.
}

\section{Preliminaries}
\label{sec:prelim}

\begin{definition}[Dual-Modular Input Instance] 
\label{defn:input}
An instance $(V; \f, \g)$
consists of a finite \emph{ground} set $V$
and a pair of non-negative
monotone\footnote{A set function $h: 2^V \to \R$
is monotone if $A \subseteq B \subseteq V$
implies that $h(A) \leq h(B)$.} 
set functions $\f, \g: 2^V \to \R_{\geq 0}$
satisfying  $\f(\emptyset) = \g(\emptyset) = 0$.
In addition, we have:

\begin{compactitem}

\item $\f$ is supermodular, i.e.,
for all $A,B\subseteq V$,
 $\f(A)+ \f(B)\leq \f(A\cap B)+ \f(A\cup B)$.

\item $\g$ is submodular, i.e.,
for all $A,B\subseteq V$,
 $\g(A)+ \g(B)\geq \g(A\cap B)+ \g(A\cup B)$.
\end{compactitem}

\end{definition}

However, to avoid pathological situations,
we enforce the following technical assumption.

\begin{assumption}[Technical Assumption on $\g$]
\label{assume:g}
We assume that the submodular function $\g: 2^V \to \R_{\geq 0}$
is strictly monotone, i.e., for $A \subsetneq B$,
$\g(A) < \g(B)$.
\end{assumption}

\noindent \emph{Remark.}
  Strict monotonicity can be achieved by
picking arbitrarily small $\eta > 0$
and consider the perturbation
$\widetilde{\g}(S) =  \g(S) + \eta \cdot |S|$.
With this assumption,
it is guaranteed that for any $y \in \mcal{B}^{\leq}_\g$,
all coordinates of $y$ are non-zero.

\noindent \textbf{Normalization Convention.}
For notational convenience, we normalize the
functions with a multiplicative factor, and assume $\f(V) = \g(V) = 1$,

\noindent \textbf{Marginal Contribution.}
Given a set function $h: 2^V \to \R$,
the marginal contribution of a subset $S \subseteq V$ with respect
to $A \subseteq V$ is
$h(S | A) := h(S \cup A) - h(A)$.

\ignore{
\noindent \textbf{Non-Degenerate Elements.}
To avoid degenerate elements, we further assume that
for 
each $v \in V$,
both $\f(\{v\} |V - v ) > 0$ and $\g(\{v\}) > 0$ hold.
}


\noindent \emph{Equivalent Perspectives.}
In the literature, it is known
that for the special case when $\g$ is linear, 
the input instance in Definition~\ref{defn:input}
can be interpreted as inputs to
two closely related problems~\cite{fujishige1980lexicographically}: (i) density decomposition,
and (ii) convex program optimization.
It is straightforward to generalize the setting to submodular $\g$.

\subsection{Dual-Modular Density Decomposition}

We formalize the description of the density decomposition described
in Section~\ref{sec:intro}.
\ignore{
For the special case of linear $\g$,
Fujishige~\cite{fujishige1980lexicographically}
first considered the density decomposition
in the context
of lexicographically optimal base of polymatroids,
but this decomposition has been independently rediscovered many times
in the computer science community.
}
Recall that for $S \subseteq V$,
its density in the instance (as in Definition~\ref{defn:input})
is $\rho(S) := \frac{\f(S)}{\g(S)}$.
\ignore{
Chekuri et al.~\cite{DBLP:conf/nips/HarbQC22}
called the \emph{generalized densest supermodular subset problem}
as finding $S \subseteq V$ to maximize $\rho(S)$.
Just as the well-known case for linear $\g$,
}
The following fact implies that the maximal (with respect to 
set inclusion) densest subset is unique.

\begin{fact}
Suppose both $A$ and $B$ are densest subsets
in the instance $(V; \f, \g)$.
Then, both $A \cup B$ and $A \cap B$ (if $A \cap B \neq \emptyset$)
are densest subsets.
\end{fact}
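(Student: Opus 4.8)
The plan is a textbook uncrossing argument that plays supermodularity of $\f$ against submodularity of $\g$. Write $\mu$ for the maximum density, so that $\rho(A) = \rho(B) = \mu$, i.e. $\f(A) = \mu\cdot\g(A)$ and $\f(B) = \mu\cdot\g(B)$. Before starting I would record two small facts: a densest subset is necessarily non-empty (the ratio $\f(S)/\g(S)$ is only meaningful when $\g(S) > 0$, which by Assumption~\ref{assume:g} and $\g(\emptyset)=0$ forces $S \neq \emptyset$), and $\mu > 0$ --- indeed $\mu \geq \rho(V) = 1$ by the normalization $\f(V) = \g(V) = 1$.

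The core step is the chain
\[
\mu\left(\g(A) + \g(B)\right) = \f(A) + \f(B) \leq \f(A\cap B) + \f(A\cup B) \leq \mu\cdot\g(A\cap B) + \mu\cdot\g(A\cup B) \leq \mu\left(\g(A) + \g(B)\right),
\]
where the first inequality is supermodularity of $\f$; the second is maximality of the density, since $\f(A\cup B)\leq \mu\cdot\g(A\cup B)$ always holds, while $\f(A\cap B)\leq\mu\cdot\g(A\cap B)$ holds either because $\rho(A\cap B)\leq\mu$ (when $A\cap B\neq\emptyset$) or trivially because both sides are $0$ (when $A\cap B=\emptyset$), using $\mu>0$ so that multiplying densities by $\mu$ preserves the direction; and the third is submodularity of $\g$ scaled by $\mu>0$. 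Since the two ends of the chain coincide, every inequality must be tight, so $\f(A\cup B)=\mu\cdot\g(A\cup B)$ and, whenever $A\cap B\neq\emptyset$, $\f(A\cap B)=\mu\cdot\g(A\cap B)$.

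To finish I would divide through by $\g(A\cup B)$ and $\g(A\cap B)$; these are strictly positive by strict monotonicity of $\g$ (Assumption~\ref{assume:g}), since $A\cup B$, and $A\cap B$ in the relevant case, are non-empty. This yields $\rho(A\cup B)=\mu=\rho(A\cap B)$, so both sets are densest subsets, as claimed.

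I do not expect a genuine obstacle here; the only points requiring a moment's care are ensuring $\mu>0$ so that the density inequalities survive multiplication, and invoking strict monotonicity of $\g$ to keep all the denominators bounded away from $0$ so that $\rho(A\cup B)$ and $\rho(A\cap B)$ are well defined --- which is precisely the role Assumption~\ref{assume:g} plays throughout.
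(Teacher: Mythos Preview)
Your proposal is correct and is the standard uncrossing argument for this kind of statement. The paper does not supply a proof here---it states the result as a known Fact---so there is no paper proof to compare against; your write-up would serve perfectly well as the omitted justification.
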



\ignore{
\begin{fact}[Appendix \ref{unimax}]\label{fact1}
The maximal densest subset is unique and contains all densest subsets.
\end{fact}
}

\ignore{
So far in the literature, the density decomposition has been considered
when at least one of $\f$ and $\g$ is linear.
However, it is straightforward
to extend the description for a general instance
$(V; \f, \g)$, which can be formulated
by an iterative procedure.  First, find the 
maximal densest subset $S_1$, and
perform recursion on the subinstance
$(V \setminus S_1; \f(\cdot | S_1), \g( \cdot | S_1))$
until an empty ground set is reached.
In addition to a partition on $V$,
a corresponding \emph{density} vector $\mathfrak{\rho} \in \R^V$ is 
produced.  The formal description is given as follows.
}

\begin{definition}[Dual-Modular Density Decomposition and Density Vector]
\label{defn:density_decomp}
Given an instance $(V; \f, \g)$, the density decomposition
consists of a partition $V = \cup_i S_i$
and a density vector $\rho^* \in \R^V$
generated by the following iterative procedure.

\begin{compactitem}
\item For $i=0$, we use the convention $S_0:=\emptyset$.

\item For $i\geq 1$,
denote $S_{<i} := \cup_{j=0}^{i-1} S_j$,
and
consider the subinstance with ground set $V_i := V \setminus S_{<i}$ and the marginal set functions
$\f(\cdot | S_{<i})$ and $\g( \cdot | S_{<i})$.
Denote $\rho(S | S_{<i}) := 
\frac{\f(S|S_{<i})}{\g(S|S_{<i} )}$.

Find the corresponding maximal densest subset

$$S_i := \arg \max_{S\subseteq V_i} \rho(S | S_{<i}) .$$

For each $u \in S_i$, set $\rho^*(u) := \rho(S_i | S_{<i})$.

Stop if $V_i=S_i$; otherwise,
consider $i \gets i + 1$.
\end{compactitem}

Suppose $k$ is the total number of 
subsets in the decomposition $V = \cup_{i=1}^k S_i$.
For each $i$, we also denote
$\rho_i = \rho(S_i | S_{<i})$.
From Remark~\ref{remark:strict_decrease},
we see later that $\rho_1 > \rho_2 > \cdots > \rho_k$.

\noindent \emph{Density Decomposition Problem.}
Given the input instance, the problem is to compute or approximate
the above density vector $\rho^* \in \R^V$.

\end{definition}

\begin{remark}
Because of Assumption~\ref{assume:g}
for all $u \in V$, $0 \leq \rho^*_u < + \infty$.
\end{remark}


\subsection{Dual-Modular Market Allocation and Convex Program}

\noindent \textbf{Feasible Allocation.}
Given a dual-modular instance $(V; \f, \g)$,
we denote $\Delta_\f(V) := \{x \in \R^V_{\geq 0}: x(V) = \f(V)\}$,
where $x(S) := \sum_{u \in V} x_u$,
and define $\Delta_\g(V)$ in a similar way.
\ignore{
Since we consider normalized $\f(V) = \g(V) = 1$,
we may drop the subscript and consider $\Delta(V)$ as the collection
of distributions on $V$.}
A (feasible) \emph{allocation} is a pair $(x, y) \in
\mcal{B}^{\geq}_\f  \times \mcal{B}^{\leq}_\g$ of vectors
in the bases defined as follows:

\begin{compactitem}

\item \emph{Base Contrapolymatroid.}
$\mcal{B}^{\geq}_\f := \{x \in \Delta_\f(V) :
\forall S \subseteq V, x(S) \geq \f(S)\}$

\item \emph{Base Polymatroid.}
$\mcal{B}^{\leq}_\g := \{y \in \Delta_\g(V) :
\forall S \subseteq V, y(S) \leq \g(S)\}$

\end{compactitem}

\noindent \textbf{Fairness Notions.}
As mentioned in the introduction,
market equilibrium can be characterized
by the fairness notions \emph{lexicographically optimal} 
or \emph{locally maximin}.  The following simple argument
shows that the two notions are essentially equivalent,
once we have established the existence of a locally maximin
allocation.  The proof does not need to use dual-modularity
and just needs a monotone $\f$ and a strictly monotone $\g$,
which ensures that every $y \in \mcal{B}^{\leq}_\g$
has positive coordinates.

\begin{fact}[Equivalence of Fairness Notions]
\label{fact:equiv_fair}
Suppose a locally maximin allocation exists.
Then, an allocation is lexicographically optimal \emph{iff}
it is locally maximin; moreover, any such allocation
will induce the same density vector.
\end{fact}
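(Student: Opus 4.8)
The plan is to prove two things: first, that any locally maximin allocation induces a uniquely determined density vector (so all locally maximin allocations share the same density vector), and second, that this common density vector is exactly the one achieved by every lexicographically optimal allocation, so that the two notions coincide. The crucial preliminary observation is that since $\g$ is strictly monotone, every $y \in \mcal{B}^{\leq}_\g$ has strictly positive coordinates; hence for any allocation $(x,y)$ the induced densities $\rho_v := \frac{x_v}{y_v}$ are well-defined non-negative reals, and the "sorted density profile" (the multiset of densities written in non-increasing order) is a well-defined vector in $\R^V$. Lexicographic optimality is precisely a statement about minimizing this sorted profile in the lexicographic order, and a lexicographically optimal allocation exists because $\mcal{B}^{\geq}_\f \times \mcal{B}^{\leq}_\g$ is compact and the sorted-profile map is continuous.

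First I would show: if $(x,y)$ is locally maximin, its sorted density profile is forced. Order the distinct density values appearing in $(x,y)$ as $\mu_1 > \mu_2 > \cdots > \mu_m$, and let $T_\ell := S^{(\mu_\ell)}$ be the set of agents with density $\geq \mu_\ell$, so $\emptyset = T_0 \subsetneq T_1 \subsetneq \cdots \subsetneq T_m = V$. The locally maximin condition gives $x(T_\ell) = \f(T_\ell)$ and $y(T_\ell) = \g(T_\ell)$ for every $\ell$. Consequently the agents in $T_\ell \setminus T_{\ell-1}$ collectively receive reward $\f(T_\ell) - \f(T_{\ell-1}) = \f(T_\ell \mid T_{\ell-1})$ and cost $\g(T_\ell \mid T_{\ell-1})$, and since they all have the common density $\mu_\ell$, we get $\mu_\ell = \frac{\f(T_\ell \mid T_{\ell-1})}{\g(T_\ell \mid T_{\ell-1})}$. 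Thus both the chain $T_0 \subsetneq \cdots \subsetneq T_m$ and the values $\mu_\ell$ are determined by $\f, \g$ alone — they depend only on the combinatorial structure, not on the particular $(x,y)$. So any two locally maximin allocations have the same sets $T_\ell$, the same values $\mu_\ell$, and hence the same density vector $\rho$; call it $\rho^{\mathrm{LM}}$.

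Next I would show that a locally maximin allocation is lexicographically optimal, and conversely. For the forward direction, take any feasible $(x',y')$ with density vector $\rho'$; I must show the sorted profile of $\rho^{\mathrm{LM}}$ is lexicographically $\leq$ that of $\rho'$. The key inequality is that for each $\ell$, the largest $|T_\ell|$ densities of \emph{any} feasible allocation sum to at least $\frac{\f(T_\ell)}{\g(T_\ell)}$-type bound — more precisely, I would argue via the constraints $x'(S) \geq \f(S)$ and $y'(S) \leq \g(S)$ that the top densities in $(x',y')$ cannot all be pushed below the corresponding $\mu_\ell$'s without violating feasibility; a clean way is the averaging argument that among any agent set $A$, $\max_{v \in A} \rho'_v \geq \frac{x'(A)}{y'(A)} \geq \frac{\f(A)}{y'(A)} \geq \frac{\f(A)}{\g(A)}$, applied inductively down the chain. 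This yields that $\rho^{\mathrm{LM}}$'s sorted profile lexicographically dominates (is $\leq$) every other, i.e., it is lexicographically optimal. For the converse, suppose $(x,y)$ is lexicographically optimal; since a locally maximin allocation exists and is lexicographically optimal with profile $\rho^{\mathrm{LM}}$, optimality forces $(x,y)$ to have the \emph{same} sorted profile $\rho^{\mathrm{LM}}$. Then a tightness argument shows the chain sets must be saturated: if some $x(T_\ell) > \f(T_\ell)$ or $y(T_\ell) < \g(T_\ell)$ one could shift a small amount of reward/cost to strictly decrease a high density and thus strictly improve the lexicographic profile, a contradiction; hence $(x,y)$ is locally maximin. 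I expect the main obstacle to be this last tightness/exchange step — making precise the "small modification" that strictly improves the lexicographic order while staying in $\mcal{B}^{\geq}_\f \times \mcal{B}^{\leq}_\g$ — which requires a careful use of the polymatroid exchange property (or the permutation representation of Fact~\ref{fact:perm_polytope}) to guarantee that such a feasibility-preserving perturbation exists whenever a chain set is not tight.
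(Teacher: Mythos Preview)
Your proposal has the right ingredients but contains one genuine gap and one unnecessary complication, both of which the paper's proof avoids by organizing the induction differently.

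\textbf{The gap (step 1).} You assert that ``the chain $T_0\subsetneq\cdots\subsetneq T_m$ and the values $\mu_\ell$ are determined by $\f,\g$ alone,'' but this does not follow from what you wrote. The equation $\mu_\ell=\frac{\f(T_\ell\mid T_{\ell-1})}{\g(T_\ell\mid T_{\ell-1})}$ only says the values are determined \emph{once the chain is fixed}; it does not show that two different locally maximin allocations must produce the same chain. Proving chain-uniqueness directly would essentially require identifying $T_1$ with the maximal densest subset, which uses dual-modularity---and the paper explicitly notes that Fact~\ref{fact:equiv_fair} needs only monotone $\f$ and strictly monotone $\g$, not dual-modularity. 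The uniqueness of the density vector is better obtained as a \emph{consequence} of the equivalence, not as a preliminary.

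\textbf{The unnecessary obstacle (step 3).} You flag the tightness/exchange step as the main difficulty, but no polymatroid exchange is needed. Once you know a lexicographically optimal $(\widehat x,\widehat y)$ has $\max_v\widehat\rho_v\le\mu_1$ (because the locally maximin candidate achieves $\mu_1$), the averaging inequality on $T_1$ forces every agent in $T_1$ to have density exactly $\mu_1$; then $\widehat x(T_1)=\mu_1\,\widehat y(T_1)\le\mu_1\,\g(T_1)=\f(T_1)\le\widehat x(T_1)$ gives tightness directly. No perturbation argument is required.

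\textbf{The paper's organization.} The paper fixes one locally maximin $(x^*,y^*)$ and inducts on the number~$k$ of its distinct densities. In the inductive step it peels off the top level set $S_1$ using exactly your averaging inequality plus the tightness computation above, then recurses on the residual instance $(V\setminus S_1;\f(\cdot\mid S_1),\g(\cdot\mid S_1))$. This single induction simultaneously shows that every lexicographically optimal allocation is locally maximin and shares the density vector of $(x^*,y^*)$; the converse and the uniqueness across locally maximin allocations then fall out immediately. Reorganizing your argument this way eliminates both issues.
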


\begin{proof}
Suppose a locally maximin allocation~$(x^*, y^*)$ exists.  
We prove 
by induction on the number~$k$ of distinct induced
densities in $\{\frac{x^*_u}{y^*_u}: u \in V\}$.

The base case $k=1$ is trivial, because
both reward and cost are constant-sum.
Hence, the allocation~$(x^*, y^*)$ in which every agent has equal induced density
must be lexicographically optimal.
Conversely, 
any lexicographically optimal allocation
must induce 
the same multi-set of densities as
$(x^*, y^*)$, which we have just proved to be lexicographically optimal;
hence, it induces
the same density for every agent, which satisfies
the local maximin condition.

For the inductive step $k \geq 2$,
consider $S_1 := \arg \max_{u \in V}  \frac{x^*_u}{y^*_u}$.
Because $(x^*, y^*)$ is locally maximin,
we have $x^*(S_1) = \f(S_1)$ and $y^*(S_1) = \g(S_1)$.
Moreover, in any feasible allocation,
there exists an agent in $S_1$ whose induced density must be
at least $\rho_1 := \frac{\f(S_1)}{\g(S_1)}$,
and the only case where every agent in $S_1$ has induced density not exceeding
$\rho_1$ is that the agents in the group $S_1$
are in the worst situation in terms of reward and cost, and every one in $S_1$ has induced density exactly~$\rho_1$.

Since we have a candidate allocation $(x^*, y^*)$,
it follows that any lexicographically optimal allocation
$(\widehat{x}, \widehat{y})$ must 
allocate $\widehat{x}(S_1) = \f(S_1)$ 
and $\widehat{y}(S_1) = \g(S_1)$,
and induce the density $\rho_1$ for every agent in $S_1$.

We next apply the induction hypothesis
to the residue instance $(V \setminus S_1; \f( \cdot | S_1),
\g(\cdot |S_1)$ and the restricted allocation $(x^*|_{S_1},
y^*|_{S_1})$.
Hence, we conclude that any lexicographically
optimal allocation must also be locally maximin and
induce the same density vector
as $(x^*, y^*)$, which finishes the inductive step.
\end{proof}

\noindent \textbf{Expressing a Solution with a Pair of Permutation Distributions.}
As mentioned in the introduction,
we consider the collection $\mcal{S}_V$ of permutations on $V$.
Given $\sigma \in \mcal{S}_V$,
we use $u \prec_\sigma v$ to mean $u$ appears earlier than $v$
in the permutation~$\sigma$.
Given a set function $h: 2^V \to \R$,
we denote $h^\sigma \in \R^V$ as the vector
defined by: $h^\sigma(u) := h(\{u\} | \{w\in V: w\prec_{\sigma} u\}), \forall u \in V.$

We use $\Delta(\mcal{S}_V) := \{ \vecp = (p_\sigma \geq 0)_{\sigma \in \mcal{S}_V}:
\sum_{\sigma \in \mcal{S}_V} p_\sigma = 1  \}$ to 
denote the collection of permutation distributions.
We naturally extend the notation
to $h^{\vecp} = \sum_{\sigma \in \mcal{S}_V} p_\sigma \cdot h^\sigma \in \R^V$.

The following fact implies that we can use a pair $(\vecp, \vecq)$ of permutation
distributions to represent an allocation.  Henceforth,
we use the terminology \emph{solution} for $(\vecp, \vecq)$.

\begin{fact}[Permutation Distributions as Solution~\cite{SFMCP}]
\label{fact:perm_polytope2}
For all $(\vecp, \vecq) \in \Delta(\mcal{S}_V) \times \Delta(\mcal{S}_V)$,
$(\f^\vecp, \g^\vecq) \in \mcal{B}^{\geq}_\f \times \mcal{B}^{\leq}_\g$
is a feasible allocation.
Conversely, for any $(x, y) \in \mcal{B}^{\geq}_\f \times \mcal{B}^{\leq}_\g$, there exists a pair $(\vecp, \vecq) \in \Delta(\mcal{S}_V) \times \Delta(\mcal{S}_V)$ of permutation distributions such that $(x, y) = (\f^\vecp, \g^\vecq)$.
\end{fact}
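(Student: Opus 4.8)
The statement is essentially a repackaging of Fact~\ref{fact:perm_polytope}, so the plan is to split it into the two asserted directions and dispatch each with a short argument. For the forward direction I would first show that every \emph{single} permutation vector is already feasible, i.e. $\f^\sigma \in \mcal{B}^{\geq}_\f$ and $\g^\sigma \in \mcal{B}^{\leq}_\g$ for each $\sigma \in \mcal{S}_V$, and then conclude by convexity of the two bases. The sum constraints are immediate by telescoping along the permutation order: $\f^\sigma(V) = \sum_{u \in V} \f(\{u\} \mid \{w : w \prec_\sigma u\}) = \f(V)$, and likewise $\g^\sigma(V) = \g(V)$. For the inequality constraints, fix $S \subseteq V$ and enumerate its elements $u_1 \prec_\sigma u_2 \prec_\sigma \cdots \prec_\sigma u_m$ in the order induced by $\sigma$; writing $P_j := \{w \in V : w \prec_\sigma u_j\}$ for the $\sigma$-prefix before $u_j$, note $P_j \cap S = \{u_1,\dots,u_{j-1}\}$. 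Supermodularity of $\f$ makes marginals monotone in the conditioning set, so $\f^\sigma(u_j) = \f(\{u_j\} \mid P_j) \geq \f(\{u_j\} \mid P_j \cap S)$; summing over $j$ and telescoping within $S$ yields $\sum_{u \in S} \f^\sigma(u) \geq \f(S)$. The symmetric computation with submodularity of $\g$ (marginals \emph{antitone} in the conditioning set) gives $\sum_{u \in S} \g^\sigma(u) \leq \g(S)$, and monotonicity of $\f,\g$ together with $\f(\emptyset)=\g(\emptyset)=0$ makes all coordinates nonnegative. Hence $\f^\sigma \in \mcal{B}^{\geq}_\f$ and $\g^\sigma \in \mcal{B}^{\leq}_\g$; since each base is an intersection of half-spaces and therefore convex, the convex combinations $\f^\vecp = \sum_\sigma p_\sigma \f^\sigma$ and $\g^\vecq = \sum_\sigma q_\sigma \g^\sigma$ stay in the respective bases, which is the first claim.

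For the converse I would simply invoke Fact~\ref{fact:perm_polytope}, which states that $\mcal{B}^{\geq}_\f$ is the convex hull of $\{\f^\sigma : \sigma \in \mcal{S}_V\}$ and $\mcal{B}^{\leq}_\g$ is the convex hull of $\{\g^\sigma : \sigma \in \mcal{S}_V\}$. By definition, a point in the convex hull of a finite set is a convex combination of those points, so any $x \in \mcal{B}^{\geq}_\f$ can be written as $x = \sum_\sigma p_\sigma \f^\sigma = \f^\vecp$ for some $\vecp \in \Delta(\mcal{S}_V)$, and independently any $y \in \mcal{B}^{\leq}_\g$ as $y = \g^\vecq$ for some $\vecq \in \Delta(\mcal{S}_V)$. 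Taking this pair $(\vecp, \vecq)$ gives $(x,y) = (\f^\vecp, \g^\vecq)$, as required; the independence of the two choices is exactly why the solution space is the product $\Delta(\mcal{S}_V) \times \Delta(\mcal{S}_V)$.

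I do not expect a real obstacle here once Fact~\ref{fact:perm_polytope} (Edmonds' greedy characterization of polymatroid and contrapolymatroid bases) is available: the converse is then immediate, and the forward direction is the routine telescoping verification sketched above. I would note in passing that the argument uses only supermodularity of $\f$ and submodularity of $\g$ — it does not invoke Assumption~\ref{assume:g} — and that the one genuinely nontrivial ingredient, should one wish to be self-contained, is the proof of Fact~\ref{fact:perm_polytope} itself, which goes through the correctness of the greedy algorithm together with LP duality for the (contra)polymatroid constraints.
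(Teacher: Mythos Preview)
Your proof is correct. The paper does not actually supply a proof of this fact: it is stated as a cited result from~\cite{SFMCP} and used as a black box, so there is no ``paper's own proof'' to compare against. Your write-up is a clean, self-contained derivation of the standard argument (telescoping plus super/submodularity for the forward direction, and the convex-hull characterization of Fact~\ref{fact:perm_polytope} for the converse), and your remark that Assumption~\ref{assume:g} is not needed here is accurate.
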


\noindent \emph{Locally Maximin Solution.}
Observe that we could have defined a locally maximin 
solution $(\vecp, \vecq)$ naturally as one such that the induced
$(\f^\vecp, \g^\vecq)$ is locally maximin.
However, to facilitate our proofs, we will use a stronger
locally maximin condition for a solution in 
$\Delta(\mcal{S}_V) \times \Delta(\mcal{S}_V)$.
For $v \in V$,
we denote its density induced by $(\vecp , \vecq)$
as $\rho^{(\vecp, \vecq)}(v) := \frac{\f^\vecp(v)}{\g^\vecq(v)}$.

\begin{definition}[Locally Maximin Solution]
\label{defn:local_maximin}
Given an instance $(V; \f, \g)$,
a solution $(\vecp, \vecq) \in \Delta(\mcal{S}_V)^2$
is locally maximin if for all $u, v \in V$
such that $\rho^{(\vecp, \vecq)}(u) > \rho^{(\vecp, \vecq)}(v)$
and 
$\sigma \in \mcal{S}_V$ such that $p_\sigma > 0$ or $q_\sigma > 0$,
it holds that $u \prec_\sigma v$.
\end{definition}

\begin{remark}
For supermodular $\f$ and submodular $\g$,
if $u \notin B$ and $A \subseteq B$,
then we have $\f(\{u\}|A) \leq \f(\{u\}|B)$
and $\g(\{u\}|A) \geq \g(\{u\}|B)$.
Therefore, an agent will intuitively be more satisfied
if it appears later in a permutation.

The high-level rationale for the local maximin condition
is that if $v$ is strictly less satisfied than $u$
in $(\vecp, \vecq)$, then $v$ is already treated with higher priority
than $u$ in the sense that
 $v$ must appear after $u$
in every permutation that has non-zero weight in $(\vecp, \vecq)$.
\end{remark}

\begin{fact}[Locally Maximin Condition for Two Representations]
\label{fact:local_represent}
Every locally maximin solution
$(\vecp, \vecq)$ induces a locally maximin allocation~$(\f^\vecp, \g^\vecq)$.
Conversely, for every locally maximin allocation~$(x, y)$,
there exists a locally maximin solution
$(\vecp, \vecq)$ such that $(x, y) = (\f^\vecp, \g^\vecq)$.

(However, it is possible that there is some solution~$(\vecp, \vecq)$
that is not locally maximin, but the induced allocation~$(\f^\vecp, \g^\vecq)$
is locally maximin.)
\end{fact}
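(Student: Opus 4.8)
The plan is to prove the two implications separately; the forward one is a short telescoping argument, while the converse requires choosing the permutation representation carefully rather than taking an arbitrary one.

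\emph{Solution $\Rightarrow$ allocation.} Given a locally maximin solution $(\vecp,\vecq)$, I write $x := \f^\vecp$ and $y := \g^\vecq$ and verify the locally maximin condition for $(x,y)$ directly. Fix $\rho > 0$ and set $T := \{v \in V : x_v/y_v \geq \rho\}$; the goal is $x(T) = \f(T)$ and $y(T) = \g(T)$. For any $u \in T$ and $v \notin T$ we have $\rho^{(\vecp,\vecq)}(u) = x_u/y_u \geq \rho > x_v/y_v = \rho^{(\vecp,\vecq)}(v)$, so Definition~\ref{defn:local_maximin} forces $u \prec_\sigma v$ for every $\sigma$ with $p_\sigma > 0$ or $q_\sigma > 0$; hence $T$ is an initial segment of every such $\sigma$. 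Listing $T$ in $\sigma$-order as $u_1 \prec_\sigma \cdots \prec_\sigma u_{|T|}$, the sum $\sum_i \f^\sigma(u_i)$ telescopes to $\f(T) - \f(\emptyset) = \f(T)$, and likewise $\sum_i \g^\sigma(u_i) = \g(T)$. Averaging over the supports of $\vecp$ and $\vecq$ then gives $x(T) = \f^\vecp(T) = \f(T)$ and $y(T) = \g^\vecq(T) = \g(T)$, which is the claim.

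\emph{Allocation $\Rightarrow$ solution.} Given a locally maximin allocation $(x,y)$, let $\rho_1 > \cdots > \rho_m$ be the distinct induced densities, $T_j := \{v : x_v/y_v = \rho_j\}$, and $U_j := T_1 \cup \cdots \cup T_j$ with $U_0 := \emptyset$; local maximinity gives $x(U_j) = \f(U_j)$ and $y(U_j) = \g(U_j)$ for every $j$. I will construct $(\vecp,\vecq)$ supported only on permutations that list all of $T_1$ first, then all of $T_2$, and so on (call these \emph{chain-respecting}). Any such solution is automatically locally maximin: two elements with different induced densities lie in distinct blocks $T_i$, $T_j$ with, say, $i < j$, and the element of $T_i$ precedes the element of $T_j$ in every chain-respecting permutation, which is exactly what Definition~\ref{defn:local_maximin} demands. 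To produce the blocks, observe that for $S \subseteq T_j$ the contrapolymatroid inequality $x(S \cup U_{j-1}) \geq \f(S \cup U_{j-1})$ together with the tightness $x(U_{j-1}) = \f(U_{j-1})$ yields $x(S) \geq \f(S \mid U_{j-1})$, while $x(T_j) = \f(T_j \mid U_{j-1})$; thus $x|_{T_j}$ lies in the base contrapolymatroid of the dual-modular sub-instance $(T_j;\ \f(\cdot \mid U_{j-1}),\ \g(\cdot \mid U_{j-1}))$, so by Fact~\ref{fact:perm_polytope} (applied to this sub-instance) it is a convex combination $\sum_\tau p^{(j)}_\tau\, \f(\cdot \mid U_{j-1})^\tau$ over permutations $\tau$ of $T_j$. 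The dual polymatroid inequalities give an analogous decomposition $q^{(j)}$ of $y|_{T_j}$.

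Now let $\vecp$ be the law of the concatenation $\tau_1\tau_2\cdots\tau_m$ with the $\tau_j \sim p^{(j)}$ drawn independently, and define $\vecq$ the same way from the $q^{(j)}$; both are supported on chain-respecting permutations. For $u \in T_j$ and any chain-respecting $\sigma$, the $\sigma$-predecessors of $u$ are precisely $U_{j-1}$ together with the $T_j$-predecessors of $u$, so $\f^\sigma(u) = \f(\cdot \mid U_{j-1})^{\tau_j}(u)$ depends only on the block $\tau_j$; averaging gives $\f^\vecp(u) = \sum_\tau p^{(j)}_\tau \f(\cdot \mid U_{j-1})^\tau(u) = (x|_{T_j})(u) = x_u$, and the same computation gives $\g^\vecq = y$. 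Hence $(\vecp,\vecq)$ is a locally maximin solution with $(\f^\vecp,\g^\vecq) = (x,y)$. The main obstacle is entirely in this converse: an arbitrary representation need not be locally maximin, as the parenthetical warns, so one is forced to use the representation dictated by the tight-set chain $U_0 \subsetneq \cdots \subsetneq U_m$ — equivalently, to use that $x$ and $y$ lie on the faces of $\mcal{B}^{\geq}_\f$ and $\mcal{B}^{\leq}_\g$ cut out by tightness along the whole chain, faces that are spanned by the chain-respecting permutation vertices. The level-by-level decomposition plus concatenation above is the explicit form of that fact, and the coordinate-wise check that the concatenated distributions reproduce $x$ and $y$ is the one place a small computation is required.
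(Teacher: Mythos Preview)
Your proof is correct and follows essentially the same approach as the paper's: the forward direction uses the telescoping observation that any permutation placing the density level sets in order yields $\sum_{u \in S_i} \f^\sigma(u) = \f(S_i \mid S_{<i})$ (and likewise for $\g$), and the converse decomposes $(x,y)$ block-by-block into the sub-polymatroids $(T_j;\,\f(\cdot\mid U_{j-1}),\,\g(\cdot\mid U_{j-1}))$, applies the permutation representation (Fact~\ref{fact:perm_polytope2}) on each block, and takes the product distribution. Your write-up is more explicit than the paper's outline --- in particular you spell out why $x|_{T_j}$ and $y|_{T_j}$ land in the correct sub-bases and verify coordinate-wise that the concatenated distributions reproduce $(x,y)$ --- but the underlying argument is the same.
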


\begin{proof}
We outline an argument for this simple fact.
For the first direction,
observe that given a partition $V = \cup_{i=1}^k S_i$,
for any permutation $\sigma \in \mcal{S}_V$
such that the parts obey $S_1 \prec_\sigma S_2 \prec_\sigma \cdots
\prec_\sigma S_k$ (but elements within $S_i$ can 
be ordered arbitrarily),
we have for each $i$,  $\sum_{u \in S_i} \f^\sigma(u) = \f(S_i | S_{<i})$
and $\sum_{u \in S_i} \g^\sigma(u) = \g(S_i | S_{<i})$.

For the second direction, given a locally maximin $(x, y)$,
we partition $V = \cup_{i=1}^k V_i$ such that
elements in each part $V_i$ have the same induced density,
where the indices are sorted according to the densities
such that $V_1$ has the largest density.
Because $(x, y)$ is locally maximin, we can decompose $(V; \f, \g)$ into $k$ subinstances,
where the $i$-th instance
corresponds to $(V_i; \f(\cdot | V_{<i}), \g(\cdot | V_{<i}))$
and the restricted $(x|_{V_i}, y|_{V_i})$ is a feasible allocation.
Fact~\ref{fact:perm_polytope2} gives a corresponding pair $(\vecp_i, \vecq_i) \in \Delta(\mcal{S}_{V_i})^2$
of distributions that will induce $(x|_{V_i}, y|_{V_i})$.
Taking the product distribution $(\otimes_{i=1}^k \vecp_i, \otimes_{i=1}^k \vecq_i)$, which is a locally maximin solution, will induce $(x, y)$, as required.
\end{proof}

\noindent \textbf{Convex Program Formulation.}
As mentioned in the introduction,
it is straightforward to generalize the convex program
formulation~\cite{DBLP:conf/ipco/Nagano07,Panditi2016} for linear~$\g$
to the dual-modular setting.  We recall the notion of convex divergence.

\begin{definition}[$\vartheta$-Divergence]
\label{defn:div}
Given a convex function~$\vartheta$,
the $\vartheta$-divergence between two
distributions $x, y$ in $\Delta(V)$ is defined as:

$$\D_\vartheta(x \| y) := \sum_{u \in V} y_u \cdot \vartheta(\frac{x_u}{y_u}),$$

\end{definition}

\noindent \textbf{Technical Remarks.}
Typically, a divergence is used to compare two distributions;
hence, we can consider the normalization
$\f(V) = \g(V) = 1$.  Furthermore,
in the literature, a constant is added to convex function
to ensure $\vartheta(1) = 0$, but it is not necessary in our setting.
When $y$ has zero coordinates,
we use the convention that (i) $0 \cdot \vartheta(\frac{0}{0}) = 0$, 
and (ii) for $a > 0$, $0 \cdot \vartheta(\frac{a}{0}) = a \lim_{x \to + \infty} \frac{\vartheta(x)}{x}$.  But, due to Assumption~\ref{assume:g},
we shall only consider $y$ with non-zero coordinates.


\begin{definition}[Dual-Modular Convex Program]
\label{defn:general_convex}
Given a dual-modular instance $(V; \f,\g)$ as in Definition~\ref{defn:input} and 
a convex function $\vartheta: \mathbb{R}_{\geq 0}\rightarrow \mathbb{R}$,
define the following convex program:

\begin{align*}
\min\quad \D_\vartheta(x \| y)  &\\
(x, y) \in \mcal{B}^{\geq}_\f \times \mcal{B}^{\leq}_\g &
\end{align*}

In view of Fact~\ref{fact:perm_polytope2},
we will have an equivalent formulation in terms of permutation distributions:

\begin{align*}
 \mathsf{CP:}\min\quad \Phi_\vartheta(\vecp, \vecq) &:= \D_\vartheta(\f^{\vecp} \| \g^{\vecq})  \\
(\vecp, \vecq) &\in \Delta(\mathcal{S}_V)^2 
\end{align*}

We also denote the induced density vector $\rho^{(\vecp, \vecq)} \in \R^V$,
where $\rho^{(\vecp, \vecq)}(u) = \frac{\f^{\vecp}_u}{\g^{\vecq}_u}$,
for $u \in V$.

When the context is clear, we may drop the subscript
and write the objective function as $\Phi$.
Moreover, we may overload the notation
and also write $\Phi(x, y) = \D_\vartheta(x \| y)$.

\ignore{
We also denote the induced density vector $\rho^{(\vecp, \vecq)} \in \R_*^V$,
where $\R_* := \R \cup \{\bot, +\infty\}$,
as follows:

for $u \in V$,
 $\rho^{(\vecp, \vecq)}(u) =  
\begin{cases}
\bot, & \text{if } \f^{\vecp}_u = \g^{\vecq}_u = 0; \\
\frac{\f^{\vecp}_u}{\g^{\vecq}_u}, & \text{otherwise.}
\end{cases}
$

We denote $V^{(\vecp, \vecq)} := \{u \in V: \rho^{(\vecp, \vecq)}(u) \neq \bot\}$.
}
\end{definition}

\begin{remark}
The objective function in Definition~\ref{defn:general_convex}
is convex because for any $c, d \in \R^m$,
the function $(x, y) \in \R^m \times \R^m \mapsto \langle d, y \rangle \cdot \vartheta(\frac{\langle c, x \rangle}{\langle d, y \rangle })$
is convex due
to property of \emph{perspective functions}~\cite{CVOP}.
\end{remark}

\noindent \textbf{Extra Equality Constraint.}
We later show that the minimum of the convex program $\mathsf{CP}$ in 
Definition~\ref{defn:general_convex}
can be attained with an extra equality constraint $\vecp = \vecq$.
Hence, when there is no ambiguity, we will overload
the notation $\Phi(\vecp) :=  \Phi(\vecp, \vecp)$
and $\rho^{\vecp} := \rho^{(\vecp, \vecp)}$.

\subsection{Iterative Approximation Methods}

\ignore{
\noindent \emph{Exact Decomposition vs Approximation.}
As mentioned in~~\cite{DBLP:conf/nips/HarbQC22},
the generalized densest subset problem can be solved in polynomial time
by submodular function optimization.
Therefore, the exact density decomposition in Definition~\ref{defn:density_decomp}
can also be found in polynomial time, while some special cases
can be solved in near-linear time via maximum flow~\cite{DBLP:conf/www/TattiG15}.
However, for large instances, it is more practical to use iterative approaches~\cite{DBLP:conf/nips/HarbQC22}
to obtain an approximation.  
}

\ignore{
\noindent \textbf{Abstract Iterative Procedure.}
In view of Fact~\ref{fact:convex_density},
first-order iterative methods for convex programs
have inspired several
iterative procedures
to approximate the density vector in Definition~\ref{defn:density_decomp}
which are instantiations of the following abstract procedure.
(For notation simplicity, we focus on the special case $\g(S) = |S|$.)
}

We first outline the Frank-Wolfe variants described in~\cite{DBLP:conf/esa/HarbQC23}
that assume a linear~$\g$.
For notation simplicity, we illustrate the methods under the special case $\g(S) = |S|$.

\begin{algorithm}[H]
\caption{Abstract Frank-Wolfe Procedure}\label{alg:abstract}
\begin{algorithmic}[1]
\State \textbf{Input:} Supermodular $\f : 2^V \to \R$;
step size $\gamma: \Z_{\geq 0} \to [0,1]$;
number $T$ of iterations.

\State    \text{\textbf{Initialize}: } Pick arbitrary $x^{(0)} \in \mcal{B}^{\geq}_\f $.
 
\For{ $k \gets 0 \text{ to } T-1$}

  
		\State $d^{(k+1)}\gets \textsc{Abstract-Greedy}(\f, \gamma_k, x^{(k)}) \in \mcal{B}^{\geq}_\f $
     \State $x^{(k+1)}\gets (1 - \gamma_k) \cdot x^{(k)}+\gamma_k \cdot d^{(k+1)}$ \label{ln:update}
 
\EndFor
\State \text{\textbf{return: $x^{(T)}$} }  
\end{algorithmic}
\end{algorithm}

\begin{remark}
Because $\f$ is supermodular,
if $u \notin B$ and $A \subseteq B$,
then we have $\f(\{u\}|A) \leq \f(\{u\}|B)$.
Therefore, if we modify the permutation~$\sigma$ by moving some~$u$ earlier,
the marginal contribution $\f^\sigma(u)$ will either drop or stay the same.

The high-level rationale for 
\textsc{Abstract-Greedy}$(\f, \gamma_k, x^{(k)})$
is to pick $\f^\sigma$ from a permutation~$\sigma$ such that
elements~$u$ with higher value $x^{(k)}(u)$ will tend to appear earlier in~$\sigma$.
\end{remark}

To instantiate a specific iterative procedure from Algorithm~\ref{alg:abstract},
it suffices to specify (i)~the \textsc{Abstract-Greedy} procedure
and (ii)~the step size function~$\gamma$.

\ignore{
\noindent \textbf{Select from $\mcal{B}^{\geq}_\f $ using Permutation.}
For reasons that will be apparent later,
we will identify the vertices of $\mcal{B}^{\geq}_\f$
with permutations of $V$.
We use $\mcal{S}_V$ to denote the collection 
of permutations on $V$.  Given $\sigma \in \mcal{S}_V$,
we use $u \prec_\sigma v$ to mean $u$ appears earlier than $v$
in the permutation~$\sigma$.
Given a set function $h: 2^V \to \R$,
we denote $h^\sigma \in R^V$ as the vector
defined by: 

$$h^\sigma(u) := h(\{u\} | \{w\in V: w\prec_{\sigma} u\}), \forall u \in V.$$

\begin{fact}[~\cite{DBLP:conf/soda/ChekuriQT22,SFMCP}]
\label{fact:perm_polytope}
The vectors $\f^\sigma$ over $\sigma \in \mcal{S}_V$
are the vertices of $\mcal{B}^{\geq}_\f = \mathsf{conv}(\{\f^\sigma:   \sigma \in \mcal{S}_V\})$.
\end{fact}
}

\ignore{
In Algorithm~\ref{alg:abstract},
if each vector returned by \textsc{Abstract-Greedy}
has the form $\f^\sigma$ for some permutation~$\sigma \in \mcal{S}_V$,
then the iterative procedure actually implicitly maintains
a distribution of permutations.
With this insight and Fact~\ref{fact:perm_polytope}, we will describe an equivalent convex program from the perspective
of permutation distributions.

\noindent \textbf{Notation.}
We denote $\Delta(\mcal{S}_V) := \{ \vecp = (p_\sigma \geq 0)_{\sigma \in \mcal{S}_V}:
\sum_{\sigma \in \mcal{S}_V} p_\sigma = 1  \}$.
Given a set function $h: 2^V \to \R$,
we naturally extend the notation $h^\sigma \in \R^V$
to $h^{\vecp} = \sum_{\sigma \in \mcal{S}_V} p_\sigma \cdot h^\sigma \in \R^V$.
}

Because of Fact~\ref{fact:convex_program}, 
we will apply Frank-Wolfe to the following objective function:


$$\Phi(x) = \sum_{u \in V} x_u^2, x \in \mcal{B}^{\geq}_\f.$$

\noindent \textbf{Frank-Wolfe Variant.}
In Algorithm~\ref{alg:abstract},
one should choose
the step size function $\gamma_k := \frac{2}{k+2}$
and implement \textsc{Abstract-Greedy} with the following
\emph{gradient oracle}:

$$d^{(k+1)} \gets \arg \min_{x \in \mcal{B}^{\geq}_\f} \langle x, \nabla \Phi(x^{(k)}) \rangle
=  2 \langle x, x^{(k)} \rangle .$$

As mentioned in~\cite{DBLP:conf/esa/HarbQC23},
this can easily be achieved by sorting the elements in~$v \in V$ according to their values given
by $x^{(k)}_v$.

\begin{fact}
Given $x^{(k)} \in \mcal{B}^{\geq}_\f$,
suppose $\sigma \in \mcal{S}_V$ is a permutation obtained by
sorting the elements~$u$ in  $V$ in non-increasing order
of $x^{(k)}(u)$, where ties are resolved arbitrarily.
Then, $\f^\sigma \in \arg \min_{x \in \mcal{B}^{\geq}_\f}  \langle x, x^{(k)} \rangle$.
\end{fact}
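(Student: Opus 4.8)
The plan is to turn this linear minimization over the base contrapolymatroid into a purely combinatorial comparison among the permutation vertices, and then to run a bubble-sort–style exchange argument powered by the supermodularity of $\f$. First, by Fact~\ref{fact:perm_polytope}, $\mcal{B}^{\geq}_\f$ is the convex hull of $\{\f^\tau : \tau \in \mcal{S}_V\}$, so the linear functional $x \mapsto \langle x, x^{(k)} \rangle$ attains its minimum over $\mcal{B}^{\geq}_\f$ at one of these extreme points. Hence it suffices to show that a permutation $\sigma$ that lists the elements of $V$ in non-increasing order of $x^{(k)}$ minimizes $\langle \f^\tau, x^{(k)} \rangle$ over all $\tau \in \mcal{S}_V$.

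Write $c := x^{(k)}$. The core step is a single adjacent transposition. Suppose $\tau$ has two consecutive elements $u \prec_\tau v$ with $c_u < c_v$, and let $\tau'$ be obtained from $\tau$ by swapping $u$ and $v$; let $A$ be the (common) set of elements appearing before both $u$ and $v$. The vectors $\f^\tau$ and $\f^{\tau'}$ agree on every coordinate except $u$ and $v$, where $\f^\tau(u) = \f(\{u\}|A)$, $\f^\tau(v) = \f(\{v\}|A\cup\{u\})$, $\f^{\tau'}(v) = \f(\{v\}|A)$, $\f^{\tau'}(u) = \f(\{u\}|A\cup\{v\})$. Decomposing $\f(\{u,v\}|A)$ along the two orders of adding $u$ and $v$ gives $\f(\{u\}|A) + \f(\{v\}|A\cup\{u\}) = \f(\{v\}|A) + \f(\{u\}|A\cup\{v\})$, so with $\delta := \f(\{u\}|A\cup\{v\}) - \f(\{u\}|A)$ we get $\f^\tau(u) - \f^{\tau'}(u) = -\delta$ and $\f^\tau(v) - \f^{\tau'}(v) = \delta$. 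Supermodularity of $\f$ (marginals are monotone in the conditioning set) gives $\delta \geq 0$, hence $\langle \f^\tau, c\rangle - \langle \f^{\tau'}, c\rangle = (c_v - c_u)\,\delta \geq 0$: moving the heavier element earlier never increases the objective.

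Finally, any permutation $\tau$ can be brought to a non-increasing ordering $\sigma$ of $c$ by a finite sequence of such adjacent transpositions, each removing an inversion, and along the way $\langle \f^\tau, c\rangle$ is non-increasing; the ties only produce equalities, so the arbitrary tie-breaking is immaterial. Therefore $\langle \f^\sigma, c\rangle \leq \langle \f^\tau, c\rangle$ for every $\tau \in \mcal{S}_V$, so $\f^\sigma$ minimizes the linear functional over all vertices, and hence over $\mcal{B}^{\geq}_\f$, which is the claim.

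This is the classical greedy fact for (contra)polymatroids, so there is no real obstacle; the only point requiring care is the sign bookkeeping in the transposition identity. Because $\f$ here is supermodular (as opposed to the submodular case underlying the usual greedy for \emph{maximizing} over a base polymatroid), it is the \emph{larger}-weight element that must be moved \emph{earlier}, which is exactly why the correct order is non-increasing rather than non-decreasing; once this direction is fixed, the inequality $\delta \geq 0$ makes the exchange argument go through.
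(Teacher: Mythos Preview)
Your proof is correct. The paper does not actually supply a proof of this Fact; it is stated as a known classical result (the greedy algorithm for linear optimization over a base contrapolymatroid), with a pointer to~\cite{DBLP:conf/esa/HarbQC23}, and the more general Fact~\ref{fact:dotproduct} later in Section~\ref{sec:FW} is likewise stated without proof. Your bubble-sort exchange argument is the standard proof and the sign bookkeeping is right: the telescoping identity $\f^\tau(u)+\f^\tau(v)=\f^{\tau'}(u)+\f^{\tau'}(v)$ together with supermodularity gives $\delta\ge 0$, and then $(c_v-c_u)\delta\ge 0$ shows the swap is non-increasing.
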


\noindent \emph{Convergence Rate.}
In terms of the number~$T$ of iterations, it is known that Frank-Wolfe
can produce a convergence rate of $\Phi(\rho^{(T)}) - \Phi(\rho^*) \leq O(\frac{1}{T})$.

\noindent \textbf{\textsc{Greedy}$++$ Variant~\cite{DBLP:conf/esa/HarbQC23}.}
In view of line~\ref{ln:update} of Algorithm~\ref{alg:abstract},
one may want to choose a permutation based on the result of the update.
The \textsc{Greedy}$++$ variant chooses $\gamma_k := \frac{1}{k+1}$,
and the following way to pick a permutation $\sigma$ to return $\f^\sigma$ for the
\textsc{Abstract-Greedy} step.

\begin{algorithm}[H]
\caption{\textsc{PermGreedy}++}\label{alg:permgreedy}
\begin{algorithmic}[1]
\State \textbf{Input:} Supermodular $\f : 2^V \to \R$;
step size $\gamma_k \in [0,1]$;
current solution $x^{(k)} \in \mcal{B}^{\geq}_\f$.

\State \textbf{Initialize} $\sigma \gets$ empty list;   $W \gets V$
\While{ $W \neq \emptyset$}

 \State $u \gets \text{arg}\min\limits_{u \in W}
        \{  (1 - \gamma_k) \cdot x^{(k)}(u) + \gamma_k \cdot \f(\{ u \} | W - u)   \}$
     
		\State Insert $u$ at the beginning of the list: $\sigma \gets (u \oplus \sigma)$ 
		
    \State $W \gets W-u$
 
\EndWhile
\State \text{\textbf{return} permutation}  $\sigma \in \mcal{S}_V$
\end{algorithmic}
\end{algorithm}

\noindent \emph{Convergence Rate.}
While \textsc{Greedy}$++$ seems to perform better empirically than
the basic Frank-Wolfe variant,
currently only a slightly worse convergence rate in terms of $T$ can be proved:
$\Phi(\rho^{(T)}) - \Phi(\rho^*) \leq O(\frac{\log T}{T})$.

\ignore{
\begin{minipage}[t]{0.5\textwidth}
\begin{algorithm}[H]
\caption{\scriptsize {WEIGHTED-SUPERGREEDY ~\cite{DBLP:conf/esa/HarbQC23}}}\label{wspgalg}
\begin{algorithmic}[1]
\State \footnotesize  \text{{\footnotesize  Input}: } \text{\footnotesize  Supermodular $f:2^{V}\rightarrow \mathbb{R}_{\geq 0},w(u)$ for $u\in V$}
\State \footnotesize  $V' \gets V$
\State \footnotesize  \textbf{Initialize} $\hat{d}(u)=0$ for all $u\in V$.
\While{ $|V'|>1$}

 \State $u\gets \text{arg}\min\limits_{u\in V'}\{w(u)+{f(V')}-{f(V'-v)}\}$
     \State $\hat{d}(u)\gets {f(V')}-{f(V'-v)}$
    \State $V'\gets V'-u$
 
\EndWhile
\State \text{\textbf{return: $\hat{d}$} }  
\end{algorithmic}
\end{algorithm}
\end{minipage}
\hfill
\begin{minipage}[t]{0.5\textwidth}
\begin{algorithm}[H]
\caption{\scriptsize  {SUPERGREEDY++~\cite{DBLP:conf/esa/HarbQC23}}}\label{spgalg}
\begin{algorithmic}[1]
\State \footnotesize \text{{Input}: } \text{$(f,T)$}
\State \footnotesize   \text{\textbf{Initialize}: } $b^{(0)}\gets \text{\tiny {WEIGHTED-SUPERGREEDY}}(V,\textbf{0})$
 
\For{ $k\gets 0 \text{ to } T-1$}

 \State $\gamma \gets \frac{1}{k+1}$
     \State $d^{(k+1)}\gets \text{\tiny{WEIGHTED-SUPERGREEDY}}(V,(k+1)b^{(k)}) $
     \State $b^{(k+1)}\gets (1-\gamma)b^{(k)}+\gamma d^{(k+1)}$
 
\EndFor
\State \text{\textbf{return: $b^{(T)}$} }  
\end{algorithmic}
\end{algorithm}
\end{minipage}
}

\section{Analysis of Dual-Modular Convex Program}
\label{sec:technical}

We formalize the arguments outlined in the introduction.
The main technical results
are as follows. 

\begin{enumerate}
\item
From Lemma~\ref{lemma:maximin_opt}, we conclude that any optimal allocation $(x, y)$ to the
convex program in Definition~\ref{defn:general_convex} with 
strictly convex~$\vartheta$ must be locally maximin.

\item
From Lemma~\ref{lemma:maximin_density},
we conclude that any locally maximin allocation $(x, y)$ must induce the density
vector $\rho^*$ from the density decomposition, and all such locally maximin allocations have the same objective value in the convex program.
\end{enumerate}

Together with the equivalence between lexicographically optimality
and local maximin condition proved in Fact~\ref{fact:equiv_fair},
we have proved Theorem~\ref{th:main_market}.

\subsection{Structural Properties of Optimal Solutions}


\noindent \textbf{Notation.} Given
disjoint subsets $A$ and $B \subseteq V$
and a permutation $\sigma \in \mcal{S}_V$,
the notation $A \prec_\sigma B$ means
that for all $u \in A$ and $v \in B$,
$u \prec_\sigma v$.
Similarly, given $\vecp \in \Delta(\mcal{S}_V)$,
the notation 
$A \prec_{\vecp} B$ means that
$p_\sigma > 0$ implies that 
$A \prec_\sigma B$;
moreover, given $(\vecp, \vecq) \in \Delta(\mcal{S}_V)^2$,
the notation $A \prec_{(\vecp, \vecq)} B$
means that both $A \prec_{\vecp} B$ and 
$A \prec_{\vecq} B$ hold.

We remark that we need Assumption~\ref{assume:g}
to ensure that any solution $(\vecp, \vecq)$
can induce a density vector $\rho^{(\vecp, \vecq)} \in \R^V$,
because $\g^{\vecq}$ has non-zero coordinates.

\begin{lemma}[Existence of Locally Maximin Optimal Solutions]
\label{lemma:maximin_opt}
Suppose $(\vecp, \vecq) \in \Delta(\mcal{S}_V)^2$ is an optimal solution
to the convex program in Definition~\ref{defn:general_convex}.
Then, there exists a locally maximin optimal solution $(\widehat{\vecp}, \widehat{\vecq})$.

If $\vartheta$ is strictly convex,
then in $(\widehat{\vecp}, \widehat{\vecq})$,
the induced allocation and densities are the same as before,
i.e.,
$(\f^{\vecp}, \g^{\vecq}) =  (\f^{\widehat{\vecp}}, \g^{\widehat{\vecq}})$
and
  $\rho^{({\vecp}, {\vecq})} = \rho^{(\widehat{\vecp}, \widehat{\vecq})} \in \R^V$;
if we remove the strict convexity requirement for $\vartheta$ in the objective function,
then the resulting locally maximin optimal solution
may not preserve the induced densities.

Moreover, the same result holds for the convex program with the extra equality constraint $\vecp = \vecq$. For instance, if $\vartheta$ is strictly convex, we can also ensure the locally optimal solution $\widehat{\vecp} = \widehat{\vecq}$ preserves the allocation and induced densities.

%
%
\end{lemma}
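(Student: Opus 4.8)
The plan is to run a ``bubble sort'' on the supports of the two permutation distributions, and to handle the non-strictly-convex case by reducing it to the strictly convex one. I would start from an optimal $(\vecp,\vecq)$ that is not locally maximin, so by Definition~\ref{defn:local_maximin} there are $u,v\in V$ with $\rho^{(\vecp,\vecq)}(u)>\rho^{(\vecp,\vecq)}(v)$ and a permutation $\sigma$ with $p_\sigma>0$ (the case $q_\sigma>0$ is entirely symmetric, with $\g,\vecq,y$ in place of $\f,\vecp,x$) in which $v\prec_\sigma u$. Walking along $\sigma$ from $v$ to $u$, the induced densities cannot be non-increasing the whole way, so there is an \emph{adjacent} pair, say $v'$ immediately followed by $u'$, with $\rho^{(\vecp,\vecq)}(u')>\rho^{(\vecp,\vecq)}(v')$; this is the pair I repair. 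Let $\sigma'$ be $\sigma$ with $u'$ and $v'$ transposed, and $A=\{w:w\prec_\sigma v'\}$. Then $\f^{\sigma'}$ and $\g^{\sigma'}$ agree with $\f^\sigma$ and $\g^\sigma$ off $\{u',v'\}$, their sums over $\{u',v'\}$ are unchanged, and by super-/submodularity (the Remark after Definition~\ref{defn:local_maximin}) $\f^\sigma(u')-\f^{\sigma'}(u')=\f(u'|A\cup\{v'\})-\f(u'|A)=:a\ge 0$ and $\g^{\sigma'}(u')-\g^\sigma(u')=\g(u'|A)-\g(u'|A\cup\{v'\})=:b\ge 0$, with the opposite changes on coordinate $v'$.

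The key step is to examine the first-order effect of transferring an $\epsilon$-fraction of weight from $\sigma$ to $\sigma'$ in $\vecp$ (and, for the $\vecp=\vecq$ variant, simultaneously in $\vecq$). Because $\Phi_\vartheta$ is a sum of perspective terms $y_w\,\vartheta(x_w/y_w)$, the transfer shifts reward from the higher-density coordinate $u'$ to the lower-density coordinate $v'$ (and, when it also acts on $\vecq$, shifts cost from $v'$ to $u'$), which by convexity of $\vartheta$ never increases $\Phi_\vartheta$ and, if $\vartheta$ is strictly convex, strictly decreases it whenever $a>0$ (respectively, whenever $a>0$ or $b>0$ in the $\vecp=\vecq$ case). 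Optimality of $(\vecp,\vecq)$ then forces $a=0$ (and $b=0$ when $\vecp=\vecq$): the transposition is \emph{allocation-degenerate}, $\f^{\sigma'}=\f^\sigma$ (and $\g^{\sigma'}=\g^\sigma$). Hence I may move the \emph{entire} weight $p_\sigma$ from $\sigma$ to $\sigma'$ without changing $\f^\vecp$ (or $\g^\vecp$), leaving the allocation, the induced density vector, and the optimal objective value untouched, while the number of density-inversions carried by that permutation drops by one. Iterating this inside each of the finitely many permutations of $\supp(\vecp)$ is exactly bubble sort and terminates; then repeating for $\supp(\vecq)$ against $\g$ (the densities being fixed throughout) yields $(\widehat\vecp,\widehat\vecq)$ with the same allocation, densities, and objective in which every positively-weighted permutation lists every strictly-higher-density element before every strictly-lower-density one --- precisely the condition of Definition~\ref{defn:local_maximin}. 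The $\vecp=\vecq$ assertion comes out the same way: each repair is forced to be degenerate for $\f$ and $\g$ at once, so the bubble sort preserves $\widehat\vecp=\widehat\vecq$.

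For a general, possibly non-strictly convex $\vartheta$, I would not try to preserve densities; instead I reduce to the strictly convex case. Applying the above to $\vartheta_0(t):=t^2$ already shows a locally maximin solution $(\widehat\vecp,\widehat\vecq)$ exists at all (with $\widehat\vecp=\widehat\vecq$ if desired), and I then show this solution is also optimal for $\Phi_\vartheta$. Using the standard hockey-stick decomposition $\vartheta(t)=c_0+c_1 t+\int_{(0,\infty)}\max\{t-s,0\}\,d\mu(s)$ for constants $c_0,c_1$ and a nonnegative measure $\mu$, together with the normalization $\f(V)=\g(V)=1$, every $(x,y)\in\Bfg$ satisfies $\D_\vartheta(x\|y)=c_0+c_1+\int_{(0,\infty)}\bigl(\textstyle\sum_w\max\{x_w-s\,y_w,0\}\bigr)\,d\mu(s)$; since $\sum_w\max\{x_w-s\,y_w,0\}=\max_{S\subseteq V}\bigl(\sum_{w\in S}x_w-s\sum_{w\in S}y_w\bigr)\ge\max_S(\f(S)-s\,\g(S))$ for all feasible $(x,y)$, while Lemma~\ref{lemma:main_dual_contracts} makes this an equality for the locally maximin $(\f^{\widehat\vecp},\g^{\widehat\vecq})$, it follows that $(\widehat\vecp,\widehat\vecq)$ minimizes $\Phi_\vartheta$. (Alternatively one could perform the same transpositions directly on the given $(\vecp,\vecq)$, moving only enough weight to keep $\Phi_\vartheta$ at its optimum and letting $\rho^{(\vecp,\vecq)}$ drift towards equalization --- the route hinted at in the introduction --- but the progress measure is fussier there, since ``out-of-order'' is then measured against a moving density vector.)

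The genuine obstacle, I expect, is the allocation-degeneracy lemma in the strictly convex case: turning ``optimality together with strict convexity'' into ``$a=0$ (and $b=0$)'' rigorously with only convexity (not differentiability) of $\vartheta$ assumed, in particular handling a boundary density $\rho^{(\vecp,\vecq)}(v')=0$ (which Assumption~\ref{assume:g} allows) and the coupled perturbation of $\f^\vecp$ and $\g^\vecp$ in the $\vecp=\vecq$ variant. Everything afterwards is routine ``bubble sort removes inversions'' bookkeeping, Fact~\ref{fact:perm_polytope2}, and Definition~\ref{defn:local_maximin}.
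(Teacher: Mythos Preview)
Your bubble-sort strategy for the strictly convex case is exactly the paper's exchange argument: locate an adjacent out-of-order pair $v'\prec_\sigma u'$ with $\rho(u')>\rho(v')$ in some $\sigma$ with positive weight, transpose, and use strict convexity plus optimality to force the transposition to be allocation-degenerate ($a=b=0$ in your notation, $F=G=0$ in the paper's). For the ``genuine obstacle'' you flag, the paper dispatches it without any differentiability assumption on $\vartheta$: writing the four ratios $a=F_v/G_v<b<c<d=F_u/G_u$ before and after an $\epsilon$-shift, one checks the affine identity $\mu a+(1-\mu)d=\lambda b+(1-\lambda)c$ for appropriate $0<\lambda<\mu<1$, so strict convexity alone gives $\mu\vartheta(a)+(1-\mu)\vartheta(d)>\lambda\vartheta(b)+(1-\lambda)\vartheta(c)$, i.e.\ $\phi(0)>\phi(\epsilon)$, contradicting optimality unless $F=G=0$. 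The boundary $\rho(v')=0$ is harmless since Assumption~\ref{assume:g} forces $G_v>0$.

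Where you diverge is the non-strictly-convex case. The paper stays with the same transposition and simply notes that ordinary convexity gives the non-strict $\phi(0)\ge\phi(\epsilon)$, hence equality by optimality; one can still transfer all weight from $\sigma$ to $\widehat\sigma$ and remain optimal, accepting that the induced densities of $u',v'$ may drift. You mention this as an alternative but (correctly) worry that the progress measure is fussier against a moving density vector; your primary route instead first manufactures a locally maximin solution via $\vartheta_0(t)=t^2$ and then argues it minimizes $\D_\vartheta$ for every convex $\vartheta$ through the hockey-stick integral representation together with Lemma~\ref{lemma:main_dual_contracts}. This is a genuinely different (and valid) path: there is no circularity, since the proof of Lemma~\ref{lemma:main_dual_contracts} needs only Lemma~\ref{lemma:maximin_density} and the \emph{existence} of a locally maximin solution, both of which you have already secured via the $t^2$ case. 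In fact your argument previews the paper's Section~\ref{sec:general_div}, where the same hockey-stick route is used to upgrade to all data-processing divergences. The trade-off is that the paper's direct argument is shorter and self-contained, whereas yours must justify the representation $\vartheta(t)=c_0+c_1 t+\int_{(0,\infty)}(t-s)_+\,d\mu(s)$, which needs care when $\vartheta'_+(0)=-\infty$ (as for $\vartheta(t)=-\log t$).
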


\begin{proof}
We first prove the case for the convex program with the extra equality constraint $\vecp = \vecq$, whose proof is actually more general.
We use a standard exchange argument to gradually transform
the given optimal~$\vecp$ into an optimal solution with the desired properties.

Suppose there exists $p_\sigma > 0$
and adjacent $v \prec_\sigma u$ such that $\rho^{\vecp}_u >  \rho^{\vecp}_v$.
Let $\widehat{\sigma}$ be the permutation
obtained from $\sigma$ by swapping the positions of $v$ and $u$,
while the positions of all other elements~$w \notin \{v, u\}$ remain unchanged;
note that $\f^\sigma_w = \f^{\widehat{\sigma}}_w$
and $\g^\sigma_w = \g^{\widehat{\sigma}}_w$.

The idea is to consider, for small $\epsilon > 0$,
another solution $\widehat{\vecp}(\epsilon)$,
defined by:

 $q_\sigma(\epsilon) = p_\sigma - \epsilon$
and $q_{\widehat{\sigma}}(\epsilon) =  p_{\widehat{\sigma}} + \epsilon$,
where the coordinates for other permutations are unchanged.

Next, we simplify the notation.  
Denote $F_u := \f^{\vecp}_u$, $F_v := \f^{\vecp}_v$,
$G_u := \g^{\vecp}_u$ and $G_v := \g^{\vecp}_v$.

Moreover, $R_u := \frac{F_u}{G_u} > R_v := \frac{F_v}{G_v}$.

Define $F := \f^\sigma_u - \f^{\widehat\sigma}_u =
 \f^{\widehat\sigma}_v - \f^\sigma_v \geq 0$ and
$G := \g^{\widehat\sigma}_u - \g^\sigma_u = 
\g^\sigma_v - \g^{\widehat\sigma}_v \geq 0$,
where the equalities hold because only $u$ and $v$ are swapped
to change from $\sigma$ to $\widehat\sigma$,
and the inequalities hold due to the supermodularity of $\f$
and the submodularity of $\g$.

Observe that in the objective function $\Phi$ in Definition~\ref{defn:general_convex},
the contribution to the sum from vertices $w \notin \{u ,v\}$ does not change.
Hence, we will focus on the terms related to $u$ and $v$ only
in $\Phi(\widehat{\vecp}(\epsilon))$.

Using the simplified notation, 
observe that: 

$\f^{\widehat{\vecp}(\epsilon)}_u = F_u - \epsilon F$,
$\f^{\widehat{\vecp}(\epsilon)}_v = F_v + \epsilon F$,

$\g^{\widehat{\vecp}(\epsilon)}_u = G_u + \epsilon G$,
$\g^{\widehat{\vecp}(\epsilon)}_v = G_v - \epsilon G$.

Hence, we define the function, for sufficiently small $\epsilon \geq 0$,

$$\phi(\epsilon) := (G_v - \epsilon G) \cdot \vartheta(\frac{F_v + \epsilon F}{G_v - \epsilon G}) + (G_u + \epsilon G) \cdot \vartheta(\frac{F_u - \epsilon F}{G_u + \epsilon G}).$$

Our goal is to show that $F = G = 0$.  Suppose at least one of $F$ and $G$ is positive.  Then, for sufficiently small $\epsilon > 0$, we have the following
strict inequalities:

$$a := \frac{F_v}{G_v} < b := \frac{F_v + \epsilon F}{G_v - \epsilon G}
< c:= \frac{F_u - \epsilon F}{G_u + \epsilon G} < d := \frac{F_u}{G_u},$$

and  $$1 > \mu := \frac{G_v}{G_v + G_u} > \lambda := \frac{G_v - \epsilon G}{G_v + G_u} > 0.$$

Check that $\mu a + (1 - \mu) d = \lambda b + (1 - \lambda) c$.
Then, strict convexity of $\vartheta$ implies
that:

 $$\mu \vartheta(a) + (1 - \mu) \vartheta(d) > 
\lambda \vartheta(b) + (1 - \lambda) \vartheta(c),$$

which is equivalent to $\phi(0) > \phi(\epsilon)$.  But, this
contradicts $\phi(0) \leq \phi(\epsilon)$, because $\vecp$ is an optimal solution.

\ignore{
Note that because $\vecp$ is an optimal solution,
$\phi'(0) \geq 0$.  We shall show that equality actually holds.
Elementary calculus gives the following:

$\phi'(0) = G \cdot (\zeta(R_u) - \zeta(R_v)) - F \cdot(\vartheta'(R_u) - \vartheta'(R_v))$,

where $\zeta(t) := \vartheta(t) - t \cdot \vartheta'(t)$ is a strictly decreasing function.
This is because, for $0 \leq x < y$,
$\vartheta(y) - \vartheta(x) <  (y - x) \vartheta'(y)  \leq
y \vartheta'(y) - x \vartheta'(x)$.
Note that the first strict inequality follows from the mean value theorem and the strict convexity
of $\vartheta$.

Note that because $R_u > R_v \geq 0$,
we have $\zeta(R_u) - \zeta(R_v) < 0$ and $\vartheta'(R_u) - \vartheta'(R_v) > 0$.
Hence, $\phi'(0) \geq 0$ implies that $F = G = 0$, 
}

This  means that $F = G = 0$ and
$\phi(\epsilon) = G_u \cdot \vartheta(R_u) + G_v \cdot \vartheta(R_v)$
does not change as $\epsilon$ increases from 0.

From this we conclude that we can transfer all the weight from $\sigma$ to $\widehat\sigma$ in $\vecp$
to produce
the resulting solution $\widehat{\vecp}$ (with
 $\widehat{p}_{\widehat \sigma} := p_\sigma + p_{\widehat \sigma}$
and $\widehat{p}_{\sigma} := 0$)
that  is still optimal, and the induced allocation and densities for all elements do not change.

Because both $V$ and $\mcal{S}_V$ are finite,
applying this exchange argument repeatedly leads to the desired result.

\noindent \textbf{Strict Convexity Requirement.}
Observe that if we remove the strict convexity requirement for $\vartheta$ above,
then we would simply have $\phi(0) = \phi(\epsilon)$, for small enough $\epsilon > 0$.  This means as we increases $\epsilon$ from 0,
the objective function does not change, but the induced densities for $u$ and $v$ might change.

\noindent \textbf{Extension to General Convex Program without Equality Constraint
$\vecp = \vecq$.}  Note that we can modify $\vecp$ and $\vecq$ separately.
The proof is simpler, because in the first case, we set $G = 0$, where in the second case, we have $F=0$.

\ignore{

*******

We prove by contradiction. Assume there exists a permutation $\bar{\sigma}$ satisfying the following conditions: $p_{\bar{\sigma}}^{*}>0$,  $\exists$ adjacent $ u\in S, v\notin S$ s.t. $v\prec_{\bar{\sigma}}u$. Consider another permutation $\hat{\sigma}$ which swaps $u,v$. Define $A:=\{w\in V: w\prec_{\bar{\sigma}}v\}$ and a new feasible solution $\vecp'$ by $p'_{\bar{\sigma}}:=p^{*}_{\bar{\sigma}}-\epsilon$ and $p'_{\hat{\sigma}}:=p^{*}_{\hat{\sigma}}+\epsilon$ for some sufficiently small $\epsilon>0$. Note that $\f^{*}_{w}-\epsilon(\f_{\bar{\sigma}}(w)-\f_{\hat{\sigma}}(w))=\f'_{w}, \text{for }w=u,v$ and similar for $\g$. In addition, 
\begin{align*}
&\f_{\bar{\sigma}}(v)-\f_{\hat{\sigma}}(v)=F,\text{ where } F:=\f(v\cup A)+\f(u\cup A)-\f(A)-\f(u\cup v \cup A).\\
&\g_{\bar{\sigma}}(v)-\g_{\hat{\sigma}}(v)=G, \text{ where } G:=\g(v\cup A)+\g(u\cup A)-\g(A)-\g(u\cup v \cup A).
\end{align*}
 Then, we compare the two objective values as follows.
\begin{align*}
\sum\limits_{w\in V} \g^{*}_u\cdot\vartheta(\rho^{*}_w)-\sum\limits_{w\in V} \g'_w\cdot\vartheta(\rho'_w)&=\sum\limits_{w=u,v} \g^{*}_w\cdot\vartheta(\rho^{*}_w)-\sum\limits_{w=u,v} \g'_w\cdot\vartheta(\rho'_w)
\end{align*}
Based on fact \ref{f39}, let $\mathcal{M}(\g_{u},\g_{v},\f_{u},\f_{v}):= \g_{u}\cdot\vartheta(\frac{\f_{u}}{\g_{u}})+\g_{v}\cdot\vartheta(\frac{\f_{v}}{\g_{v}})$ and $m(t):=\mathcal{M}((1-t)\vec{x}+t\vec{y})$, where $\vec{x}=(\g^{*}_{u},\g^{*}_{v},\f^{*}_{u},\f^{*}_{v})$ and $\vec{y}=(\g'_{u},\g'_{v},\f'_{u},\f'_{v})$. Applying chain rule, we have
\begin{align*}
m'(t)&=l_1+l_2\\
l_1:&=[\vartheta'(\frac{\f^{*}_{u}+t\epsilon F}{\g^{*}_{u}+t\epsilon G})-\vartheta'(\frac{\f^{*}_{v}-t\epsilon F}{\g^{*}_{v}-t\epsilon G})]\cdot\epsilon F\\
l_2:&=\{[\vartheta(\frac{\f^{*}_{u}+t\epsilon F}{\g^{*}_{u}+t\epsilon G})-\frac{\f^{*}_{u}+t\epsilon F}{\g^{*}_{u}+t\epsilon G}\vartheta'(\frac{\f^{*}_{u}+t\epsilon F}{\g^{*}_{u}+t\epsilon G})]-[\vartheta(\frac{\f^{*}_{v}-t\epsilon F}{\g^{*}_{v}-t\epsilon G})-\frac{\f^{*}_{v}-t\epsilon F}{\g^{*}_{v}-t\epsilon G}\vartheta'(\frac{\f^{*}_{v}-t\epsilon F}{\g^{*}_{v}-t\epsilon G})]\}\cdot \epsilon G
\end{align*}
Since $\rho_{u}^{*}>\rho_{v}^{*}$, we can choose sufficiently small $\epsilon$ s.t. $\frac{\f^{*}_{u}+t\epsilon F}{\g^{*}_{u}+t\epsilon G}>\frac{\f^{*}_{v}-t\epsilon F}{\g^{*}_{v}-t\epsilon G}$ for all $t\in [0,1]$. Due to the strict convexity of $\vartheta$,
\begin{align*}
\vartheta'(\frac{\f^{*}_{u}+t\epsilon F}{\g^{*}_{u}+t\epsilon G})>\vartheta'(\frac{\f^{*}_{v}-t\epsilon F}{\g^{*}_{v}-t\epsilon G})
\end{align*}
This implies that $l_1\leq 0$, since $F\leq 0$.\\
Also note that $q(x)=\vartheta(x)-x\vartheta'(x)$ is nonincreasing for $x\geq 0$ since for $0\leq x\leq y$, $\vartheta(x)-\vartheta(y)\geq \vartheta'(y)(x-y)\geq x\vartheta'(x)-y\vartheta'(y)$ $\Rightarrow$ $q(x)\geq q(y)$. Thus, 
\begin{align*}
\vartheta(\frac{\f^{*}_{u}+t\epsilon F}{\g^{*}_{u}+t\epsilon G})-\frac{\f^{*}_{u}+t\epsilon F}{\g^{*}_{u}+t\epsilon G}\vartheta'(\frac{\f^{*}_{u}+t\epsilon F}{\g^{*}_{u}+t\epsilon G})\leq \vartheta(\frac{\f^{*}_{v}-t\epsilon F}{\g^{*}_{v}-t\epsilon G})-\frac{\f^{*}_{v}-t\epsilon F}{\g^{*}_{v}-t\epsilon G}\vartheta'(\frac{\f^{*}_{v}-t\epsilon F}{\g^{*}_{v}-t\epsilon G})
\end{align*}
This implies that $l_2\leq 0$, since $G\geq 0$.\\
Hence, we have $m'(t)\leq 0$ for $t\in [0,1]$, which implies that $\sum_{w\in V} g^{*}_u\cdot\vartheta(\rho^{*}_w)\geq\sum_{w\in V} g'_w\cdot\vartheta(\rho'_w)$. Therefore, due to the uniqueness of the optimal solution, we obtain a contradiction.
}
\end{proof}

\begin{lemma}[Locally Maximin Solution Gives Density Decomposition]
\label{lemma:maximin_density}
Given an instance $(V; \f, \g)$,
suppose $(\vecp, \vecq) \in \Delta(\mcal{S}_V)^2$
is a locally maximin solution.
Then, the induced density vector $\rho^{(\vecp, \vecq)}$
coincides 
with the density vector $\rho^*$ from the density decomposition $V = \cup_{i=1}^k S_i$
in Definition~\ref{defn:density_decomp}.

As an immediate corollary, all locally maximin solutions $(\vecp, \vecq)$
have the same objective value:

$\Phi(\vecp, \vecq) = \sum_{i=1}^k \g(S_i | S_{<i}) \cdot \vartheta(\frac{\f(S_i | S_{<i})}{\g(S_i | S_{<i})})$.
\end{lemma}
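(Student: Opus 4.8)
The plan is to prove that the partition of $V$ into level sets of the induced density $\rho^{(\vecp,\vecq)}$ is \emph{exactly} the density decomposition $V=\cup_{i=1}^k S_i$, from which both conclusions follow immediately. Let $\tau_1>\tau_2>\cdots>\tau_m$ be the distinct values in $\rho^{(\vecp,\vecq)}$, set $T_j:=\{u\in V:\rho^{(\vecp,\vecq)}(u)=\tau_j\}$ and $T_{\leq j}:=T_1\cup\cdots\cup T_j$ (so $T_{<j}:=T_{\leq j-1}$). By Definition~\ref{defn:local_maximin}, any $\sigma$ with $p_\sigma>0$ or $q_\sigma>0$ must order the blocks as $T_1\prec_\sigma T_2\prec_\sigma\cdots\prec_\sigma T_m$; hence in such a $\sigma$ the first $|T_{\leq j}|$ elements are precisely $T_{\leq j}$, and telescoping the marginal contributions gives $\sum_{u\in T_{\leq j}}\f^\sigma(u)=\f(T_{\leq j})$ and $\sum_{u\in T_{\leq j}}\g^\sigma(u)=\g(T_{\leq j})$. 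Averaging over the (possibly different) supports of $\vecp$ and of $\vecq$ yields $\f^{\vecp}(T_{\leq j})=\f(T_{\leq j})$ and $\g^{\vecq}(T_{\leq j})=\g(T_{\leq j})$, and subtracting consecutive levels gives $\f^{\vecp}(T_j)=\f(T_j\mid T_{<j})$ and $\g^{\vecq}(T_j)=\g(T_j\mid T_{<j})$. Since every $u\in T_j$ satisfies $\f^{\vecp}(u)=\tau_j\,\g^{\vecq}(u)$, summing over $T_j$ gives $\tau_j=\rho(T_j\mid T_{<j})$.

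Next I would identify $T_1$ with the maximal densest subset $S_1$. The key inequality is that for every nonempty $S\subseteq V$, using $\f^{\vecp}\in\mcal{B}^{\geq}_\f$, $\g^{\vecq}\in\mcal{B}^{\leq}_\g$, and $\g(S)>0$ together with $\g^{\vecq}(u)>0$ for all $u$ (Assumption~\ref{assume:g}),
$$\rho(S)=\frac{\f(S)}{\g(S)}\ \leq\ \frac{\f^{\vecp}(S)}{\g^{\vecq}(S)}=\frac{\sum_{u\in S}\rho^{(\vecp,\vecq)}(u)\,\g^{\vecq}(u)}{\sum_{u\in S}\g^{\vecq}(u)}\ \leq\ \tau_1.$$
Thus $\rho_1\leq\tau_1$; combined with $\rho(T_1)=\tau_1$ (from the previous paragraph) this forces $\rho_1=\tau_1$ and shows $T_1$ is a densest subset. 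Moreover, if $S$ is \emph{any} densest subset, the displayed chain is tight, and since each $\rho^{(\vecp,\vecq)}(u)\leq\tau_1$ with $\g^{\vecq}(u)>0$, tightness forces $\rho^{(\vecp,\vecq)}(u)=\tau_1$ for all $u\in S$, i.e. $S\subseteq T_1$. Hence $T_1$ contains every densest subset and is therefore $S_1$, with $\tau_1=\rho_1$.

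Then I would induct (on $m$, equivalently on $|V|$; the base case $m=1$ is just $T_1=V=S_1$). Pass to the subinstance $(V\setminus T_1;\ \f(\cdot\mid T_1),\ \g(\cdot\mid T_1))$, which is again a valid dual-modular instance since $\g(\cdot\mid T_1)$ inherits strict monotonicity. Because every support permutation $\sigma$ factors as an ordering of $T_1$ followed by an ordering $\sigma'$ of $V\setminus T_1$, restriction produces distributions $(\vecp',\vecq')$ on $\mcal{S}_{V\setminus T_1}$; conditioning on $T_1$ is absorbed into the marginal set functions, so $\f(\cdot\mid T_1)^{\vecp'}(u)=\f^{\vecp}(u)$ and $\g(\cdot\mid T_1)^{\vecq'}(u)=\g^{\vecq}(u)$ for all $u\in V\setminus T_1$. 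Thus $(\vecp',\vecq')$ is locally maximin for the subinstance with induced density levels $\tau_2>\cdots>\tau_m$ on blocks $T_2,\dots,T_m$; by the induction hypothesis these equal the density decomposition of the subinstance, which by the recursive Definition~\ref{defn:density_decomp} is $S_2,\dots,S_k$ with densities $\rho_2,\dots,\rho_k$. Combining with $T_1=S_1$ gives $m=k$, $T_j=S_j$, $\tau_j=\rho_j$, hence $\rho^{(\vecp,\vecq)}=\rho^*$. Finally, using $\g^{\vecq}(S_i)=\g(S_i\mid S_{<i})$ from the first paragraph,
$$\Phi(\vecp,\vecq)=\sum_{u\in V}\g^{\vecq}(u)\,\vartheta(\rho^*(u))=\sum_{i=1}^k\vartheta(\rho_i)\sum_{u\in S_i}\g^{\vecq}(u)=\sum_{i=1}^k\g(S_i\mid S_{<i})\,\vartheta\!\Big(\tfrac{\f(S_i\mid S_{<i})}{\g(S_i\mid S_{<i})}\Big),$$
which depends only on $\vartheta$ and the decomposition, giving the corollary. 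The main obstacle I expect is the restriction step of the induction — verifying carefully that truncating the support permutations at $T_1$ and reinterpreting the conditioned marginals yields a genuine locally maximin solution of the subinstance with densities preserved — together with the tightness argument that pins $T_1$ down as the \emph{maximal} (not merely a) densest subset.
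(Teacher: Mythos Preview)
Your proof is correct and follows the same inductive skeleton as the paper: show that the top density level coincides with the maximal densest subset $S_1$, then recurse on the subinstance $(V\setminus S_1;\f(\cdot\mid S_1),\g(\cdot\mid S_1))$ with the restricted solution. The restriction step and the objective-value corollary are handled identically.

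The one notable difference is how you pin down $T_1=S_1$. The paper argues at the permutation level: it performs a ``stable sort'' transformation bringing $S_1$ to the front of every support permutation, then uses supermodularity/submodularity of the marginals to derive $\f(S_1)\leq\rho^*\cdot\g(S_1)$, forcing $\rho_1=\rho^*$ and $S_1\subseteq S^*$, with $S^*\subseteq S_1$ coming from maximality. You instead work directly at the allocation level, using only $\f^{\vecp}\in\mcal{B}^{\geq}_\f$ and $\g^{\vecq}\in\mcal{B}^{\leq}_\g$ to get $\rho(S)\leq \f^{\vecp}(S)/\g^{\vecq}(S)\leq\tau_1$ for every $S$, and then read off from tightness that every densest subset sits inside $T_1$. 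Your route is shorter and avoids the auxiliary transformation; the paper's route, on the other hand, makes the role of the permutation representation (Definition~\ref{defn:local_maximin}) more visible, which is consistent with how that representation is used elsewhere in Section~\ref{sec:technical}.
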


\begin{remark}[Strictly Decreasing Densities in Decomposition]
\label{remark:strict_decrease}
In this proof, we also get the result that 
the densities obtained in the decomposition
is strictly decreasing: $\rho_1 > \rho_2 > \cdots > \rho_k$,
where $\rho_i = \frac{\f(S_i | S_{<i})}{\g(S_i | S_{<i})}$.
\end{remark}

\begin{proof}
We prove by induction on 
the number $k$ of parts
in the density decomposition $V = \cup_{i=1}^k S_i$
in Definition~\ref{defn:density_decomp}.

\noindent \textbf{Base Case} $k=1$, i.e. $V = S_1$
is the maximal densest subset
with density $\rho_1 = \frac{\f(V)}{\g(V)}$.
Suppose, for contradiction's sake,
assume that a locally maximin solution $(\vecp, \vecq)$
induces a proper subset
$S^{*}:=\argmax_{u\in V} \rho^{(\vecp, \vecq)}(u) \subsetneq V$.

Observe that since $(\vecp, \vecq)$ is locally maximin,
it follows that:

$\sum_{u \in S^*} \f^{\vecp}(u) = \f(S^*)
\leq \rho_1 \cdot \g(S^*) = \rho_1 \cdot \sum_{u \in S^*} \g^{\vecq}(u)$,
where the equality follows because
$p_\sigma > 0$ or $q_\sigma > 0$ implies that $S^* \prec_\sigma V \setminus S^*$,
and the inequality follows because $\rho_1$ is the maximum density.

Moreover, by the definition of $S^*$,
for all $v \in V \setminus S^*$,
$\f^{\vecp}(v) < \frac{\f(S^*)}{\g(S^*)} \cdot \g^{\vecq}(v) \leq
\rho_1 \cdot \g^{\vecq}(v) $.

This immediately gives the following contradiction:

$$\rho_1 = \frac{\sum_{u \in S^*} \f^{\vecp}(u) + \sum_{v \in V \setminus S^*} \f^{\vecp}(v)}{\sum_{u \in S^*} \g^{\vecq}(u) + \sum_{v \in V \setminus S^*} \g^{\vecq}(v)}
< 
\frac{\rho_1 \cdot \sum_{u \in S^*} \g^{\vecq}(u) + \rho_1 \cdot\sum_{v \in V \setminus S^*} \g^{\vecq}(v)}{\sum_{u \in S^*} \g^{\vecq}(u) + \sum_{v \in V \setminus S^*} \g^{\vecq}(v)} = \rho_1.$$

\ignore{
Observe that for all $p_\sigma > 0$ or $q_\sigma > 0$,
we have $V^{(p,q)} \prec_\sigma V \setminus V^{(p,q)}$.
Since $\g^{\vecq}(u) = 0$ for all $u \notin V^{(p,q)}$,
it follows that $\g(V | V^{(p,q)}) = 0$.
Therefore, the objective function is $\Phi(\vecp, \vecq) = \g(V) \cdot \vartheta(\rho_1)$.

\hubert{Continue from here....}
}

\noindent \textbf{Induction Step} $k > 1$,
i.e., the density decomposition $V = \cup_{i=1}^k S_i$
has at least 2 parts. 
As before, suppose a locally maximin solution $(\vecp, \vecq) \in \Delta(\mcal{S}_V)^2$
induces
$S^{*}:=\argmax_{u\in V} \rho^{(\vecp, \vecq)}(u)$.
We first show that $S^* = S_1$.

Since $(\vecp, \vecq) $ is locally maximin,
$S^* \prec_{(\vecp, \vecq) } V \setminus S^*$.  Hence,
for all $u \in S^*$, $\rho^{(\vecp, \vecq) }(u) = \rho^* := \frac{\f(S^*)}{\g(S^*)}$.

It follows that for all $v \in V$,

$\rho^{(\vecp, \vecq) }(v) \leq \rho^* \leq \rho_1 := \frac{\f(S_1)}{\g(S_1)}$,
where the first inequality follows
from the definition of $S^*$ and the second inequality
follows because $S_1$ is the densest subset.

We will next transform $(\vecp, \vecq)$ into
another solution $(\widetilde{\vecp}, \widetilde{\vecq}) $ as follows.
Suppose $p_\sigma > 0$ or $q_\sigma > 0$ and it does not hold that $S_1 \prec_\sigma V \setminus S_1$.
Then, we perform ``stable sort'' to bring elements in $S_1$ to the front
(but not changing the relative order of elements within $S_1$ or $V \setminus S_1$)
to produce the resulting permutation $\sigma'$.
Then, any positive weight $p_\sigma > 0$ or $q_\sigma > 0$ will be transferred to 
$p_{\sigma'}$ or $q_{\sigma'}$, correspondingly.

Observe that for $u \in S_1$, $\f^{\sigma'}(u) \leq \f^\sigma(u)$
and $\g^{\sigma'}(u) \geq \g^{\sigma}(u)$.

Hence, after the transformation,
the new solution $(\widetilde{\vecp}, \widetilde{\vecq})$ satisfies that:

\begin{compactitem}

\item For all $u \in S_1$,

\begin{equation} \label{eq:u_S1}
\f^{\widetilde{\vecp}}(u) \leq \f^{\vecp}(u) \leq \rho^* \cdot \g^{\vecq}(u) \leq
\rho^* \cdot \g^{\widetilde{\vecq}}(u).
\end{equation}

\item Moreover, $S_1 \prec_{(\widetilde{\vecp}, \widetilde{\vecq})} V \setminus S_1$.
This implies that $\sum_{u \in S_1} \f^{\widetilde{\vecp}}(u) = \f(S_1)$
and $\sum_{u \in S_1} \g^{\widetilde{\vecq}}(u) = \g(S_1)$.

\end{compactitem}

Hence, summing (\ref{eq:u_S1}) over $u \in S_1$,
we have $\f(S_1) \leq \rho^* \cdot \g(S_1)$,
which implies that $\rho_1 = \frac{\f(S_1)}{\g(S_1)} \leq \rho^*$.
Therefore, $\rho_1 = \rho^*$, and equality holds in (\ref{eq:u_S1})
for all $u \in S_1$, which implies that $S_1 \subseteq S^*$.

However, since $S^*$ has the maximum density $\rho_1 = \rho^* = \frac{\f(S^*)}{\g(S^*)}$,
it must be the case that $S^* \subseteq S_1$, because $S_1$ is the maximal densest subset.
Hence,
we have for all $u \in S_1$,
$\rho^{(\vecp, \vecq)}(u) = \rho_1$.

We next consider a subinstance $(\widehat{V} := V \setminus S_1; \f(\cdot |S_1), \g(\cdot |S_1))$,
which is the next step in Definition~\ref{defn:density_decomp} of the density decomposition.
Moreover, the solution $(\vecp, \vecq)$ naturally induces $(\widehat{\vecp}, \widehat{\vecq}) \in \Delta(\mcal{S}_{\widehat{V}})^2$.
Specifically, $\widehat{p}_\tau$ is the aggregation of weights from all $p_{\sigma}$ such that 
$\tau \in \mcal{S}_{\widehat{V}}$ is a suffix of $\sigma \in \mcal{S}_{V}$;
$\widehat{q}_\tau$ is defined similarly.

Observe that for all $u \in \widehat{V}$,
$\rho^{(\vecp, \vecq)}(u) = \rho^{(\widehat{\vecp}, \widehat{\vecq})}(u)$;
moreover, $(\widehat{\vecp}, \widehat{\vecq})$ is also locally maximin.
Therefore, applying the induction hypothesis to the subinstance finishes the proof.

Recall that by definition $S^* = S_1$ are the coordinates
with the largest values in $\rho^* \in \R^V$.
Hence, the same induction proof shows that $\rho_1 > \rho_2 > \ldots > \rho_k$.

\noindent \textbf{Same Objective Value.}
Since $(\vecp, \vecq)$ is locally maximin,
it follows that with the density decomposition $V = \cup_{i=1}^k S_i$,
we have $S_1 \prec_{(\vecp, \vecq)} S_2  \prec_{(\vecp, \vecq)} \cdots \prec_{(\vecp, \vecq)} S_k$.

Therefore, we have $\sum_{u \in S_i} \g^{\vecq}(u) = \g(S_i | S_{<i})$.
This means that 

$\Phi(\vecp, \vecq) = \sum_{i=1}^k \g(S_i | S_{<i}) \cdot \vartheta(\frac{\f(S_i | S_{<i})}{\g(S_i | S_{<i})})$.

\end{proof}

\begin{corollary}[Optimal Solution Gives Density Decomposition]
\label{cor:local_maximin}
Given an instance $(V; \f, \g)$ with strictly monotone $\g$,
an optimal solution to the convex program in Definition~\ref{defn:general_convex} can be attained with the equality constraint $\vecp = \vecq$.
Moreover, for strictly convex~$\vartheta$, the induced density vector satisfies $\rho^{\vecp} = \rho^* \in \R^V$, i.e., it
coincides with the density vector from the density decomposition $V = \cup_{i=1}^k S_i$ in Definition~\ref{defn:density_decomp}.

On the other hand, as long as $\vartheta$ is (not necessarily strictly) convex,
any locally maximin solution is also an optimal solution.
\end{corollary}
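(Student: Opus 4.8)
The plan is to obtain this corollary as a short consequence of Lemma~\ref{lemma:maximin_opt} and Lemma~\ref{lemma:maximin_density}, the only extra ingredient being a compactness argument for the existence of minimizers. Write $\mathsf{CP}$ for the convex program of Definition~\ref{defn:general_convex} with variables $(\vecp,\vecq)\in\Delta(\mcal{S}_V)^2$, and $\mathsf{CP}_{=}$ for its restriction to the diagonal $\{(\vecp,\vecp):\vecp\in\Delta(\mcal{S}_V)\}$. First I would observe that both feasible regions are compact and that, because Assumption~\ref{assume:g} forces every coordinate of $\g^{\vecq}$ to be strictly positive for every $\vecq$, the objective $\Phi_\vartheta$ is well defined and lower semicontinuous on each of them (and continuous for the standard choices of $\vartheta$); hence each program attains its minimum. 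Denote the two minimum values by $\mathrm{OPT}$ and $\mathrm{OPT}_{=}$; since $\mathsf{CP}_{=}$ is the more constrained program, $\mathrm{OPT}\le\mathrm{OPT}_{=}$.

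Next I would identify both minimum values with the single decomposition-determined constant $c_\vartheta:=\sum_{i=1}^{k}\g(S_i\mid S_{<i})\cdot\vartheta(\rho_i)$, where $V=\cup_{i=1}^{k}S_i$ is the density decomposition of Definition~\ref{defn:density_decomp}. Applying Lemma~\ref{lemma:maximin_opt} to any minimizer of $\mathsf{CP}$ yields a locally maximin optimal solution, whose objective value is $c_\vartheta$ by the concluding statement of Lemma~\ref{lemma:maximin_density}; thus $\mathrm{OPT}=c_\vartheta$. Applying Lemma~\ref{lemma:maximin_opt} in the form that covers the extra constraint $\vecp=\vecq$ to any minimizer of $\mathsf{CP}_{=}$ yields a locally maximin optimal solution of the form $(\widehat{\vecp},\widehat{\vecp})$; Lemma~\ref{lemma:maximin_density} applies to it unchanged (it does not matter whether $\vecp=\vecq$), so its value is again $c_\vartheta$ and $\mathrm{OPT}_{=}=c_\vartheta$. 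Combining, $\mathrm{OPT}_{=}=\mathrm{OPT}$, so the locally maximin solution $(\widehat{\vecp},\widehat{\vecp})$ is in fact a minimizer of the unconstrained program $\mathsf{CP}$ with $\vecp=\vecq$; this establishes the first claim, and even the sharper statement that the minimum is attained by a locally maximin solution with $\vecp=\vecq$.

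For the strictly convex case I would argue as follows. Let $(\vecp,\vecq)$ be any minimizer of $\mathsf{CP}$. By the density-preservation clause of Lemma~\ref{lemma:maximin_opt}, which holds when $\vartheta$ is strictly convex, there is a locally maximin optimal solution inducing the same density vector as $(\vecp,\vecq)$, and by Lemma~\ref{lemma:maximin_density} that vector is $\rho^*$; hence $\rho^{(\vecp,\vecq)}=\rho^*$. Specializing to the diagonal minimizer $(\widehat{\vecp},\widehat{\vecp})$ from the previous paragraph gives $\rho^{\widehat{\vecp}}=\rho^*$, which is the second claim. Finally, for the last assertion, let $(\vecp,\vecq)$ be an arbitrary locally maximin solution with $\vartheta$ merely convex; Lemma~\ref{lemma:maximin_density} gives $\Phi_\vartheta(\vecp,\vecq)=c_\vartheta=\mathrm{OPT}$, so $(\vecp,\vecq)$ is a minimizer of $\mathsf{CP}$.

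I do not expect a genuine obstacle here, since all the substantive work is carried out in Lemmas~\ref{lemma:maximin_opt} and~\ref{lemma:maximin_density}; the two points that need attention are (i) invoking Lemma~\ref{lemma:maximin_opt} in both its unconstrained and its diagonal form and noticing that the common value $c_\vartheta$ furnished by Lemma~\ref{lemma:maximin_density} depends only on the density decomposition, which is what forces $\mathrm{OPT}=\mathrm{OPT}_{=}$; and (ii) checking that minimizers exist before feeding them to Lemma~\ref{lemma:maximin_opt}, which is exactly where strict monotonicity of $\g$ (Assumption~\ref{assume:g}) enters, through the positivity of the coordinates of $\g^{\vecq}$ and hence the good behaviour of $\Phi_\vartheta$ on the compact domain.
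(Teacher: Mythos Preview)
Your proposal is correct and follows essentially the same route as the paper: apply Lemma~\ref{lemma:maximin_opt} to optimal solutions of both $\mathsf{CP}$ and $\mathsf{CP}_{=}$, use Lemma~\ref{lemma:maximin_density} to identify both optimal values with the common constant $c_\vartheta$, and invoke the density-preservation clause of Lemma~\ref{lemma:maximin_opt} for the strictly convex case. Your explicit compactness/lower-semicontinuity check for the existence of minimizers is a detail the paper leaves implicit, but otherwise the arguments match.
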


\begin{proof}
We first consider a general convex $\vartheta$ that is not necessarily strictly convex.
By Lemmas~\ref{lemma:maximin_opt} and~\ref{lemma:maximin_density},
any optimal solution $(\vecp, \vecq)$ to the convex program
has the same objective value given in 
Lemma~\ref{lemma:maximin_opt}.

Next, consider the convex program with the extra equality constraint,
which also has a locally maximin optimal solution by Lemma~\ref{lemma:maximin_opt}, which has the same objective value by Lemma~\ref{lemma:maximin_density}.
Therefore, adding the extra equality constraint $\vecp = \vecq$ does not
increase the value of optimal objective value of the convex program.

When $\vartheta$ is strictly convex, Lemma~\ref{lemma:maximin_opt}
states that any optimal solution induces the same density vector
as the density decomposition.
\end{proof}

\subsection{An Example to Illustrate Strict Monotonicity Assumption}
\label{sec:strict_example}

We give an example to illustrate the importance of Assumption~\ref{assume:g}.
The ground set is $V = \{a, w, b\}$.
The functions $\f$ and $\g$ are represented in the following table.

For instance, given the column for the permutation $\sigma_0 = (a w b)$
and the row for $\g^\sigma$, the corresponding entry $102$ means that:
$\g(\{a\}) = 1$, $\g(\{a, w\}) = 1 + 0$ and $\g(\{a, w, b\}) = 1 + 0 + 2$.

\[
\begin{array}{c|cccccc}
\sigma & awb & abw & wab & wba & bwa & baw \\
\hline
\f^\sigma & 1 0 1 & 1 0 1 & 0 1 1 & 0 1 1 & 0 1 1 & 0 1 1 \\
\g^\sigma & 1 0 2 & 1 2 0 & 1 0 2 & 1 2 0 & 2 1 0 & 2 1 0 \\
\end{array}
\]

\noindent \textbf{Density Decomposition.}
According to Definition~\ref{defn:density_decomp},
we have $S_1 = \{a, w\}$ and $S_2 = \{b\}$.
From the definition, the density of the element $w$ should be $1$. 

\noindent \textbf{Optimal Solution for Convex Program.}
The only optimal solution to the convex program in Definition~\ref{defn:general_convex}
is induced by $\sigma_0 = (awb)$, which is represented by $(x, y) = (\f^{\sigma_0}, \g^{\sigma_0}) \in \mcal{B}^{\geq}_{\f}
\times \mcal{B}^{\leq}_{\g}$ as follows:

\[
\begin{array}{c|ccc}
V & a & w & b  \\
\hline
x & 1 & 0 & 1  \\
y & 1 & 0 & 2 \\
\end{array}
\]

Note that in this case, the induced density of the element~$w$ is undefined
in $(\f^{\sigma_0}, \g^{\sigma_0})$.

\ignore{

\begin{fact}\label{f39}
Consider a differentiable function $\mathcal{M}:  \mathcal{D}\subseteq\mathbb{R}^{k}\rightarrow \mathbb{R} $ and $\vec{x},\vec{y}\in \mathcal{D}$, where $\mathcal{D}$ is closed and compact. Define $m(t):=\mathcal{M}((1-t)\vec{x}+t\vec{y})$ for $t\in [0,1]$. If $m'(t)\leq 0$ for $t\in [0,1]$, then $\mathcal{M}(\vec{y})\leq\mathcal{M}(\vec{x})$.
\end{fact}

\begin{lemma}\label{lm33}
Suppose $\vecp^{*}$ is the optimal solution to the convex program CP($V$) with respect to input instance $(V;\f,\g)$ and satisfying the properties in lemma \ref{Density Classes of an Optimal Solution}. Define $S:=\{u\in V: u=\argmax_{v\in V}\rho_{v}^{*}\}$. Then, in the permutation $\sigma$ such that $p_{\sigma}^{*}>0$, $S$ must appear at the beginning, $i.e.$ $S\prec_{\sigma} V\setminus S$.
\end{lemma}

\begin{proof}

We prove by contradiction. Assume there exists a permutation $\bar{\sigma}$ satisfying the following conditions: $p_{\bar{\sigma}}^{*}>0$,  $\exists$ adjacent $ u\in S, v\notin S$ s.t. $v\prec_{\bar{\sigma}}u$. Consider another permutation $\hat{\sigma}$ which swaps $u,v$. Define $A:=\{w\in V: w\prec_{\bar{\sigma}}v\}$ and a new feasible solution $\vecp'$ by $p'_{\bar{\sigma}}:=p^{*}_{\bar{\sigma}}-\epsilon$ and $p'_{\hat{\sigma}}:=p^{*}_{\hat{\sigma}}+\epsilon$ for some sufficiently small $\epsilon>0$. Note that $\f^{*}_{w}-\epsilon(\f_{\bar{\sigma}}(w)-\f_{\hat{\sigma}}(w))=\f'_{w}, \text{for }w=u,v$ and similar for $\g$. In addition, 
\begin{align*}
&\f_{\bar{\sigma}}(v)-\f_{\hat{\sigma}}(v)=F,\text{ where } F:=\f(v\cup A)+\f(u\cup A)-\f(A)-\f(u\cup v \cup A).\\
&\g_{\bar{\sigma}}(v)-\g_{\hat{\sigma}}(v)=G, \text{ where } G:=\g(v\cup A)+\g(u\cup A)-\g(A)-\g(u\cup v \cup A).
\end{align*}
 Then, we compare the two objective values as follows.
\begin{align*}
\sum\limits_{w\in V} \g^{*}_u\cdot\vartheta(\rho^{*}_w)-\sum\limits_{w\in V} \g'_w\cdot\vartheta(\rho'_w)&=\sum\limits_{w=u,v} \g^{*}_w\cdot\vartheta(\rho^{*}_w)-\sum\limits_{w=u,v} \g'_w\cdot\vartheta(\rho'_w)
\end{align*}
Based on fact \ref{f39}, let $\mathcal{M}(\g_{u},\g_{v},\f_{u},\f_{v}):= \g_{u}\cdot\vartheta(\frac{\f_{u}}{\g_{u}})+\g_{v}\cdot\vartheta(\frac{\f_{v}}{\g_{v}})$ and $m(t):=\mathcal{M}((1-t)\vec{x}+t\vec{y})$, where $\vec{x}=(\g^{*}_{u},\g^{*}_{v},\f^{*}_{u},\f^{*}_{v})$ and $\vec{y}=(\g'_{u},\g'_{v},\f'_{u},\f'_{v})$. Applying chain rule, we have
\begin{align*}
m'(t)&=l_1+l_2\\
l_1:&=[\vartheta'(\frac{\f^{*}_{u}+t\epsilon F}{\g^{*}_{u}+t\epsilon G})-\vartheta'(\frac{\f^{*}_{v}-t\epsilon F}{\g^{*}_{v}-t\epsilon G})]\cdot\epsilon F\\
l_2:&=\{[\vartheta(\frac{\f^{*}_{u}+t\epsilon F}{\g^{*}_{u}+t\epsilon G})-\frac{\f^{*}_{u}+t\epsilon F}{\g^{*}_{u}+t\epsilon G}\vartheta'(\frac{\f^{*}_{u}+t\epsilon F}{\g^{*}_{u}+t\epsilon G})]-[\vartheta(\frac{\f^{*}_{v}-t\epsilon F}{\g^{*}_{v}-t\epsilon G})-\frac{\f^{*}_{v}-t\epsilon F}{\g^{*}_{v}-t\epsilon G}\vartheta'(\frac{\f^{*}_{v}-t\epsilon F}{\g^{*}_{v}-t\epsilon G})]\}\cdot \epsilon G
\end{align*}
Since $\rho_{u}^{*}>\rho_{v}^{*}$, we can choose sufficiently small $\epsilon$ s.t. $\frac{\f^{*}_{u}+t\epsilon F}{\g^{*}_{u}+t\epsilon G}>\frac{\f^{*}_{v}-t\epsilon F}{\g^{*}_{v}-t\epsilon G}$ for all $t\in [0,1]$. Due to the strict convexity of $\vartheta$,
\begin{align*}
\vartheta'(\frac{\f^{*}_{u}+t\epsilon F}{\g^{*}_{u}+t\epsilon G})>\vartheta'(\frac{\f^{*}_{v}-t\epsilon F}{\g^{*}_{v}-t\epsilon G})
\end{align*}
This implies that $l_1\leq 0$, since $F\leq 0$.\\
Also note that $q(x)=\vartheta(x)-x\vartheta'(x)$ is nonincreasing for $x\geq 0$ since for $0\leq x\leq y$, $\vartheta(x)-\vartheta(y)\geq \vartheta'(y)(x-y)\geq x\vartheta'(x)-y\vartheta'(y)$ $\Rightarrow$ $q(x)\geq q(y)$. Thus, 
\begin{align*}
\vartheta(\frac{\f^{*}_{u}+t\epsilon F}{\g^{*}_{u}+t\epsilon G})-\frac{\f^{*}_{u}+t\epsilon F}{\g^{*}_{u}+t\epsilon G}\vartheta'(\frac{\f^{*}_{u}+t\epsilon F}{\g^{*}_{u}+t\epsilon G})\leq \vartheta(\frac{\f^{*}_{v}-t\epsilon F}{\g^{*}_{v}-t\epsilon G})-\frac{\f^{*}_{v}-t\epsilon F}{\g^{*}_{v}-t\epsilon G}\vartheta'(\frac{\f^{*}_{v}-t\epsilon F}{\g^{*}_{v}-t\epsilon G})
\end{align*}
This implies that $l_2\leq 0$, since $G\geq 0$.\\
Hence, we have $m'(t)\leq 0$ for $t\in [0,1]$, which implies that $\sum_{w\in V} g^{*}_u\cdot\vartheta(\rho^{*}_w)\geq\sum_{w\in V} g'_w\cdot\vartheta(\rho'_w)$. Therefore, due to the uniqueness of the optimal solution, we obtain a contradiction.
\end{proof}
}

\ignore{
\begin{lemma}\label{lm36}
Given an input instance $(V;\f,\g)$, let $S_1$ denote the maximal densest subset obtained from the first decomposition procedure and $\vecp^{*}$ denote the optimal solution to the convex program CP($V$) that satisfies the properties in lemma \ref{Density Classes of an Optimal Solution}. Define $Q:=\{\sigma \in \mathcal{S}_V: p_{\sigma}^{*}>0,  S_1\prec_{\sigma}V\setminus S_1\}$.  Then, we have $$\frac{\sum\limits_{\sigma\in \mathcal{S}_V\setminus Q}p_{\sigma}^{*}\sum\limits_{u\in S_1}\f_{\sigma}(u)}{\sum\limits_{\sigma\in \mathcal{S}_V\setminus Q}p_{\sigma}^{*}\sum\limits_{u\in S_1}\g_{\sigma}(u)}\\
\geq\frac{\sum\limits_{\sigma\in Q}p_{\sigma}^{*}\sum\limits_{u\in S_1}\f_{\sigma}(u)}{\sum\limits_{\sigma\in Q}p_{\sigma}^{*}\sum\limits_{u\in S_1}\g_{\sigma}(u)}$$
\end{lemma}
\begin{proof}

Consider a function $\tau: \mathcal{S}_V\setminus Q \rightarrow Q$. For $\sigma\in \mathcal{S}_V\setminus Q$, we define $\tau(\sigma)$ as follows. Let $u$ be the first element in $S_1$ with respect to $\sigma$ such that there exists $v\notin S_1 $ satisfying $v\prec_{\sigma}u$ and we further require $v$ is the first element (with respect to $\sigma$) in $V\setminus S_1$ achieving this condition.  Then, we define an operation by swapping such $u,v$. Keep doing this operation till all the elements in $S_1$ appears at the beginning with respect to some new order and we denote this new order as $\tau(\sigma)$ i.e. $S_1\prec_{\tau(\sigma)} V\setminus S_1$. Note that by supermodularity and submodularity, for $\sigma \in \mathcal{S}_V\setminus Q$, we have
\begin{align*}
\f(S_1)={\sum\limits_{u\in S_1}\f_{\tau{(\sigma)}}(u)}\leq{\sum\limits_{u\in S_1}\f_{\sigma}(u)},\quad \g(S_1)={\sum\limits_{u\in S_1}\g_{\tau{(\sigma)}}(u)}\geq {\sum\limits_{u\in S_1}\g_{\sigma}(u)}
\end{align*}
This implies that 
\begin{align*}
\frac{\sum\limits_{\sigma\in \mathcal{S}_V\setminus Q}p_{\sigma}^{*}\sum\limits_{u\in S_1}\f_{\sigma}(u)}{\sum\limits_{\sigma\in \mathcal{S}_V\setminus Q}p_{\sigma}^{*}\sum\limits_{u\in S_1}\g_{\sigma}(u)}\geq\frac{\sum\limits_{\sigma\in \mathcal{S}_V\setminus Q}p_{\tau(\sigma)}^{*}\sum\limits_{u\in S_1}\f_{\tau(\sigma)}(u)}{\sum\limits_{\sigma\in \mathcal{S}_V\setminus Q}p_{\tau(\sigma)}^{*}\sum\limits_{u\in S_1}\g_{\tau(\sigma)}(u)}
\end{align*}
Define $c_{\sigma}:=\# \{\tau^{-1}(\sigma)\in \mathcal{S}_V\setminus Q\}$, $\forall \sigma \in Q$. Note that for any $\sigma\in Q$, $c_\sigma$ is a constant and thus we denote it as $c$.
Thus, we have
\begin{align*}
&\frac{\sum\limits_{\sigma\in \mathcal{S}_V\setminus Q}p_{\sigma}^{*}\sum\limits_{u\in S_1}\f_{\sigma}(u)}{\sum\limits_{\sigma\in \mathcal{S}_V\setminus Q}p_{\sigma}^{*}\sum\limits_{u\in S_1}\g_{\sigma}(u)}\geq\frac{\sum\limits_{\sigma\in \mathcal{S}_V\setminus Q}p_{\tau(\sigma)}^{*}\sum\limits_{u\in S_1}\f_{\tau(\sigma)}(u)}{\sum\limits_{\sigma\in \mathcal{S}_V\setminus Q}p_{\tau(\sigma)}^{*}\sum\limits_{u\in S_1}\g_{\tau(\sigma)}(u)}= \frac{\sum\limits_{\sigma\in Q}c\cdot p_{\sigma}^{*}\sum\limits_{u\in S_1}\f_{\sigma}(u)}{\sum\limits_{\sigma\in Q}c\cdot p_{\sigma}^{*}\sum\limits_{u\in S_1}\g_{\sigma}(u)}=\frac{\sum\limits_{\sigma\in Q}p_{\sigma}^{*}\sum\limits_{u\in S_1}\f_{\sigma}(u)}{\sum\limits_{\sigma\in Q} p_{\sigma}^{*}\sum\limits_{u\in S_1}\g_{\sigma}(u)}
\end{align*}

\end{proof}

}

\ignore{
\begin{theorem}
The density vector is optimal to the convex program.
\end{theorem}

\begin{proof}
We prove by induction on the total number of density decomposition procedures.\\
\textbf{Base Case :} Suppose $V=\text{arg}\max_{S\subseteq V}\f(S)/\g(S)$ and $\vecp^{*}$ be the optimal solution to the convex program that satisfies the properties in lemma \ref{Density Classes of an Optimal Solution}. Define $S^{*}:=\{u\in V: u=\argmax_{v\in V}\rho_{v}^{*}\}$. 
Note that, $\forall u\in S^{*}$, by the property of fraction, we have
\begin{align*}
\rho_{u}^{*}&=\frac{\sum\limits_{\sigma\in \mathcal{S}_V}p^{*}_{\sigma}\f_{\sigma}(u)}{\sum\limits_{\sigma\in \mathcal{S}_V}p^{*}_{\sigma}\g_{\sigma}(u)}=\frac{\sum\limits_{v\in S^{*}}\sum\limits_{\sigma\in \mathcal{S}_V}p^{*}_{\sigma}\f_{\sigma}(v)}{\sum\limits_{v\in S^{*}}\sum\limits_{\sigma\in \mathcal{S}_V}p^{*}_{\sigma}\g_{\sigma}(v)}=\frac{\sum\limits_{\sigma\in \mathcal{S}_V}p^{*}_{\sigma}\sum\limits_{v\in S^{*}}\f_{\sigma}(v)}{\sum\limits_{\sigma\in \mathcal{S}_V}p^{*}_{\sigma}\sum\limits_{v\in S^{*}}\g_{\sigma}(v)}=\frac{\f(S^{*})}{\g(S^{*})}=\rho(S^{*})
\end{align*}If $S^{*}=V$, then we are done. Otherwise, suppose $S^{*}\subsetneq V$. From lemma \ref{lm33}, for $\sigma$ s.t. $p_{\sigma}^{*}>0$, we have $\sum_{u\in S^{*}}\f_{\sigma}(u)=\f(S^{*})$ and $\sum_{u\in S^{*}}\g_{\sigma}(u)=\g(S^{*})$. 
Similarly, for any $\sigma$ s.t. $p_{\sigma}^{*}>0$, we have 
\begin{align*}
\rho(V)=\frac{\f(V)}{\g(V)}=\frac{\sum\limits_{u\in V}\f_{\sigma}(u)}{\sum\limits_{u\in V}\g_{\sigma}(u)}=\frac{p_{\sigma}^{*}\sum\limits_{u\in V}\f_{\sigma}(u)}{p_{\sigma}^{*}\sum\limits_{u\in V}\g_{\sigma}(u)}=\frac{\sum\limits_{\sigma\in \mathcal{S}_V}\sum\limits_{u\in V}p^{*}_{\sigma}\f_{\sigma}(u)}{\sum\limits_{\sigma\in \mathcal{S}_V}\sum\limits_{u\in V}p^{*}_{\sigma}\g_{\sigma}(u)}=\frac{\sum\limits_{u\in V}\f^{*}_u}{\sum\limits_{u\in V}\g^{*}_u}
\end{align*}
Thus, we obtain $\rho(S^{*})> \rho(V)$, a contradiction. The strict inequality is because $S^{*}$ is proper. Hence, we finish the base case.\\
\textbf{Induction Hypothesis :} Suppose the statement holds when the total number of decomposition procedures is $k$.\\
\textbf{Induction Process :} Consider the case that the total number of decomposition procedures is $k+1$. Let $S_1,\dots,S_{k+1}$ denote the maximal densest subsets obtained in each decomposition procedure. Suppose $\vecp^{*}$ be the optimal solution to the convex program that satisfies the properties in lemma \ref{Density Classes of an Optimal Solution}. Define $S^{*}:=\{u\in V: u=\argmax_{v\in V}\rho_{v}^{*}\}$. \\
Initially, we show that $S_1=S^{*}$. Define $Q:=\{\sigma \in \mathcal{S}_V: p_{\sigma}^{*}>0,  S_1\prec_{\sigma}V\setminus S_1\}$. We compare the density between $S_1$ and $S^{*}$ as follows.
\begin{align*}
\rho(S_1)&=\frac{\f(S_1)}{\g(S_1)}=\frac{\sum\limits_{u\in S_1}\f_{\sigma}(u)}{\sum\limits_{u\in S_1}\g_{\sigma}(u)}, \forall \sigma \in Q\\
&=\frac{\sum\limits_{\sigma\in Q}p_{\sigma}^{*}\sum\limits_{u\in S_1}\f_{\sigma}(u)}{\sum\limits_{\sigma\in Q}p_{\sigma}^{*}\sum\limits_{u\in S_1}\g_{\sigma}(u)}{\leq}\frac{\sum\limits_{\sigma\in \mathcal{S}_V}p_{\sigma}^{*}\sum\limits_{u\in S_1}\f_{\sigma}(u)}{\sum\limits_{\sigma\in \mathcal{S}_V}p_{\sigma}^{*}\sum\limits_{u\in S_1}\g_{\sigma}(u)}, \text{by lemma \ref{lm36}}\\
&=\frac{\sum\limits_{u\in S_1}\f^{*}_u}{\sum\limits_{u\in S_1}\g^{*}_u}\leq \rho(S^{*})
\end{align*}
By the maximality of $S_1$, we have $\rho(S_1)=\rho(S^{*})$ and $S^{*}\subseteq S_1$. This also implies that all the inequalities should be active $i.e.$ $\rho_{u}^{*}=\rho(S^{*})$ for $u\in S_1$. By the definition of $S^{*}$, we have $S_1\subseteq S^{*}$. Therefore, we finish the proof.\\
From the proof of base case, we have shown that $\rho_{u}^{*}=\rho(S^{*})$ for $u\in S^{*}$.
Form the following convex program with respect to the input instance $(V\setminus S_1;\f(\cdot|S_1),\g(\cdot|S_1))$.
Based on induction hypothesis and lemma \ref{lm33}, we know that the optimal solution of CP($\hat{V}$) will naturally induce the density vector restricted on $\hat{V}$. Then, after combining the density vector restricted on $S_1$, we finish the proof of induction process.
\end{proof}
 }

\section{Application to Dual-Modular Combinatorial Contracts}
\label{sec:contracts}

As mentioned in the introduction,
given a dual-modular instance $(V; \f, \g)$,
and a contract parameter $\alpha \in [0,1]$,
the agent would like
to make a best response
$\max \{\alpha \cdot \f(S) - \g(S): S \subseteq V \}$.
We write $\gamma = \frac{1}{\alpha}$, and
will consider a wider range $\gamma \geq 0$.
We analyze the best response using two approaches:
(i) a strong duality relationship
with a convex program in Definition~\ref{defn:general_convex},
(ii) direct properties of the density decomposition
in Definition~\ref{defn:density_decomp}.

\noindent \textbf{Hockey-Stick Divergence.}
Recall that setting $\vartheta_\gamma(t) := \max\{t - \gamma, 0\}$
in Definition~\ref{defn:general_convex},
we denote $\mathsf{HS}_\gamma(x \| y) := \D_{\vartheta_\gamma}(x \| y)$.

\noindent \textbf{Density Decomposition.}
Recall that density decomposition in Definition~\ref{defn:density_decomp}
that gives the partition $V = \cup_{i=1}^k S_i$,
with strictly decreasing densities $\rho_i := \rho(S_i | S_{<i})$.

We restate Lemma~\ref{lemma:main_dual_contracts}, and give its proof.

\begin{lemma}[Strong Duality Between Best Response and Hockey-Stick]
\label{lemma:dual_contracts}
For $\gamma \geq 0$, any $S \subseteq V$ and allocation~$(x, y) \in \Bfg$, we have:
$\f(S) - \gamma \cdot \g(S) \leq \mathsf{HS}_\gamma(x \| y)$.

If $i$ is the largest index such that $\rho_i \geq \gamma$,
then equality is attained by the subset $S_{\leq i}$ and any locally maximin~$(x, y)$.
\end{lemma}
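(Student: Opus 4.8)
The plan is to reduce everything to the variational form of the Hockey‑Stick divergence and then read off the optimal set directly from the density decomposition. First I would unfold $\mathsf{HS}_\gamma$: taking $\vartheta_\gamma(t)=\max\{t-\gamma,0\}$ in Definition~\ref{defn:general_convex}, and using Assumption~\ref{assume:g} to guarantee $y_u>0$ for all $u$, one gets $\mathsf{HS}_\gamma(x\|y)=\sum_{u\in V}y_u\max\{x_u/y_u-\gamma,0\}=\sum_{u\in V}\max\{x_u-\gamma y_u,0\}=\max_{T\subseteq V}\big(x(T)-\gamma y(T)\big)$, the supremum form recorded in the introduction, attained at $T=\{u:x_u\ge\gamma y_u\}$. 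The inequality is then immediate: for any $S\subseteq V$ and $(x,y)\in\Bfg$, feasibility gives $x(S)\ge\f(S)$ and $y(S)\le\g(S)$, so since $\gamma\ge0$ we have $\f(S)-\gamma\g(S)\le x(S)-\gamma y(S)\le\max_{T}\big(x(T)-\gamma y(T)\big)=\mathsf{HS}_\gamma(x\|y)$.

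For the equality claim I would take a locally maximin allocation $(x,y)$. By Theorem~\ref{th:main_market} it induces the decomposition density vector, so $x_u/y_u=\rho_j$ for $u\in S_j$, with $\rho_1>\cdots>\rho_k$. Let $i$ be the largest index with $\rho_i\ge\gamma$ (with $S_{\le i}=\emptyset$ if no such index exists, matching the convention $S_0=\emptyset$). Then for $u\in S_j$ with $j\le i$ we have $x_u-\gamma y_u=y_u(\rho_j-\gamma)\ge0$, while for $u\in S_j$ with $j>i$ we have $\rho_j\le\rho_{i+1}<\gamma$ and hence $x_u-\gamma y_u<0$; substituting into the first‑step identity gives $\mathsf{HS}_\gamma(x\|y)=\sum_{u\in S_{\le i}}(x_u-\gamma y_u)=x(S_{\le i})-\gamma y(S_{\le i})$. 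Finally, since $S_{\le i}$ is exactly the level set $\{u:x_u/y_u\ge\rho_i\}$ of the induced densities, the locally maximin condition yields $x(S_{\le i})=\f(S_{\le i})$ and $y(S_{\le i})=\g(S_{\le i})$, so $\mathsf{HS}_\gamma(x\|y)=\f(S_{\le i})-\gamma\g(S_{\le i})$, which is precisely the lower bound of the first part evaluated at $S=S_{\le i}$; hence equality holds there.

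I do not expect a serious obstacle here; the only thing requiring care is matching the maximizer of the variational form with the prefix $S_{\le i}$ at the boundary case $\gamma=\rho_i$, where the terms coming from $S_i$ contribute $0$ to $\mathsf{HS}_\gamma$, so both $S_{\le i}$ and $S_{\le i-1}$ are maximizers and the claimed value is unaffected, together with the degenerate cases $\gamma>\rho_1$ (absorbed by the $S_0=\emptyset$ convention) and $\gamma=0$ (where $\mathsf{HS}_0(x\|y)=\f(V)$ and $S_{\le k}=V$, handled by monotonicity of $\f$). Beyond this bookkeeping the argument is just an unfolding of definitions plus the already‑established fact that locally maximin allocations realize the density decomposition.
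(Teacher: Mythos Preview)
Your argument is correct, and for the inequality it is actually cleaner than the paper's. The paper establishes $\f(S)-\gamma\g(S)\le\mathsf{HS}_\gamma(x\|y)$ by passing through the permutation representation: it writes $(x,y)=(\f^{\vecp},\g^{\vecq})$, brings the elements of $S$ to the front of each permutation in the supports to form $(\overline{\vecp},\overline{\vecq})$, and uses dual-modularity to show $\f^{\vecp}(u)\ge\f^{\overline{\vecp}}(u)$ and $\g^{\vecq}(u)\le\g^{\overline{\vecq}}(u)$ for $u\in S$, from which $\sum_{u\in S}(\f^{\vecp}(u)-\gamma\g^{\vecq}(u))\ge\f(S)-\gamma\g(S)$. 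You bypass all of this by reading $x(S)\ge\f(S)$ and $y(S)\le\g(S)$ straight off the defining inequalities of $\mcal{B}^{\geq}_\f$ and $\mcal{B}^{\leq}_\g$; this is more elementary and makes no use of Fact~\ref{fact:perm_polytope2}. For the equality part the two arguments are essentially the same computation: the paper invokes Lemma~\ref{lemma:maximin_density} to evaluate $\Phi_{\vartheta_\gamma}$ on a locally maximin solution and then telescopes, while you invoke Theorem~\ref{th:main_market} plus the level-set form of the locally maximin condition to get $x(S_{\le i})=\f(S_{\le i})$ and $y(S_{\le i})=\g(S_{\le i})$; both routes are valid since Section~\ref{sec:technical} precedes this lemma.
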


\begin{proof}
By Fact~\ref{fact:local_represent},
we consider some locally maximin solution $(\vecp, \vecq) \in \Delta(\mathcal{S}_V)^2$ and fix some $S \subseteq V$.
For each permutation $\sigma \in \mathcal{S}_V$,
consider the transformed permutation $\overline{\sigma}$,
where all elements in $S$ are brought to the beginning,
but the relative order of elements within $S$ or $V \setminus S$ remains
unchanged.
Note that because of the dual-modularity of the functions $(\f, \g)$,
we have:

 For all $u \in S$, 
$\f^\sigma(u) \geq \f^{\overline{\sigma}}(u)$
and
$\g^\sigma(u) \leq \g^{\overline{\sigma}}(u)$.

Moreover, $\sum_{u \in S} \f^{\overline{\sigma}}(u) = \f(S)$
and $\sum_{u \in S} \g^{\overline{\sigma}}(u) = \g(S)$.

When we apply this transformation analogously to $(\vecp, \vecq)$,
in each permutation distribution, we transfer all the weight
from permutation $\sigma$ to $\overline{\sigma}$ to produce
the transformed pair $(\overline{\vecp}, \overline{\vecq})$.

Similarly, we have for all $u \in S$,
$\f^\vecp(u) \geq \f^{\overline{\vecp}}(u)$
and
$\g^\vecq(u) \leq \g^{\overline{\vecq}}(u)$; moreover, we have:

$\sum_{u \in S} \f^{\overline{\vecp}}(u) = \f(S)$
and $\sum_{u \in S} \g^{\overline{\vecq}}(u) = \g(S)$.

Therefore, we have

\begin{align*}
\mathsf{HS}_\gamma(\f^\vecp \| \g^\vecq) 
& = \sum_{u \in V} \g^\vecq(u) \cdot \max\{\frac{\f^\vecp(u)}{\g^\vecq(u)} - \gamma, 0\} \geq \sum_{u \in S} (\f^\vecp(u) - \gamma \cdot \g^\vecq(u)) \\
& \geq \sum_{u \in S} (\f^{\overline\vecp}(u) - \gamma \cdot \g^{\overline\vecq}(u))
= \f(S) - \gamma \cdot \g(S),
\end{align*}

as required.

For equality, suppose $i$ is the largest index
in the density decomposition such that $\rho_i \geq \gamma$.
Note that Lemma~\ref{lemma:maximin_density} states that
any locally maximin solution has the objective value:

$\sum_{j=1}^k \g(S_j|S_{< j}) \cdot \max\{\rho_j - \gamma , 0\}
= \sum_{j=1}^i \g(S_j|S_{< j}) \cdot (\frac{\f(S_j|S_{< j})}{\g(S_j|S_{< j})} - \gamma) = \f(S_{\leq i}) - \gamma \cdot \g(S_{\leq i})$,

as required.
\end{proof}

\noindent \textbf{Smallest Collection of Critical Values.}
As mentioned in Theorem~\ref{th:main_contracts}, the next lemma shows that the densities in the decomposition are the critical values. In this second approach,
we will use the properties of the density decomposition directly.

\begin{lemma}[Unique Best Response]
Suppose $\g$ is strictly monotone, and
 $\rho_i > \gamma > \rho_{i+1}$.
Then, the unique maximizer
$\max \{\f(S) - \gamma \cdot \g(S): S \subseteq V \}$
is $S_{\leq i}$.
\end{lemma}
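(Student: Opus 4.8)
The plan is to derive uniqueness directly from the tightness analysis of the duality chain that already proves Lemma~\ref{lemma:dual_contracts}, so essentially no new machinery is needed. First I would fix a locally maximin solution $(\vecp, \vecq) \in \Delta(\mcal{S}_V)^2$ (it exists by Theorem~\ref{th:main_maximin} together with Fact~\ref{fact:local_represent}). By Lemma~\ref{lemma:maximin_density} its induced densities are exactly $\rho^{(\vecp,\vecq)}(u) = \rho^*(u) = \rho_j$ for $u \in S_j$, and by Assumption~\ref{assume:g} every coordinate $\g^{\vecq}(u)$ is strictly positive. Since $\rho_1 > \cdots > \rho_k$ and $\rho_i > \gamma > \rho_{i+1}$, no $\rho_j$ equals $\gamma$, which yields the clean dichotomy: $\rho^*(u) > \gamma$ exactly when $u \in S_{\leq i}$, and $\rho^*(u) < \gamma$ exactly when $u \notin S_{\leq i}$.

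Next I would take an arbitrary maximizer $S$ of $\f(S) - \gamma\g(S)$. Since $S_{\leq i}$ is itself a maximizer by Lemma~\ref{lemma:dual_contracts}, we have $\f(S) - \gamma\g(S) = \f(S_{\leq i}) - \gamma\g(S_{\leq i})$, and replaying the estimates in the proof of Lemma~\ref{lemma:dual_contracts} (using $\f^{\vecp}(u) = \g^{\vecq}(u)\,\rho^*(u)$) produces the chain
\[
\f(S) - \gamma\g(S) \;\leq\; \sum_{u\in S}\g^{\vecq}(u)(\rho^*(u)-\gamma) \;\leq\; \sum_{u\in S}\g^{\vecq}(u)\max\{\rho^*(u)-\gamma,\,0\} \;\leq\; \mathsf{HS}_\gamma(\f^{\vecp}\|\g^{\vecq}) \;=\; \f(S_{\leq i}) - \gamma\g(S_{\leq i}).
\]
Here the first inequality uses supermodularity of $\f$, submodularity of $\g$ and $\gamma \geq 0$ (comparing $S$ with the permutation obtained by moving $S$ to the front), the second uses $\max\{t,0\}\geq t$, the third extends the sum from $S$ to $V$ since the omitted terms are nonnegative, and the last equality is Lemma~\ref{lemma:dual_contracts} applied to $S_{\leq i}$. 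Because the two ends of the chain coincide, all three inequalities are equalities.

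Finally I would read off the two inclusions from these equalities. Equality in the second inequality forces $\rho^*(u)-\gamma = \max\{\rho^*(u)-\gamma,0\}\geq 0$ for every $u \in S$, hence (since $\rho^*(u)\neq\gamma$) $\rho^*(u) > \gamma$, \ie $u \in S_{\leq i}$; thus $S \subseteq S_{\leq i}$. Equality in the third inequality forces $\g^{\vecq}(u)\max\{\rho^*(u)-\gamma,0\} = 0$ for every $u \in V\setminus S$, and since $\g^{\vecq}(u) > 0$ this means $\rho^*(u) < \gamma$, \ie $u \notin S_{\leq i}$; thus $S_{\leq i} \subseteq S$. Together $S = S_{\leq i}$, so $S_{\leq i}$ is the unique maximizer.

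I do not anticipate a real obstacle, since the substantive content lives in Lemmas~\ref{lemma:dual_contracts} and~\ref{lemma:maximin_density}. The two points that need care both concern where strictness enters: Assumption~\ref{assume:g} (strict monotonicity of $\g$) is exactly what makes $\g^{\vecq}(u)>0$ for all $u$, and hence what converts ``$\g^{\vecq}(u)\max\{\rho^*(u)-\gamma,0\}=0$'' into ``$\rho^*(u)\le\gamma$'' --- without it the neutral element $w$ of the instance in Section~\ref{sec:strict_example} could be freely added to or removed from a maximizer; and the strict separation $\rho_i > \gamma > \rho_{i+1}$ is what upgrades the weak comparisons $\rho^*(u)\ge\gamma$, $\rho^*(u)\le\gamma$ to the strict membership statements $u \in S_{\leq i}$, $u \notin S_{\leq i}$ used above.
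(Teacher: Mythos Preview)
Your argument is correct, but it takes a different route from the paper's own proof. The paper argues directly by case analysis on an arbitrary $S \neq S_{\leq i}$: if some $S_j$ with $j \leq i$ is not fully contained in $S$, it exhibits the missing piece $R = S_j \setminus S$ and shows, using the density inequality $\f(R \mid S_{<j}) > \gamma\,\g(R \mid S_{<j})$ together with dual-modularity, that adding $R$ strictly increases the objective; symmetrically, if $S$ meets some $S_j$ with $j \geq i+1$, it shows removing $R = S \cap S_j$ strictly increases the objective. Your approach instead squeezes uniqueness out of the tightness conditions in the duality chain of Lemma~\ref{lemma:dual_contracts}, reading $S \subseteq S_{\leq i}$ and $S_{\leq i} \subseteq S$ from equality in the second and third inequalities respectively.

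The paper's proof is more self-contained (it only needs the definition of the decomposition and dual-modularity, not the existence of a locally maximin allocation) and yields an explicit improving move for any non-optimal $S$. Your proof is shorter and makes clear that uniqueness is essentially a free corollary of the strong duality already established; it also isolates very cleanly the two places where strictness enters, namely $\g^{\vecq}(u) > 0$ from Assumption~\ref{assume:g} and $\gamma \neq \rho_j$ from the hypothesis $\rho_i > \gamma > \rho_{i+1}$.
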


\begin{proof}
Suppose $S \subseteq V$ does not equal $S_{\leq i}$.
Then, we show that the objective $\f(S) - \gamma \cdot \g(S)$
can be strictly increased.

\noindent \textbf{Case 1:}
$S$ does not include $S_{\leq i}$.  Suppose $j \leq i$
is the smallest index such that $S$ does not contain $S_j$.  Denote 
$R := S_j \setminus S \neq \emptyset$.

From the construction of the density decomposition,
we have: 

$\f(S \cap S_j|S_{<j}) \leq \rho_j \cdot \g(S \cap S_j|S_{<j})$
and $\f(S_j|S_{<j}) = \rho_j \cdot \g(S_j|S_{<j})$, which implies that

$\f(R|S_{<j}) \geq \rho_j \cdot \g(R|S_{<j}) > \gamma \cdot \g(R|S_{<j})$,
where the last strict inequality follows because $R$ is non-empty
and $\g$ is strictly monotone.

Because of the dual-modularity of $(\f, \g)$,
we have $\f(R|S) \geq \f(R|S_{<j})$ and $\g(R|S) \leq \g(R|S_{<j})$.

Therefore, by adding $R$ to $S$,
the objective can increase by $\f(R|S) - \gamma \cdot \g(R|S) \geq 
\f(R|S_{<j}) - \gamma \cdot \g(R|_{<j}) > 0$.

\noindent \textbf{Case 2:} $S$ includes some element in $S_{\geq i+1}$.
Suppose $j \geq i+1$ is the largest index such that $R := S \cap S_j \neq \emptyset$.
We show that the marginal contribution of $R$ to the objective function given $S \setminus R$ is negative.

The property of the density decomposition gives
that $\f(R|S_{<j}) \leq \rho_j \cdot \g(R|S_{<j}) < \gamma \cdot \g(R|S_{<j})$.

Again, by dual-modularity,
$\f(R|S \setminus R) \leq \f(R|S_{<j})$
and $\g(R|S \setminus R) \geq \g(R|S_{<j})$.

Therefore, we have
$\f(R|S \setminus R) - \gamma \cdot \g(R|S \setminus R)
\leq \f(R|S_{<j}) - \gamma \cdot \g(R|S_{<j}) < 0$.

Hence, removing $R$ from $S$ will strictly increase the objective function.

\end{proof}

\noindent \textbf{Approximation for Contracts Problem.}
Unfortunately, it seems that there is no meaningful approximation guarantee for the contracts problem.
Consider some instance $(V; \f, \g)$
with the decomposition $V = S_1 \cup S_2$, where $n_1 = |S_1|$ and $n_2 = |S_2|$.
The function $\g$ is linear and $\g(u) = 1$ for $u \in S_1$
and $\g(v) = 10 n_1$ for $v \in S_2$.
Moreover, $\f$ is constructed such that 
$\f(S_1) = 2 n_1$ and $\f(S) = 0$ for all $S \subsetneq S_1$;
$\f(S_2|S_1) = 10 n_1 n_2$, and
$\f(A|S_1) = \f(S_2|S) = 0$, for all $A \subsetneq S_2$ and $S \subsetneq S_1$.
Hence, in the density vector $\rho^*$, the elements in $S_1$ have
densities 2 and the elements in $S_2$ have densities 1.

Given $1 < \gamma < 2$,
the best response for the agent is $S_1$
and $\f(S_1) - \gamma \cdot \g(S_1) = (2 - \gamma) \cdot n_1 > 0$.
However, any other non-empty $S \neq S_1$ would lead to
$\f(S) - \gamma \cdot \g(S) < 0$.

\ignore{

\textbf{Combinatorial Contracting Problem} : As mentioned in ~\cite{DBLP:conf/soda/DuttingFT24}, we can describe the combinatorial contracting problem as follows. Given a finite set of actions $V$, an agent, who are allowed to use any subset of actions $V$, is hired by a principal to run a project. The project has two outcomes, success or failure. Only successful project will lead to a reward for the principal, which we refer it as a reward function $f:2^{V}\rightarrow [0,1]$ with respect to subsets of actions $V$ and we normalize it to be 1. In addition, let $g:2^{V}\rightarrow \mathbb{R}_{\geq 0}$ denotes the cost function with respect to the subsets of actions. Note that the principal will pay $\alpha\in [0,1]$ of the obtained reward to the agent and thus we have the following two utility functions $u_a$ for the agent and $u_p$ for the principal respectively.
\begin{align*}
u_{a}(S)=\alpha\cdot f(S)-g(S), u_{p}(S)=(1-\alpha)\cdot f(S),\quad \forall S\subseteq V
\end{align*}
Our goal can be considered as the following two procedures:
\begin{itemize}
\item{Given $\alpha$, find the best response $S_{\alpha}$ that is optimal to $\max_{S\subseteq V}\alpha\cdot f(S)-g(S)$.}
\item{Find the optimal pair that solves the problem $\max\{(1-\alpha)\cdot f(S_{\alpha}):{(\alpha,S_\alpha)}\}$.}
\end{itemize}

\begin{lemma}[Density Decomposition gives All Critical Values]
Assume the total number of decomposition procedures is $k$ and the densities obtained from each procedure are $\rho_1>\dots>\rho_k$. Then, we can conclude that $$S_{<i}=\argmax_{S\subseteq V} \f(S)-\rho_i \cdot \g(S), \forall i=1,\dots,k.$$
\end{lemma}

\begin{proof}
\begin{itemize}
\item{The best response of $\gamma=0$ is $V$.}
\item{The best response of $\rho_1$ is $\emptyset$
By definition of $\rho_1$, we have 
\begin{align*}
\f(S)-\rho_{1}\g(S)&\leq0= \f(\emptyset)-\rho_{1}\g( \emptyset), \forall S\subseteq V
\end{align*}

Hence, the best response of the values in $[\rho_1,\infty)$ is $\emptyset$.

}
\item{The best response of $\gamma=\rho_2$ is $S_{<2}$:
By definition of $\rho_2$, we have 
\begin{align*}
\f(S\cup S_{<2})-\rho_{2}\g(S\cup S_{<2})&\leq \f(S_{<2})-\rho_{2}\g( S_{<2}), \forall S\subseteq V\setminus S_{<2}
\end{align*}
Let $R\subseteq V$ and $R\cap S_1=R_1$,$R\setminus R_1=R_2$.

By submodularity and supermodularity, we have
\begin{align*}
\f(R_2\cup S_1)-\f(S_1)&\geq \f(R_2\cup R_1)-\f(R_1)\\
\g(R_2\cup S_1)-\g(S_1)&\leq \g(R_2\cup R_1)-\g(R_1)
\end{align*}

By summing up the above two inequalities, we have
\begin{align*}
&\f(R_2\cup R_1)-\rho_2\g(R_2\cup R_1)\\
&\leq (\f(R_2\cup S_1)-\f(S_1)+\f(R_1))-\rho_2(\g(R_2\cup S_1)-\g(S_1)+\g(R_1))\\
&\leq \f(R_1)-\rho_2\g(R_1)\\
&\leq \f(S_1)-\rho_2\g(S_1)
\end{align*}
We prove the above last inequality by contradiction. Suppose not, $\f(R_1)-\rho_2\g(R_1)> \f(S_1)-\rho_2\g(S_1)$
\begin{align*}
\rho_2>\frac{\f(S_1)-\f(R_1)}{\g(S_1)-\g(R_1)}\geq \frac{\rho_1\g(S_1)-\rho_1\g(R_1)}{\g(S_1)-\g(R_1)}=\rho_1,
\end{align*}
a contradiction.

}

\item{The best response of $\gamma=\rho_i$ is $S_{<i}$ for $i=2,\dots,k$:
By definition of $\rho_i$, we have 
\begin{align*}
\f(S\cup S_{<i})-\rho_{i}\g(S\cup S_{<i})&\leq \f(S_{<i})-\rho_{i}\g( S_{<i}), \forall S\subseteq V\setminus S_{<i}
\end{align*}
Let $R\subseteq V$ and $R\cap S_j=R_j$ for $j=1,\dots,i-1$, $R\setminus R_{<i}=R_{i}$.

By submodularity and supermodularity, we have
\begin{align*}
\f(R_i\cup S_{<i})-\f(S_{<i})&\geq \f(R_i\cup R_{<i})-\f(R_{<i})\\
\g(R_i\cup S_{<i})-\g(S_{<i})&\leq \g(R_i\cup R_{<i})-\g(R_{<i})
\end{align*}

By summing up the above two inequalities, we have
\begin{align*}
&\f(R_i\cup R_{<i})-\rho_i\g(R_i\cup R_{<i})\\
&\leq (\f(R_i\cup S_{<i})-\f(S_{<i})+\f(R_{<i}))-\rho_i(\g(R_i\cup S_{<i})-\g(S_{<i})+\g(R_{<i}))\\
&\leq \f(R_{<i})-\rho_i\g(R_{<i})\\
&\leq \f(S_{<i})-\rho_i\g(S_{<i})
\end{align*}
We prove the above last inequality by contradiction. Suppose not, $\f(R_{<i})-\rho_i\g(R_{<i})> \f(S_{<i})-\rho_i\g(S_{<i})$
\begin{align*}
&\rho_i>\frac{\f(S_{<i})-\f(R_{<i})}{\g(S_{<i})-\g(R_{<i})}\\
&=\frac{\f(S_{<i})-\f(R_{i-1}\cup S_{<i-1})+\f(R_{i-1}\cup S_{<i-1})-\f(R_{i-1}\cup R_{i-2}\cup S_{<i-2})+\cdots+\f(\bigcup_{j=2}^{i-1}R_{j}\cup S_1)-\f(R_{<i})}{\g(S_{<i})-\g(R_{i-1}\cup S_{<i-1})+\g(R_{i-1}\cup S_{<i-1})-\g(R_{i-1}\cup R_{i-2}\cup S_{<i-2})+\cdots+\g(\bigcup_{j=2}^{i-1}R_{j}\cup S_1)-\g(R_{<i})}\\
&=\frac{\rho_{i-1}(\g(S_{<i})-\g(R_{i-1}\cup S_{<i-1}))+\cdots+\rho_1(\g(\bigcup_{j=2}^{i-1}R_{j}\cup S_1)-\g(R_{<i}))}{\g(S_{<i})-\g(R_{i-1}\cup S_{<i-1})+\cdots+\g(\bigcup_{j=2}^{i-1}R_{j}\cup S_1)-\g(R_{<i})}\\
&\geq\rho_{i-1},
\end{align*}
a contradiction.

}

\end{itemize}
\end{proof}

\begin{corollary}[Density Decomposition gives All Critical Values]
For $\gamma\geq 0$, the critical values are $\{0,\rho_k,\dots,\rho_1\}$ with respect to the problem $\max_{S\subseteq V} \f(S)-\gamma \cdot \g(S)$.
\end{corollary}

\begin{proof}

Assume that the values $\{0,\rho_1,\dots,\rho_k\}$ partition the nonnegative half line $[0,\infty)$ into several intervals, $I_{0}=[0,\rho_k)$, $I_{i}=[\rho_{k-i+1},\rho_{k-i})$ for $i=1,\dots,k-1$, $I_{k}=[\rho_1,\infty)$. Note that the proof of above lemma will still hold if we replace $\gamma=\rho_i$ to $\gamma\in I_{i}$ for $i=0,\dots,k$. Hence, the best response will keep the same for the values belonging to the same interval. Therefore, we finish the proof.

\end{proof}
}

\ignore{
\subsection{Approximation Guarantees for Contracts Problem from Approximate Density Vector}
\label{sec:contracts_approx}

\noindent \textbf{Approximated Response from Agent.}
Suppose we have an estimated density vector $\widetilde{\rho}$
of the correct vector $\rho^*$
with coordinate-wise multiplicative error~$\epsilon$.
We show how to use $\widetilde{\rho}$ to approximate a response from the agent.

Given a parameter, recall with the accurate density vector,
the best response is to pick all elements with density at least~$\gamma$.
However, we can estimate the density of every element with multiplicative error~$\epsilon$.  There are two ways for the agent to approximate a response.

\begin{enumerate}

\item The agent is conservative, and does not want to pick any wrong element.
Hence, the response is $\widetilde{S}_\gamma := \{u \in V: \widetilde{\rho}_u \geq (1+\epsilon) \gamma\}$.
This will ensure that for any $u \in \widetilde{S}_\gamma$,
$\rho^*_u \geq \gamma$.

On the other hand, we may potentially miss some
correct elements in:

 $E_\gamma := \{u \in V: (1-\epsilon) \gamma 
\leq \widetilde{\rho}_u < (1 + \epsilon) \gamma\}$.

Even though we can bound the loss for the agent in terms of $\epsilon$,
the principal may potentially lose $\f(E_\gamma| \widetilde{S}_\gamma)$,
which we do not know how to bound in terms of $\epsilon$.

\item The agent is aggressive, and makes sure that no correct element is missing.
In this case, the response is 
$\widehat{S}_\gamma := \{u \in V: \widetilde{\rho}_u \geq (1-\epsilon) \gamma\}$.

The agent may potentially get a worse utility by taking some wrong elements in $E_\gamma$.

\end{enumerate}

\begin{lemma}
Suppose $\rho^{*}$ and $\widetilde{\rho}$ are the exact density vector and the approximate one satisfying an $\epsilon$ multiplicative error respectively. Given $\gamma\geq 0$, $|\f(\widetilde{S}_{\gamma})-\f(S_{\gamma}^{*})|\leq \max{\{|\f(S^{*}_{\frac{\gamma}{1-\epsilon}})-\f(S_{\gamma}^{*})|,|\f(S^{*}_{\frac{\gamma}{1+\epsilon}})-\f(S_{\gamma}^{*})|\}} $
\end{lemma}

\begin{proof}
Suppose $\gamma\in [\rho^{*}_i,\rho^{*}_{i-1})$
\begin{itemize}
\item{$\widetilde{S}_{\gamma}:=\{u\in V: \widetilde{\rho}_u\geq \gamma\}$, $\widetilde{\rho}_{\gamma,\min}:=\min\{\widetilde{\rho}_u:\widetilde{\rho}_u\geq \gamma\}$, $\widetilde{\rho}_{\gamma,\max}:=\max\{\widetilde{\rho}_u:\widetilde{\rho}_u< \gamma\}$}
\item{${S}^{*}_{\gamma}:=\{u\in V: {\rho}^{*}_u\geq \gamma\}=S^{*}_{<i}$ for some $i=1,\dots,k$.}
\item{$(1-\epsilon)\rho_{u}^{*}\leq \widetilde{\rho}_{u}\leq (1+\epsilon)\rho_{u}^{*}, \forall u\in V$
}
\item{$S^{*}_{\frac{\gamma}{1-\epsilon}}=\{u\in V: {\rho}^{*}_u\geq \frac{\gamma}{1-\epsilon}\}\subseteq\widetilde{S}_{\gamma}=\{u\in V: \widetilde{\rho}_u\geq \gamma\}\subseteq S^{*}_{\frac{\gamma}{1+\epsilon}}= \{u\in V: {\rho}^{*}_u\geq \frac{\gamma}{1+\epsilon}\}$}

\item{$|\f(\widetilde{S}_{\gamma})-\f(S_{\gamma}^{*})|\leq \max{\{|\f(S^{*}_{\frac{\gamma}{1-\epsilon}})-\f(S_{\gamma}^{*})|,|\f(S^{*}_{\frac{\gamma}{1+\epsilon}})-\f(S_{\gamma}^{*})|\}} $}
\item{

\begin{align*}
&|(\f(\widetilde{S}_{\gamma})-\gamma\cdot\g(\widetilde{S}_{\gamma}))-(\f(S_{\gamma}^{*})-\gamma\cdot \g(S_{\gamma}^{*}))|\\
&=|(\f(\widetilde{S}_{\gamma})-\f(S_{\gamma}^{*}))-\gamma\cdot(\g(\widetilde{S}_{\gamma})- \g(S_{\gamma}^{*}))|\\
&\leq \max{\{|\f(S^{*}_{\frac{\gamma}{1-\epsilon}})-\f(S_{\gamma}^{*})|,|\f(S^{*}_{\frac{\gamma}{1+\epsilon}})-\f(S_{\gamma}^{*})|\}}+\gamma\cdot \max{\{|\g(S^{*}_{\frac{\gamma}{1-\epsilon}})-\g(S_{\gamma}^{*})|,|\g(S^{*}_{\frac{\gamma}{1+\epsilon}})-\g(S_{\gamma}^{*})|\}}
\end{align*}

}
\end{itemize}

\end{proof}

}

\section{Iterative Algorithms via Frank-Wolfe}
\label{sec:FW}

As mentioned in Section~\ref{sec:prelim},
Frank-Wolfe is an iterative procedure that can be applied to
a convex program to achieve approximation guarantees.
Specifically,
we will apply it to the convex program
in Definition~\ref{defn:general_convex}
with the equality constraint $\vecp = \vecq$
and the objective function
$\Phi(\vecp) = \D_\vartheta(\f^\vecp \| \g^\vecp)$.

\noindent \textbf{Compact Solution Representation.}
Even though the convex program $\Phi(\vecp)$ in Definition~\ref{defn:general_convex}
has an exponential number of variables in $\vecp$,
it suffices to maintain the corresponding $\f^{\vecp}$ and 
$\g^{\vecp} \in \R^V$ in an iterative algorithm.
We give the generalization of Algorithm~\ref{alg:abstract} as follows.

\begin{algorithm}[H]
\caption{General Abstract Iterative Procedure}\label{alg:general_abstract}
\begin{algorithmic}[1]
\State \textbf{Input:} Supermodular $\f : 2^V \to \R$; submodular $\g: 2^V \to \R$;
step size $\gamma: \Z_{\geq 0} \to [0,1]$;
number $T$ of iterations.

\State    \text{\textbf{Initialize}:} Pick an arbitrary permutation $\sigma_0 \in \mcal{S}_V$;
set $(x^{(0)}, y^{(0)}) \gets (\f^{\sigma_0}, \g^{\sigma_0}) \in \R^V \times \R^V$.

\For{ $k \gets 0 \text{ to } T-1$}

		
		\State $\sigma_{k+1} \gets \textsc{Abstract-Perm}(\f, \g, \gamma_k, x^{(k)}, y^{(k)})$
		\label{ln:abstract_perm}
		
		\State $(c^{(k+1)}, d^{(k+1)}) \gets (\f^{\sigma_{k+1}}, \g^{\sigma_{k+1}})$

     \State $(x^{(k+1)}, y^{(k+1)}) \gets (1 - \gamma_k) \cdot (x^{(k)}, y^{(k)})+\gamma_k \cdot (c^{(k+1)}, d^{(k+1)})$ \label{ln:update2}
 
\EndFor
\State \text{\textbf{return: $(x^{(T)}, y^{(T)}) \in \R^V \times \R^V$} }  
\end{algorithmic}
\end{algorithm}

\noindent \textbf{Intuition.}
To instantiate the general abstract procedure in Algorithm~\ref{alg:general_abstract}
as Frank-Wolfe, in addition to choosing the rate $\gamma_k = \frac{2}{k+2}$,
the procedure \textsc{Abstract-Perm} in line~\ref{ln:abstract_perm}
is supposed to take an (implicit) distribution $\vecp^{(k)} \in \Delta(\mcal{S}_V)$
and returns some permutation distribution in:

$$\min_{\vecp' \in \Delta(\mcal{S}_V)} \langle \vecp', \nabla \Phi(\vecp^{(k)}) \rangle.$$

However, because the domain $\Delta(\mcal{S}_V)$ is a probability distribution,
the minimum can be achieved by a distribution whose probability mass is concentrated
on exactly one permutation.  In other words, for any $\vecp \in \Delta(\mcal{S}_V)$,  we have:

\begin{equation} \label{eq:min_perm}
\min_{\vecp' \in \Delta(\mcal{S}_V)} \langle \vecp', \nabla \Phi(\vecp) \rangle
= \min_{\sigma \in \mcal{S}_V} \frac{\partial{\Phi(\vecp)}}{\partial{p_\sigma}}.
\end{equation}

Hence, the procedure \textsc{Abstract-Perm} will
return a permutation in $\mcal{S}_V$ that achieves the minimum.
Lemma~\ref{lemma:greedy_perm} shows that such a permutation can be 
achieved by sorting the elements $v$ in $V$
according to their induced densities $\rho^{\vecp}(v)$
in non-increasing order. We will use the following fact in the proof.

\begin{fact}[Dot-Product Optimization]
\label{fact:dotproduct}
Suppose $h \in \R^V$. Then, the following
dot-product optimization problems can be achieved by sorting the elements
$v \in V$ according to the values $h(v)$ as follows:

\begin{compactitem}
\item Suppose $\f$ is supermodular. Then,
the optimal value $\min_{\sigma \in \mcal{S}_V} \langle \f^\sigma, h \rangle$
can be attained by a permutation from sorting $v \in V$ in non-increasing order of $h(v)$.

\item Suppose $\g$ is submodular. Then,
the optimal  value $\min_{\sigma \in \mcal{S}_V} \langle \g^\sigma, h \rangle$
can be attained by a permutation from
sorting $v \in V$ in non-decreasing order of $h(v)$.
\end{compactitem}
\end{fact}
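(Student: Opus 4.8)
The plan is to prove the first bullet by a bubble‑sort style exchange argument over permutations, and then obtain the second bullet either by repeating the argument verbatim with submodularity, or formally by reduction to the first. (One could alternatively just invoke Edmonds' polymatroid greedy theorem, since by Fact~\ref{fact:perm_polytope} the vectors $\f^\sigma$ are the vertices of $\mcal{B}^{\geq}_\f$ and the $\g^\sigma$ the vertices of $\mcal{B}^{\leq}_\g$, and a linear functional over a polytope is minimized at a vertex; but the direct exchange argument is short and self‑contained, and matches the style already used in Lemma~\ref{lemma:maximin_opt}.)

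First I would fix $h$ and take any permutation $\sigma$ minimizing $\langle \f^\sigma, h\rangle$. If $\sigma$ does not already list the $h(v)$ in non‑increasing order, there is an \emph{adjacent} pair $u \prec_\sigma v$ with $h(u) < h(v)$. Let $\sigma'$ swap $u$ and $v$ and put $A := \{w : w \prec_\sigma u\}$. Only the $u$‑ and $v$‑coordinates change, and since $\f^\sigma(u)+\f^\sigma(v) = \f(\{u,v\}\mid A) = \f^{\sigma'}(u)+\f^{\sigma'}(v)$, writing $F := \f^\sigma(u) - \f^{\sigma'}(u) = \f(\{u\}\mid A) - \f(\{u\}\mid A\cup\{v\})$ we get $\f^\sigma(v)-\f^{\sigma'}(v) = -F$, hence
$$\langle \f^\sigma, h\rangle - \langle \f^{\sigma'}, h\rangle = F\cdot\bigl(h(u)-h(v)\bigr).$$
Supermodularity of $\f$ gives $\f(\{u\}\mid A)\le \f(\{u\}\mid A\cup\{v\})$, i.e. $F\le 0$, and since $h(u)-h(v)<0$ the right‑hand side is $\ge 0$, so the swap does not increase the objective. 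Each such swap strictly decreases the number of inversions relative to a fixed non‑increasing target order, so after finitely many swaps one reaches a permutation sorting $h$ non‑increasingly whose objective is at most the optimum; therefore such a sorted permutation is optimal, and ties (where $h(u)=h(v)$ and the objective is unchanged) may be broken arbitrarily.

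For the second bullet I would run the identical argument with $\g$ replacing $\f$: a violation of non‑decreasing order is an adjacent pair $u\prec_\sigma v$ with $h(u) > h(v)$, and submodularity gives $G := \g^\sigma(u)-\g^{\sigma'}(u) = \g(\{u\}\mid A)-\g(\{u\}\mid A\cup\{v\}) \ge 0$, so by the same telescoping identity $\langle \g^\sigma,h\rangle - \langle \g^{\sigma'},h\rangle = G\cdot(h(u)-h(v)) \ge 0$. Equivalently, this bullet follows from the first applied to the supermodular function $-\g$ and the vector $-h$, since minimizing $\langle \g^\sigma, h\rangle$ is the same as $-\min_\sigma \langle (-\g)^\sigma, -h\rangle$, and sorting $v$ by $-h(v)$ non‑increasingly is the same as sorting by $h(v)$ non‑decreasingly.

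There is no deep obstacle here; the only points requiring care are (i) insisting on \emph{adjacent} transpositions, so that the prefix set $A$ is common to $u$ and $v$ and the telescoping identity $\f^\sigma(u)+\f^\sigma(v)=\f^{\sigma'}(u)+\f^{\sigma'}(v)$ is exact; (ii) reading off the correct sign of $F$ (resp.\ $G$) from super‑ (resp.\ sub‑)modularity; and (iii) justifying termination of the exchange process via the inversion‑count potential, exactly as in the analysis of bubble sort.
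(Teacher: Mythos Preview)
Your exchange argument is correct and is the standard way to justify this fact. The paper itself does not supply a proof here: Fact~\ref{fact:dotproduct} is stated as a known result (it is essentially the greedy algorithm for linear optimization over base polymatroids/contrapolymatroids, cf.\ Fact~\ref{fact:perm_polytope} and~\cite{SFMCP}) and is then invoked in the proof of Lemma~\ref{lemma:greedy_perm}. So there is no ``paper's own proof'' to compare against; your bubble-sort exchange argument is exactly the elementary proof one would give, and it matches the swap machinery already used in Lemma~\ref{lemma:maximin_opt}.

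One small slip in your reduction for the second bullet: since $(-\g)^\sigma = -\g^\sigma$, we have $\langle(-\g)^\sigma,-h\rangle = \langle \g^\sigma,h\rangle$ directly, so minimizing $\langle \g^\sigma,h\rangle$ is the same problem as minimizing $\langle(-\g)^\sigma,-h\rangle$, not its negative. This does not affect the conclusion about the optimizing permutation, and in any case your direct repetition of the swap argument for submodular $\g$ is already complete and correct.
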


\begin{lemma}[Greedy Permutation Achieves Minimum Dot-Product with Gradient]
\label{lemma:greedy_perm}
Suppose $\vartheta$ is a differentiable convex function
in the objective function $\Phi(\vecp):= \D_\vartheta(\f^\vecp \| \g^\vecp)$.
Then, for any $\vecp \in \Delta(\mcal{S}_V)$,
a minimum permutation~$\sigma$ in equation~(\ref{eq:min_perm})
can be achieved by 
sorting the elements $v$ in $V$
according to their induced densities $\rho^{\vecp}_v= \frac{\f^\vecp_v}{\g^\vecp_v}$
in non-increasing order.
\end{lemma}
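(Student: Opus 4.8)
The plan is to differentiate $\Phi$ coordinate-wise in $\vecp$, observe that the resulting gradient entry splits as a sum of a dot-product against the supermodular base vector $\f^\sigma$ and a dot-product against the submodular base vector $\g^\sigma$, and then apply Fact~\ref{fact:dotproduct} to each piece. The punch line is that, although the two dot-products would in general be minimized by different permutations, here they are minimized by the \emph{same} one — the density-sorted permutation — because the weight multiplying $\f^\sigma_u$ is monotone \emph{increasing} in the density $\rho^\vecp_u$ while the weight multiplying $\g^\sigma_u$ is monotone \emph{decreasing} in $\rho^\vecp_u$.

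Concretely, I would set $F_u := \f^\vecp_u$, $G_u := \g^\vecp_u$, and $R_u := F_u/G_u = \rho^\vecp_u$; by Assumption~\ref{assume:g} every $G_u$ is strictly positive (it is a convex combination of the positive marginals $\g^\sigma_u$), so $\Phi$ is differentiable at $\vecp$. Since $\f^\vecp$ and $\g^\vecp$ are linear in $\vecp$ with $\partial_{p_\sigma} F_u = \f^\sigma_u$ and $\partial_{p_\sigma} G_u = \g^\sigma_u$, and the perspective map $\psi(F,G) := G\,\vartheta(F/G)$ satisfies $\partial_F \psi = \vartheta'(F/G)$ and $\partial_G \psi = \vartheta(F/G) - (F/G)\,\vartheta'(F/G)$, the chain rule yields
\begin{equation*}
\frac{\partial \Phi(\vecp)}{\partial p_\sigma} \;=\; \sum_{u \in V}\Big(\vartheta'(R_u)\cdot \f^\sigma_u \;+\; \zeta(R_u)\cdot \g^\sigma_u\Big) \;=\; \langle \f^\sigma, a\rangle + \langle \g^\sigma, b\rangle,
\end{equation*}
where $a_u := \vartheta'(R_u)$, $b_u := \zeta(R_u)$, and $\zeta(t) := \vartheta(t) - t\,\vartheta'(t)$.

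Then I would record the two relevant monotonicities: convexity of $\vartheta$ makes $\vartheta'$ non-decreasing, so $R_u \ge R_v \Rightarrow a_u \ge a_v$; and $\zeta$ is non-increasing (for $0 \le x \le y$, first-order convexity gives $\vartheta(x) - \vartheta(y) \ge \vartheta'(y)(x - y) \ge x\vartheta'(x) - y\vartheta'(y)$, i.e.\ $\zeta(x) \ge \zeta(y)$), so $R_u \ge R_v \Rightarrow b_u \le b_v$. Hence a permutation $\sigma^\star$ that orders $V$ in non-increasing order of $R_u$ also orders it in non-increasing order of $a_u$ and in non-decreasing order of $b_u$. Applying Fact~\ref{fact:dotproduct} with $h = a$ (supermodular $\f$) and with $h = b$ (submodular $\g$), this single $\sigma^\star$ attains both $\min_\sigma \langle \f^\sigma, a\rangle$ and $\min_\sigma \langle \g^\sigma, b\rangle$; adding the two equalities gives $\partial \Phi(\vecp)/\partial p_{\sigma^\star} \le \partial \Phi(\vecp)/\partial p_\sigma$ for every $\sigma$, which together with equation~(\ref{eq:min_perm}) is exactly the claim.

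The step I expect to need the most care is the tie-breaking when $\vartheta'$ (equivalently $\zeta$) is constant on an interval containing several of the $R_u$: there the $\f$-term and the $\g$-term are each indifferent to the internal order of the tied block, so the density-sorted order remains simultaneously optimal for both — this is precisely the ``can be attained by a permutation from sorting'' phrasing of Fact~\ref{fact:dotproduct}, and it is what lets us combine the two minimizations without conflict. I would also note that only differentiability of $\vartheta$ is invoked here (no strict convexity), consistent with the fact that the same greedy permutation serves every admissible choice of $\vartheta$.
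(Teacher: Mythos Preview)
Your proposal is correct and follows essentially the same approach as the paper: derive the gradient via the chain rule to split $\partial\Phi/\partial p_\sigma$ into $\langle \f^\sigma, a\rangle + \langle \g^\sigma, b\rangle$ with $a_u = \vartheta'(R_u)$ and $b_u = \zeta(R_u)$, then use the monotonicity of $\vartheta'$ and the anti-monotonicity of $\zeta$ to invoke Fact~\ref{fact:dotproduct} on each term with the same density-sorted permutation. Your treatment of tie-breaking is a bit more explicit than the paper's, but the argument is otherwise identical.
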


\begin{proof}
For $\sigma \in \mcal{S}_V$,
we derive:

\begin{align} \label{eq:grad}
\frac{\partial{\Phi}(\vecp)}{\partial{p_\sigma}}
=\sum_{v\in V}\f^{\sigma}_v\cdot\vartheta'(\rho_{v}^{\vecp}) +\sum_{v \in V}\g_{v}^{\sigma}\cdot(\vartheta(\rho_{v}^{\vecp})- \rho_{v}^{\vecp} \cdot \vartheta'(\rho_{v}^{\vecp})).
\end{align}

We next apply Fact~\ref{fact:dotproduct} as follows.

\begin{compactitem}

\item In the first term of~(\ref{eq:grad}),
observe that the function $t \mapsto \vartheta'(t)$ is non-decreasing,
because $\vartheta$ is convex.

\item In the second term of~(\ref{eq:grad}),
the function $\zeta(t) := \vartheta(t) - t \cdot \vartheta'(t)$ is non-increasing.
This is because, for $0 \leq x < y$,
$\vartheta(y) - \vartheta(x) \leq  (y - x) \vartheta'(y)  \leq
y \vartheta'(y) - x \vartheta'(x)$.
Note that the first inequality follows from the mean value theorem and the convexity
of $\vartheta$.

\end{compactitem}

Therefore, by Fact~\ref{fact:dotproduct},
a permutation~$\sigma$ that minimizes~(\ref{eq:grad})
can be achieved by sorting the elements~$v \in V$
in non-increasing order according to the values $\rho^{\vecp}_v$.
\end{proof}

\subsection{Convergence Analysis of Frank-Wolfe}

We summarize the convergence analysis of Frank-Wolfe.

\begin{definition}[Curvature Constant~\cite{DBLP:conf/icml/Jaggi13,DBLP:conf/esa/HarbQC23}]
Let $\mathcal{F}: \mcal{X} \rightarrow \mathbb{R}$ be a convex and (twicely) differentiable function, where $\mcal{X}$ is a compact convex set
in some Euclidean space.

The curvature constant $C_{\mathcal{F}}$
has the following upper bound:

\begin{align*}
C_{\mathcal{F}} & :=\sup\limits_{s',s''\in \mcal{X},\lambda\in[0,1],s=(1-\lambda)s'+\lambda s''}\frac{2}{\lambda^2}(\mathcal{F}(s)-\mathcal{F}(s')-\langle s-s',\nabla \mathcal{F}(s')\rangle) \\
& \leq \mathsf{Diam}(\mcal{X})^2 \cdot \|\nabla^2 \mcal{F}\|,
\end{align*}

where $\mathsf{Diam}(\mcal{X}) = \max_{s, s' \in \mcal{X}} \|s - s'\|_2$
and $\|\nabla^2 \mcal{F}\|$ is the spectral norm
of the Hessian of $\mcal{F}$.

%
%
\end{definition}

\begin{fact}[Convergence Rates of Frank-Wolfe~\cite{DBLP:conf/icml/Jaggi13,DBLP:conf/esa/HarbQC23}]
Initialize $x^{(0)} \in \mcal{X}$ arbitrarily
with the update $x^{(k+1)} \gets (1 - \gamma_k) \cdot x^{(k)}
+ \gamma_k \cdot s^{(k)}$
and $s^{(k)} = \argmin_{s \in \mcal{X}} \langle s, \nabla \mcal{F}(x^{(k)})\rangle$. 

For $T \geq 1$, we have the following upper bounds
$\mcal{F}(x^{(T)}) - \mcal{F}(x^*)$ on the difference
from the optimal objective value:

\begin{compactitem}
\item For $\gamma_k := \frac{2}{k+2}$, the upper bound is $C_\mcal{F} \cdot \frac{2}{T+2}$.

\item For $\gamma_k := \frac{1}{k+1}$, the upper bound
is $C_\mcal{F} \cdot O(\frac{\log T}{T})$.

\end{compactitem}

\end{fact}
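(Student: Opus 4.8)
The plan is to prove the \textbf{Fact} by the textbook Frank--Wolfe argument: a one-step ``descent'' inequality driven by the curvature constant, followed by an induction on the iteration count tailored to each step-size schedule. Throughout, write $h_k := \mcal{F}(x^{(k)}) - \mcal{F}(x^*)$ for the primal gap and $g_k := \langle x^{(k)} - s^{(k)}, \nabla \mcal{F}(x^{(k)}) \rangle = \max_{s \in \mcal{X}} \langle x^{(k)} - s, \nabla\mcal{F}(x^{(k)})\rangle$ for the Frank--Wolfe (duality) gap, where the second equality holds because $s^{(k)}$ is a linear minimizer of $\nabla\mcal{F}(x^{(k)})$ over $\mcal{X}$. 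Convexity of $\mcal{F}$ then gives $g_k \ge \langle x^{(k)} - x^*, \nabla\mcal{F}(x^{(k)}) \rangle \ge \mcal{F}(x^{(k)}) - \mcal{F}(x^*) = h_k \ge 0$.

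First I would establish the descent inequality. Apply the definition of $C_\mcal{F}$ with $s' = x^{(k)}$, $s'' = s^{(k)}$, $\lambda = \gamma_k$, so that $s = (1-\gamma_k) x^{(k)} + \gamma_k s^{(k)} = x^{(k+1)}$ and $s - s' = \gamma_k(s^{(k)} - x^{(k)})$; this yields directly
$$\mcal{F}(x^{(k+1)}) \le \mcal{F}(x^{(k)}) + \gamma_k \langle s^{(k)} - x^{(k)}, \nabla\mcal{F}(x^{(k)})\rangle + \tfrac{\gamma_k^2}{2} C_\mcal{F} = \mcal{F}(x^{(k)}) - \gamma_k g_k + \tfrac{\gamma_k^2}{2}C_\mcal{F}.$$
Subtracting $\mcal{F}(x^*)$ and using $g_k \ge h_k$ gives the master recurrence $h_{k+1} \le (1-\gamma_k) h_k + \tfrac{\gamma_k^2}{2}C_\mcal{F}$. (If one also wants the bound $C_\mcal{F} \le \mathsf{Diam}(\mcal{X})^2 \cdot \|\nabla^2\mcal{F}\|$ from the definition, it follows from Taylor's theorem along the segment $[s', s]$ together with $\|s - s'\| = \lambda\|s'' - s'\| \le \lambda\,\mathsf{Diam}(\mcal{X})$.)

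Then I would close the two cases by induction on $k$. For $\gamma_k = \tfrac{2}{k+2}$: claim $h_k \le \tfrac{2C_\mcal{F}}{k+2}$. The base case $k=1$ uses $\gamma_0 = 1$, which kills the $h_0$ term: $h_1 \le \tfrac12 C_\mcal{F} \le \tfrac{2C_\mcal{F}}{3}$. For the inductive step, substituting the hypothesis into the recurrence gives $h_{k+1} \le \tfrac{k}{k+2}\cdot\tfrac{2C_\mcal{F}}{k+2} + \tfrac{2C_\mcal{F}}{(k+2)^2} = \tfrac{2C_\mcal{F}(k+1)}{(k+2)^2}$, and $\tfrac{k+1}{(k+2)^2} \le \tfrac{1}{k+3}$ since $(k+1)(k+3) \le (k+2)^2$; taking $k = T$ gives the bound $\tfrac{2}{T+2}C_\mcal{F}$. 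For $\gamma_k = \tfrac{1}{k+1}$: multiply the recurrence by $k+1$ to get $(k+1)h_{k+1} \le k\,h_k + \tfrac{C_\mcal{F}}{2(k+1)}$, so $a_k := k\,h_k$ satisfies $a_{k+1} \le a_k + \tfrac{C_\mcal{F}}{2(k+1)}$; summing from $a_1 = h_1 \le \tfrac12 C_\mcal{F}$ yields $a_k \le \tfrac{C_\mcal{F}}{2}\bigl(1 + \sum_{j=2}^{k}\tfrac1j\bigr) \le \tfrac{C_\mcal{F}}{2}(1+\ln k)$, i.e. $h_T \le \tfrac{C_\mcal{F}(1+\ln T)}{2T} = O\!\bigl(\tfrac{\log T}{T}\,C_\mcal{F}\bigr)$.

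There is no real obstacle here: the argument is entirely standard. The only points needing care are (i) handling the first step ($\gamma_0 = 1$) separately so the induction has a clean base case, and (ii) observing that the descent inequality uses only the finiteness of $C_\mcal{F}$ (which is why ``twice differentiable'' is invoked only for the convenient upper bound on $C_\mcal{F}$, not for the rate itself). Since the statement is quoted from \cite{DBLP:conf/icml/Jaggi13,DBLP:conf/esa/HarbQC23}, one could also simply cite it, but the short argument above is self-contained.
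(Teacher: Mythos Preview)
Your proof is correct and follows the standard Frank--Wolfe analysis. The paper itself does not supply a proof for this Fact; it is simply cited from~\cite{DBLP:conf/icml/Jaggi13,DBLP:conf/esa/HarbQC23}, so your self-contained argument (descent inequality from the curvature constant, convexity to bound the Frank--Wolfe gap by the primal gap, then induction/telescoping for each step-size schedule) is exactly what one would write if spelling it out, and matches the approach in those references.
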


\noindent \textbf{Applying Frank-Wolfe to the General Convex Program
in Definition~\ref{defn:general_convex}.}
Recall that there are two equivalent perspectives.
We can choose the domain as either (i) vectors in polytopes~$\mcal{B}^{\geq}_\f \times \mcal{B}^{\leq}_\g$,
or (ii)~permutation distribution~$\Delta(\mcal{S}_V)$.  In Section~\ref{sec:technical},
we see that the permutation distribution is convenient.  However, we shall see that
the polytope perspective is going to give a better bound on $C_\Phi$.

We shall consider several choices of the convex function~$\vartheta$:

\begin{enumerate}[(i)]

\item $\vartheta(t) := t^2$;

\item $\vartheta(t) := t \log t$;

\item $\vartheta(t) := - \log t$.

\end{enumerate}

We first consider case~(i), where the 
objective function
$\Phi: \R^V \times \R^V \rightarrow \R$ becomes:

$$\Phi(x ,y) = \sum\limits_{u\in V} y_u \cdot (\frac{x_u}{y_u})^2 = \sum_{u \in V} \frac{x_u^2}{y_u}.$$

\noindent \textbf{Parameters.}
We will express the bounds in terms of the parameters:
$\f_V := \f(V)$, $\g_V := \g(V)$, $\f_{\max} := \max_{\sigma, u} \f^{\sigma}_u$
and $\f_{\min} := \min_{\sigma, u} \f^\sigma_u$;
define $\g_{\max}$ and $\g_{\min}$ similarly.
For notational convenience, we consider the normalization $\f_V = \g_V = 1$,
and use the upper bounds $\f_{\max}, \g_{\max} \leq 1$.

\begin{lemma}[Diameter Bound]
$\mathsf{Diam}(\mcal{B}^{\geq}_\f \times \mcal{B}^{\leq}_\g)^2 \leq 2(\f_V^2 + \g_V^2) = 4$.
\end{lemma}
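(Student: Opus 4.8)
The plan is to exploit the fact that both bases live inside probability-simplex-like sets with non-negative coordinates, so every point has small $\ell_2$ norm, and the triangle-type inequality $(a-b)^2 \le a^2 + b^2$ for non-negative $a,b$ does the rest. Concretely, since the squared diameter of a product set splits as $\mathsf{Diam}(\mcal{B}^{\geq}_\f \times \mcal{B}^{\leq}_\g)^2 = \max_{x,x' \in \mcal{B}^{\geq}_\f}\|x-x'\|_2^2 + \max_{y,y' \in \mcal{B}^{\leq}_\g}\|y-y'\|_2^2$, it suffices to bound each factor separately by $2\f_V^2$ and $2\g_V^2$ respectively.

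For the first factor, I would take any $x, x' \in \mcal{B}^{\geq}_\f$. By definition of $\Delta_\f(V)$, both vectors are non-negative and satisfy $x(V) = x'(V) = \f_V$. Using $x_u, x'_u \ge 0$ we get $(x_u - x'_u)^2 \le x_u^2 + x_u'^2$ coordinatewise, hence $\|x - x'\|_2^2 \le \sum_{u} x_u^2 + \sum_u x_u'^2$. Then, again because the coordinates are non-negative, $\sum_u x_u^2 \le \bigl(\sum_u x_u\bigr)^2 = \f_V^2$, and likewise $\sum_u x_u'^2 \le \f_V^2$, giving $\|x-x'\|_2^2 \le 2\f_V^2$. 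The identical argument applied to $y, y' \in \mcal{B}^{\leq}_\g \subseteq \Delta_\g(V)$ yields $\|y - y'\|_2^2 \le 2\g_V^2$.

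Adding the two bounds gives $\mathsf{Diam}(\mcal{B}^{\geq}_\f \times \mcal{B}^{\leq}_\g)^2 \le 2\f_V^2 + 2\g_V^2 = 2(\f_V^2 + \g_V^2)$, and under the normalization convention $\f_V = \g_V = 1$ this equals $4$, as claimed. There is no real obstacle here; the only thing to be careful about is that the bound uses nothing about supermodularity or submodularity — only that the bases sit inside $\Delta_\f(V)$ and $\Delta_\g(V)$, i.e. membership in the ambient simplices — so the estimate is in fact quite loose but entirely sufficient for controlling the curvature constant $C_\Phi$ later.
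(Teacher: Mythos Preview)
Your proof is correct and follows essentially the same approach as the paper: split the squared diameter of the product into the sum of the squared diameters of the factors, then bound each factor by $2\f_V^2$ (resp.\ $2\g_V^2$) using only that the vectors lie in the non-negative simplex $\Delta_\f(V)$ (resp.\ $\Delta_\g(V)$). The paper's proof is terser and omits the intermediate inequalities you spell out, but the argument is the same.
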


\begin{proof}
Observe that 
$\mathsf{Diam}(\mcal{B}^{\geq}_\f \times \mcal{B}^{\leq}_\g)^2
= \mathsf{Diam}(\mcal{B}^{\geq}_\f)^2 + \mathsf{Diam}(\mcal{B}^{\leq}_\g)^2$.

For $x, x' \in \mcal{B}^{\geq}_\f$,
observe that $\sum_{u \in V} x_u = \sum_{u \in V} x'_u = \f_V$.
Hence, $\|x - x'\|_2^2 \leq 2 \f_V^2$.

Similarly, for $y, y' \in \mcal{B}^{\leq}_\g$,
$\|y - y'\|_2^2 \leq 2 \g_V^2$.
\end{proof}

\begin{lemma}[Spectral Norm of Hessian $\vartheta(t)=t^2$]
For $(x,y) \in \mcal{B}^{\geq}_\f \times \mcal{B}^{\leq}_\g$,

$\| \nabla^2 \Phi(x,y) \| = \max_{u \in V} \frac{2(x_u^2 + y_u^2)}{y_u^3}
\leq \frac{4}{\g_{\min}^3}.$

\end{lemma}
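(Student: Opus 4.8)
The plan is to exploit that $\Phi$ is separable across the coordinates of $V$ and that each resulting $2\times 2$ Hessian block is rank one.

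First I would write $\Phi(x,y)=\sum_{u\in V}\phi(x_u,y_u)$ with $\phi(s,t):=s^2/t$, so that the full Hessian $\nabla^2\Phi(x,y)$ with respect to the $2|V|$ variables $(x,y)$ is block diagonal: the only second partials that survive are those within each pair $(x_u,y_u)$ for a fixed $u$. Since the spectral norm of a symmetric block-diagonal matrix equals the maximum of the spectral norms of its blocks, it suffices to analyze each block
\[
H_u := \nabla^2\phi(x_u,y_u) = \begin{pmatrix} 2/y_u & -2x_u/y_u^2 \\ -2x_u/y_u^2 & 2x_u^2/y_u^3 \end{pmatrix}.
\]

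Next I would observe the factorization $H_u = \tfrac{2}{y_u^3}\,v_u v_u^{\top}$, where $v_u:=(y_u,\,-x_u)^{\top}$. Thus $H_u$ is positive semidefinite of rank one (recall $y_u>0$ by Assumption~\ref{assume:g}), so its unique nonzero eigenvalue — which equals $\|H_u\|$ — is $\tfrac{2}{y_u^3}\|v_u\|_2^2 = \tfrac{2(x_u^2+y_u^2)}{y_u^3}$. Taking the maximum over $u\in V$ yields the claimed identity $\|\nabla^2\Phi(x,y)\| = \max_{u\in V}\tfrac{2(x_u^2+y_u^2)}{y_u^3}$.

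Finally, for the numerical estimate I would invoke feasibility of $(x,y)$. Since $x\in\mcal{B}^{\geq}_\f$ has nonnegative coordinates summing to $\f(V)=1$, each $x_u\le 1$; likewise $y_u\le\g(V)=1$. For the lower bound, Fact~\ref{fact:perm_polytope} expresses $y$ as a convex combination of the vectors $\g^\sigma$, every coordinate of which is at least $\g_{\min}$, so $y_u\ge\g_{\min}$ and hence $y_u^3\ge\g_{\min}^3$. Combining these, $\tfrac{2(x_u^2+y_u^2)}{y_u^3}\le\tfrac{2\cdot 2}{\g_{\min}^3}=\tfrac{4}{\g_{\min}^3}$ for every $u\in V$, which is exactly the desired bound. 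I do not expect a real obstacle here; the only points needing a little care are checking that no cross-coordinate second partials appear (so the Hessian genuinely block-diagonalizes) and justifying the lower bound $y_u\ge\g_{\min}$ through the permutation representation rather than directly from the polymatroid inequalities.
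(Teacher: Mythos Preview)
Your proof is correct and follows essentially the same approach as the paper: both recognize that separability of $\Phi$ reduces the Hessian to $2\times 2$ blocks indexed by $u\in V$, and both identify the nonzero eigenvalue of each block as $2(x_u^2+y_u^2)/y_u^3$. The only cosmetic difference is that the paper orders variables as $(x_1,\dots,x_n,y_1,\dots,y_n)$ and writes out the characteristic polynomial, whereas your rank-one factorization $H_u=\tfrac{2}{y_u^3}v_uv_u^\top$ is a slightly slicker way to read off the same eigenvalue.
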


\begin{proof}
We consider the non-zero partial derivatives:

$$\frac{\partial{\Phi}(x, y)}{\partial{x_u}}=\frac{2x_u}{y_u}, \frac{\partial{\Phi}}{\partial{y_u}}(x,y)=-\frac{x^2_u}{y^2_u}$$

$$
\frac{\partial^2{\Phi}(x,y)}{\partial{x^2_u}} = \frac{2}{y_u},
\frac{\partial^2{\Phi}(x, y)}{\partial{y^2_u}}=\frac{2x^2_u}{y^3_u},
\frac{\partial^2{\Phi}(x,y)}{\partial{x_u}\partial{y_u}}=-\frac{2x_u}{y^2_u}.$$

\ignore{
\begin{align*}
\frac{\partial{\Phi}}{\partial{x_u}}(x,y)&=\frac{2x_u}{y_u}\\
\frac{\partial{\Phi}}{\partial{y_u}}(x,y)&=-\frac{x^2_u}{y^2_u}\\
\frac{\partial^2{\Phi}}{\partial{x_u}\partial{y_u}}(x,y)&=-\frac{2x_u}{y^2_u}\\
\frac{\partial^2{\Phi}}{\partial{x^2_u}}(x,y)&=\frac{2}{y_u}\\
\frac{\partial^2{\Phi}}{\partial{y^2_u}}(x,y)&=\frac{2x^2_u}{y^3_u}
\end{align*}
}

Note that the Hessian matrix has the form 
$\nabla \Phi(x,y) = \begin{pmatrix}
{D}_1&{D}_2\\
D_2&D_3
\end{pmatrix}$, where $D_i$ are $|V|\times |V|$ diagonal matrices for $i=1,2,3$. Then, the eigenvalues of the Hessian matrix are the roots of the polynomial in $\lambda$: 

$$\lambda^{|V|}\cdot\Pi_{u\in V}(\lambda-(\frac{2}{y_u}+\frac{2x_u^2}{y^3_u}))=0.$$ 

Thus, we have $$\|\nabla^{2}\Phi(x,y)\|_2=\max\limits_{u\in V}\frac{2(x_u^2+y_u^2)}{y_u^3},$$

as required.

\ignore{
Assume $F_{\max}=\max\limits_{u,\sigma}\f^{\sigma}_u$, $F_{\min}=\min\limits_{u,\sigma}\f^{\sigma}_u$, $G_{\max}=\max\limits_{u,\sigma}\g^{\sigma}_u$, $G_{\min}=\min\limits_{u,\sigma}\g^{\sigma}_u$.
\begin{align*}
C_{\Phi}&\leq (diam(\mcal{B}^{\geq}_\f\times\mcal{B}^{\leq}_\g))^2\cdot \sup\limits_{(x,y)\in \mcal{B}^{\geq}_\f\times\mcal{B}^{\leq}_\g}\|\nabla^2\Phi(x,y)\|_2\\
&\leq (diam(\mcal{B}^{\geq}_\f)^2+diam(\mcal{B}^{\leq}_\g)^2)\cdot \sup\limits_{(x,y)\in \mcal{B}^{\geq}_\f\times\mcal{B}^{\leq}_\g}\|\nabla^2\Phi(x,y)\|_2\\
&\leq 2\cdot|V|\cdot((F_{\max}-F_{\min})^2+(G_{\max}-G_{\min})^2)\cdot (F_{\max}^2+G_{\max}^2)\cdot G_{\min}^{-3}\\
&\leq 2\cdot|V|\cdot(F^2_{\max}+F^2_{\min}+G^2_{\max}+G^2_{\min})\cdot (F_{\max}^2+G_{\max}^2)\cdot G_{\min}^{-3}\\
&\leq 4\cdot|V|\cdot G_{\min}^{-3}\cdot (F_{\max}^2+G_{\max}^2)^2
\end{align*}
}
\end{proof}

\subsection{Approximation Analysis for Density Vector}

\noindent \textbf{Deriving Approximation Guarantee for Density Vector from
Guarantee for Objective Function.}
Recall that we consider
a feasible solution $z = (x, y) \in \mcal{B}^{\geq}_\f\times\mcal{B}^{\leq}_\g$
with the objective function
$\Phi(z) := \sum_{u \in V} \frac{x_u^2}{y_u}$.
Let $z^* = (x^*, y^*)$ be some optimal solution;
observe that $z^*$ is not unique, but any optimal solution
will induce a unique density vector $\rho^*$.

Given an upper bound on $\mcal{E}(z) := \Phi(z) - \Phi(z^*)$,
we would like to derive some approximation guarantee
for the induced density vector $\rho \in \R^V$ given by $\rho_u := \frac{x_u}{y_u}$ for $u \in V$.

\noindent \textbf{Overall Strategy.}
Consider $\theta \in [0,1]$
and let $z(\theta) := z^* + \theta \cdot (z - z^*)$.
Define the function $\Gamma: [0,1] \rightarrow \R$
by $\Gamma(\theta) := \Phi(z(\theta))$.
We shall derive an appropriate uniform lower bound $\Gamma''(\theta) \geq L$ for all $\theta \in [0,1]$,
from which we can conclude

$$\mcal{E}(z) = \Phi(z) - \Phi(z^*) = \Gamma(1) - \Gamma(0) \geq \frac{L}{2}.$$

\begin{lemma}[$\vartheta(t)=t^2$]
For $\theta \in [0,1]$ and $u \in V$,
$\Gamma''(\theta) \geq 
2 \g_{\min}^2 \cdot \|\rho - \rho^*\|_2^2
\geq 2 (\f_{\min} \cdot \g_{\min})^2 \cdot \epsilon_u^2$,
where $\epsilon_u := \frac{\rho_u - \rho^*_u}{\rho^*_u}$.
\end{lemma}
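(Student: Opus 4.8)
The plan is to reduce everything to a coordinate-wise computation. Since $\Phi(x,y)=\sum_{u\in V}\phi(x_u,y_u)$ with $\phi(a,b):=a^2/b$ the perspective of $t\mapsto t^2$, and since along the segment $z(\theta)$ every $y$-coordinate $y_u(\theta):=(1-\theta)y^*_u+\theta y_u$ is a convex combination of coordinates of points of $\mcal{B}^{\leq}_\g$ — hence lies in $[\g_{\min},\g_{\max}]\subseteq[\g_{\min},1]$ by Fact~\ref{fact:perm_polytope} and Assumption~\ref{assume:g} — the denominators never vanish, so $\Gamma$ is twice continuously differentiable on $[0,1]$ and we may differentiate termwise. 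First I would record the rank-one Hessian $\nabla^2\phi(a,b)=\frac{2}{b^3}\,(b,-a)^{\top}(b,-a)$; then, writing $\Delta x_u:=x_u-x^*_u$, $\Delta y_u:=y_u-y^*_u$ and $x_u(\theta):=x^*_u+\theta\,\Delta x_u$, the chain rule gives
\[
\Gamma''(\theta)=\sum_{u\in V}\frac{2}{y_u(\theta)^3}\bigl(y_u(\theta)\,\Delta x_u-x_u(\theta)\,\Delta y_u\bigr)^2 .
\]

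The one observation that is not pure bookkeeping is that the bilinear factor in the numerator is independent of $\theta$: substituting $y_u(\theta)=y^*_u+\theta\Delta y_u$ and $x_u(\theta)=x^*_u+\theta\Delta x_u$, the $\theta$-terms cancel, leaving $y^*_u\Delta x_u-x^*_u\Delta y_u=x_uy^*_u-x^*_uy_u=(\rho_u-\rho^*_u)\,y_uy^*_u$, i.e.\ precisely the density gap scaled by $y_uy^*_u$. Hence
\[
\Gamma''(\theta)=\sum_{u\in V}2\,(\rho_u-\rho^*_u)^2\cdot\frac{(y_uy^*_u)^2}{y_u(\theta)^3}.
\]

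It then remains only to bound the surviving factor below by $\g_{\min}^2$: since $y_u(\theta)\le\max(y_u,y^*_u)\le\g_{\max}\le1$, we have $(y_uy^*_u)^2/y_u(\theta)^3\ge(y_uy^*_u)^2/\max(y_u,y^*_u)^3=\min(y_u,y^*_u)^2/\max(y_u,y^*_u)\ge\g_{\min}^2/\g_{\max}\ge\g_{\min}^2$, using once more that $y_u,y^*_u\in[\g_{\min},\g_{\max}]$. This gives the uniform bound $\Gamma''(\theta)\ge2\g_{\min}^2\|\rho-\rho^*\|_2^2$. For the second, per-coordinate inequality, keep only the $u$-th summand: $\|\rho-\rho^*\|_2^2\ge(\rho_u-\rho^*_u)^2=(\rho^*_u)^2\epsilon_u^2$, and since $\rho^*_u=x^*_u/y^*_u\ge\f_{\min}/\g_{\max}\ge\f_{\min}$ while $\g_{\min}\le1$, we obtain $(\rho^*_u)^2\ge\f_{\min}^2\ge(\f_{\min}\g_{\min})^2$, which completes the statement. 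I do not expect a genuine obstacle here: the only points demanding mild care are checking that $\Gamma$ is $C^2$ (this is exactly where strict monotonicity of $\g$ and the normalization $\g(V)=1$ enter, keeping all $y$-coordinates in $[\g_{\min},\g_{\max}]$) and the elementary estimate $(y_uy^*_u)^2/y_u(\theta)^3\ge\g_{\min}^2$.
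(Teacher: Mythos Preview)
Your proof is correct and, in fact, cleaner than the paper's. Both arguments start from the same second-derivative formula (the paper writes it as $\sum_u\frac{2}{y_u(\theta)}\{\Delta x_u-\rho_u(\theta)\Delta y_u\}^2$, which is the same as your $\sum_u\frac{2}{y_u(\theta)^3}\{y_u(\theta)\Delta x_u-x_u(\theta)\Delta y_u\}^2$), but diverge in how they lower-bound the bracket. The paper introduces the auxiliary quantity $\epsilon_u(\theta):=(\rho_u(\theta)-\rho^*_u)/\rho^*_u$, rewrites the bracket as $(\epsilon_u-\epsilon_u(\theta))\rho^*_u y_u+\epsilon_u(\theta)\rho^*_u y^*_u$, and then argues separately that $\epsilon_u(\theta)$ has the same sign as $\epsilon_u$ and $|\epsilon_u(\theta)|\le|\epsilon_u|$ to extract the factor $|\epsilon_u|\rho^*_u\g_{\min}$. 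Your observation that $y_u(\theta)\Delta x_u-x_u(\theta)\Delta y_u$ is in fact $\theta$-independent and equals $(\rho_u-\rho^*_u)y_uy^*_u$ bypasses this entirely: it turns what the paper treats as a $\theta$-dependent expression requiring a monotonicity argument into a constant, so the only $\theta$-dependence left is in the harmless $y_u(\theta)^{-3}$ factor. The resulting bound $(y_uy^*_u)^2/y_u(\theta)^3\ge\g_{\min}^2/\g_{\max}\ge\g_{\min}^2$ then drops out immediately. Your route is shorter and makes the structure (a $2\times2$ determinant cancellation) more transparent; the paper's route has the minor advantage that its intermediate expression $\Delta x_u-\rho_u(\theta)\Delta y_u$ generalizes verbatim to the other choices of $\vartheta$ treated in the subsequent lemmas.
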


\begin{proof}
Denote $x(\theta) := x^* + \theta \cdot (x - x^*) \in \R^V$
and $y(\theta) := y^* + \theta \cdot (y - y^*) \in \R^V$.

Moreover, denote
$\rho(\theta) \in \R^V$ by
$\rho_u(\theta) := \frac{x_u(\theta)}{y_u(\theta)}$ for $u \in V$.
We have:

$\rho_u'(\theta) = y_u(\theta)^{-1} \cdot \{(x_u - x^*_u) - \rho_u(\theta) \cdot (y_u - y^*_u)\}.$

Then, we have:

$\Gamma'(\theta) = \sum_{u \in V} \{2 \rho_u(\theta) \cdot (x_u - x^*_u) -
\rho_u(\theta)^2 \cdot (y_u - y^*_u)\}$.

Therefore, we have:

\begin{align*}
\Gamma''(\theta) & = \sum_{u \in V} \frac{2}{y_u(\theta)} \cdot
\{(x_u - x^*_u) - \rho_u(\theta) \cdot (y_u - y^*_u)\}^2 \\
& \geq \frac{2}{\g_{\max}} \sum_{u \in V}  \{(x_u - x^*_u) - \rho_u(\theta) \cdot (y_u - y^*_u)\}^2.
\end{align*}

Denote $\epsilon_u(\theta) := \frac{\rho_u(\theta) - \rho^*_u}{\rho^*_u}$,
where $\epsilon_u = \epsilon_u(1)$.
Observe that $\epsilon_u \cdot \epsilon_u(\theta) \geq 0$ and
and $|\epsilon_u(\theta)| \leq |\epsilon_u|$.

Note that each term in the above sum contains the following expression:

$|(x_u - x^*_u) - \rho_u(\theta) \cdot (y_u - y^*_u)|
= |(\epsilon_u - \epsilon_u(\theta)) \cdot \rho^*_u y_u + \epsilon_u(\theta) \cdot \rho^*_u y^*_u|
\geq |\epsilon_u| \cdot \rho^*_u \cdot \g_{\min} \geq |\epsilon_u| \cdot \frac{\f_{\min} \cdot \g_{\min}}{\g_{\max}}.$

Therefore, we have:

$$\Gamma''(\theta) \geq 
\frac{2 \g_{\min}^2}{\g_{\max}} \sum_{u \in V}  \epsilon_u^2 \cdot (\rho^*_u)^2
= \frac{2 \g_{\min}^2}{\g_{\max}} \cdot \|\rho - \rho^*\|_2^2
\geq \frac{2 (\f_{\min}\cdot\g_{\min})^2}{\g_{\max}^3} \cdot \epsilon_u^2,$$

for all $u \in V$, as required.

\end{proof}

\begin{corollary}[Approximation Guarantees for Density Vector $\vartheta(t)=t^2$]
After $T$ steps of Frank-Wolfe,
the solution $z^{(T)} \in \mcal{B}^{\geq}_\f\times\mcal{B}^{\leq}_\g$
induces a density vector $\rho^{(T)} \in \R^V$ with the following approximation 
guarantees:

\begin{compactitem}

\item \emph{Absolute error}.

$$\| \rho^{(T)} - \rho^* \|_2 \leq \frac{O(1)}{\g_{\min}^{2.5} \cdot \sqrt{T+2}}.$$

\item \emph{Coordinate-wise Multiplicative error.}
For all $u \in V$,

$$\left|\frac{\rho^{(T)}_u - \rho^*_u}{\rho^*_u} \right| \leq \frac{O(1)}{\f_{\min} \cdot \g_{\min}^{2.5} \cdot \sqrt{T+2}}.$$

\end{compactitem}

\end{corollary}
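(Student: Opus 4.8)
The plan is to chain the three ingredients assembled in this section — the curvature bound for $\Phi$, the generic Frank--Wolfe convergence rate, and the quantitative lower bound on $\Gamma''(\theta)$ — so that a bound on the objective-value gap becomes a bound on the induced density vector. First I would instantiate Frank--Wolfe on the convex program of Definition~\ref{defn:general_convex} with $\vartheta(t)=t^2$ and the equality constraint $\vecp=\vecq$, using the \emph{polytope} view: the feasible region is (the convex image in) $\mcal{B}^{\geq}_\f\times\mcal{B}^{\leq}_\g$, the step size is $\gamma_k=\frac{2}{k+2}$, and the linear-minimization oracle is the density-sorting rule of Lemma~\ref{lemma:greedy_perm}, which returns a vertex $(\f^{\sigma},\g^{\sigma})$ of the polytope. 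The Diameter Bound lemma gives $\mathsf{Diam}(\mcal{B}^{\geq}_\f\times\mcal{B}^{\leq}_\g)^2\le 4$, and the Hessian lemma gives $\|\nabla^2\Phi(x,y)\|\le 4/\g_{\min}^3$ on the feasible set (using $0\le x_u\le\f_{\max}\le 1$ and $\g_{\min}\le y_u\le\g_{\max}\le 1$). Hence $C_\Phi\le\mathsf{Diam}^2\cdot\sup\|\nabla^2\Phi\|\le 16/\g_{\min}^3 = O(1)/\g_{\min}^3$, and the Frank--Wolfe convergence fact yields, after $T$ steps, $\mcal{E}(z^{(T)}):=\Phi(z^{(T)})-\Phi(z^*)\le\frac{2C_\Phi}{T+2}=\frac{O(1)}{\g_{\min}^3(T+2)}$.

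Second I would convert this objective-value gap into a density error using the ``overall strategy'' already set up. Writing $z(\theta)=z^*+\theta(z^{(T)}-z^*)$ and $\Gamma(\theta)=\Phi(z(\theta))$, the chain rule gives $\Gamma'(0)=\langle\nabla\Phi(z^*),z^{(T)}-z^*\rangle$, which is $\ge 0$ by the first-order optimality of the minimizer $z^*$ over the convex feasible set. Taylor's formula with integral remainder then gives $\mcal{E}(z^{(T)})=\Gamma(1)-\Gamma(0)\ge\Gamma'(0)+\frac12\inf_{\theta\in[0,1]}\Gamma''(\theta)\ge\frac12\inf_\theta\Gamma''(\theta)$. (Note that $\rho^{(T)}$ is well-defined here because Assumption~\ref{assume:g} forces $y^{(T)}$ to have strictly positive coordinates.) Plugging in the Lemma for $\vartheta(t)=t^2$, which states $\Gamma''(\theta)\ge 2\g_{\min}^2\|\rho^{(T)}-\rho^*\|_2^2$ and also $\Gamma''(\theta)\ge 2(\f_{\min}\g_{\min})^2\epsilon_u^2$ for every $u\in V$, we obtain $\g_{\min}^2\|\rho^{(T)}-\rho^*\|_2^2\le\mcal{E}(z^{(T)})$ and $(\f_{\min}\g_{\min})^2\epsilon_u^2\le\mcal{E}(z^{(T)})$.

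Finally, combining with the gap bound from the first step gives $\|\rho^{(T)}-\rho^*\|_2^2\le\frac{O(1)}{\g_{\min}^5(T+2)}$ and $\epsilon_u^2\le\frac{O(1)}{\f_{\min}^2\g_{\min}^5(T+2)}$; taking square roots yields the claimed bounds $\frac{O(1)}{\g_{\min}^{2.5}\sqrt{T+2}}$ and $\frac{O(1)}{\f_{\min}\g_{\min}^{2.5}\sqrt{T+2}}$. (The multiplicative bound can alternatively be deduced from the absolute one by dividing coordinate-wise by $\rho^*_u\ge\f_{\min}$, which holds since $x^*_u=\f^{\vecp}_u\ge\f_{\min}$ and $y^*_u\le\g_{\max}\le 1$.) I do not expect a real obstacle here, since every nontrivial estimate is already proved in this section; the only care needed is the bookkeeping of the $\f_{\max},\g_{\max}\le 1$ normalizations when turning the stated Hessian and $\Gamma''$ inequalities into clean $O(1)$ constants, and checking that the compact iterates of Algorithm~\ref{alg:general_abstract} coincide exactly with Frank--Wolfe iterates on the polytope $\mcal{B}^{\geq}_\f\times\mcal{B}^{\leq}_\g$ so that the curvature-based convergence rate applies verbatim.
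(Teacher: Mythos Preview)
Your proposal is correct and follows essentially the same approach the paper intends: the corollary is presented without an explicit proof precisely because it is the direct combination of the curvature bound $C_\Phi\le O(1)/\g_{\min}^3$ (from the Diameter and Hessian lemmas), the Frank--Wolfe convergence fact, and the ``Overall Strategy'' inequality $\mcal{E}(z)\ge L/2$ with $L$ supplied by the $\Gamma''$ lemma. Your one added detail --- making explicit that $\Gamma'(0)\ge 0$ by first-order optimality of $z^*$ --- is exactly the justification the paper leaves implicit in writing $\Gamma(1)-\Gamma(0)\ge L/2$.
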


\subsection{Repeating the Analysis for Choosing $\vartheta(t) := t \log t$}

\begin{lemma}[Spectral Norm of Hessian $\vartheta(t)=t\log{t}$]
For $(x,y) \in \mcal{B}^{\geq}_\f \times \mcal{B}^{\leq}_\g$,

$\| \nabla^2 \Phi(x,y) \| = \max_{u \in V} (\frac{1}{x_u}+\frac{1}{y_u^2})
\leq \frac{1}{\g^2_{\min}}+\frac{1}{\f_{\min}}.$

\end{lemma}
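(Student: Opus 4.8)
The plan is to compute the Hessian of $\Phi$ explicitly and exploit its block structure, exactly paralleling the analysis already done for $\vartheta(t)=t^2$. With $\vartheta(t)=t\log t$ the objective becomes $\Phi(x,y)=\sum_{u\in V} y_u\cdot\frac{x_u}{y_u}\log\frac{x_u}{y_u}=\sum_{u\in V} x_u(\log x_u-\log y_u)$, which is well defined because Assumption~\ref{assume:g} guarantees $y_u>0$ for every $(x,y)\in\mcal{B}^{\geq}_\f\times\mcal{B}^{\leq}_\g$ (so all the reciprocals $1/y_u$ appearing below are finite). First I would record the non-zero partial derivatives $\frac{\partial\Phi}{\partial x_u}=\log x_u+1-\log y_u$ and $\frac{\partial\Phi}{\partial y_u}=-\frac{x_u}{y_u}$, and hence $\frac{\partial^2\Phi}{\partial x_u^2}=\frac{1}{x_u}$, $\frac{\partial^2\Phi}{\partial y_u^2}=\frac{x_u}{y_u^2}$, $\frac{\partial^2\Phi}{\partial x_u\partial y_u}=-\frac{1}{y_u}$, with all mixed terms between distinct coordinates vanishing.

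Next I would observe that, after reindexing coordinates into pairs $(x_u,y_u)$, the matrix $\nabla^2\Phi(x,y)$ is block diagonal with $|V|$ blocks of size $2\times2$, the $u$-th block being $M_u:=\begin{pmatrix}1/x_u & -1/y_u\\ -1/y_u & x_u/y_u^2\end{pmatrix}$. Each $M_u$ has zero determinant, since $\frac{1}{x_u}\cdot\frac{x_u}{y_u^2}-\frac{1}{y_u^2}=0$, and trace $\frac{1}{x_u}+\frac{x_u}{y_u^2}>0$; therefore its eigenvalues are $0$ and $\frac{1}{x_u}+\frac{x_u}{y_u^2}$, and in particular each $M_u$ is positive semidefinite. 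Since the spectral norm of a block-diagonal symmetric matrix equals the maximum of the spectral norms of its blocks, this gives $\|\nabla^2\Phi(x,y)\|=\max_{u\in V}(\frac{1}{x_u}+\frac{x_u}{y_u^2})$.

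Finally, because $x\in\mcal{B}^{\geq}_\f$ has non-negative coordinates summing to $\f(V)=1$, we have $x_u\le 1$, so $\frac{x_u}{y_u^2}\le\frac{1}{y_u^2}$, which yields $\|\nabla^2\Phi(x,y)\|\le\max_{u\in V}(\frac{1}{x_u}+\frac{1}{y_u^2})$ as in the statement; bounding $x_u\ge\f_{\min}$ and $y_u\ge\g_{\min}$ coordinatewise then gives $\|\nabla^2\Phi(x,y)\|\le\frac{1}{\f_{\min}}+\frac{1}{\g^2_{\min}}$. There is no real obstacle here — the computation is entirely routine; the only point to be careful about is that the exact non-zero block eigenvalue is $\frac{1}{x_u}+\frac{x_u}{y_u^2}$ rather than $\frac{1}{x_u}+\frac{1}{y_u^2}$, the latter being the clean upper bound (via $x_u\le1$) that one actually feeds into the curvature-constant estimate, and to invoke Assumption~\ref{assume:g} up front so that the objective and all derivatives make sense.
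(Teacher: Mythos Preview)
Your proof is correct and follows essentially the same approach as the paper: compute the second partials, note the block-diagonal (rank-one per block) structure, read off the eigenvalues $0$ and $\frac{1}{x_u}+\frac{x_u}{y_u^2}$, and bound. You are in fact slightly more careful than the paper in explicitly invoking $x_u\le 1$ to pass from $\frac{1}{x_u}+\frac{x_u}{y_u^2}$ to $\frac{1}{x_u}+\frac{1}{y_u^2}$, which the paper states as an equality but is really the upper bound you describe.
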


\begin{proof}

$\Phi(x,y)=\sum\limits_{u\in V}x_u\cdot \log{\frac{x_u}{y_u}}$

We consider the non-zero partial derivatives:

$$\frac{\partial{\Phi}(x, y)}{\partial{x_u}}=\log{\frac{x_u}{y_u}}+1, \frac{\partial{\Phi}}{\partial{y_u}}(x,y)=-\frac{x_u}{y_u}$$

$$
\frac{\partial^2{\Phi}(x,y)}{\partial{x^2_u}} = \frac{1}{x_u},
\frac{\partial^2{\Phi}(x, y)}{\partial{y^2_u}}=\frac{x_u}{y^2_u},
\frac{\partial^2{\Phi}(x,y)}{\partial{x_u}\partial{y_u}}=-\frac{1}{y_u}.$$

Note that the Hessian matrix has the form 
$\nabla \Phi(x,y) = \begin{pmatrix}
{D}_1&{D}_2\\
D_2&D_3
\end{pmatrix}$, where $D_i$ are $|V|\times |V|$ diagonal matrices for $i=1,2,3$. Then, the eigenvalues of the Hessian matrix are the roots of the polynomial in $\lambda$: 

$$\lambda^{|V|}\cdot\Pi_{u\in V}(\lambda-(\frac{1}{x_u}+\frac{x_u}{y^2_u}))=0.$$ 

Thus, we have $$\|\nabla^{2}\Phi(x,y)\|_2=\max_{u \in V} (\frac{1}{x_u}+\frac{1}{y_u^2}),$$

as required.

\end{proof}

\begin{lemma}[$\vartheta(t)=t\log{t}$]
For $\theta \in [0,1]$ and $u \in V$,
$\Gamma''(\theta) \geq 
\f_{\min}\cdot\g_{\min}^2 \cdot \|\rho - \rho^*\|_2^2
\geq { \f^3_{\min}\cdot\g^2_{\min}} \cdot \epsilon_u^2$,
where $\epsilon_u := \frac{\rho_u - \rho^*_u}{\rho^*_u}$.
\end{lemma}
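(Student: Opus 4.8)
The plan is to imitate, almost verbatim, the argument used for $\vartheta(t)=t^2$, the only genuinely new step being the second-derivative computation for the new divergence. Write $\Phi(x,y)=\sum_{u\in V}x_u\log\frac{x_u}{y_u}$, and with $z^\ast=(x^\ast,y^\ast)$ an optimal solution and $z=(x,y)\in\mcal{B}^{\geq}_\f\times\mcal{B}^{\leq}_\g$ the current iterate, set $x(\theta):=x^\ast+\theta(x-x^\ast)$, $y(\theta):=y^\ast+\theta(y-y^\ast)$, $\rho_u(\theta):=\frac{x_u(\theta)}{y_u(\theta)}$, and $\dot x_u:=x_u-x^\ast_u$, $\dot y_u:=y_u-y^\ast_u$. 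Exactly as in the quadratic case, $\rho_u'(\theta)=y_u(\theta)^{-1}\big(\dot x_u-\rho_u(\theta)\dot y_u\big)$, and $\Gamma(\theta):=\Phi(z(\theta))=\sum_u x_u(\theta)\log\rho_u(\theta)$.

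First I would differentiate $\Gamma$ twice. Using the first-order partials $\partial_{x_u}\Phi=\log(x_u/y_u)+1$ and $\partial_{y_u}\Phi=-x_u/y_u$ (already recorded in the Hessian lemma), a short chain-rule computation gives $\Gamma'(\theta)=\sum_u\big[\dot x_u(\log\rho_u(\theta)+1)-\rho_u(\theta)\dot y_u\big]$, and differentiating once more and simplifying with $x_u(\theta)/\rho_u(\theta)=y_u(\theta)$ yields
$$\Gamma''(\theta)=\sum_{u\in V}\frac{\big(\dot x_u-\rho_u(\theta)\dot y_u\big)^2}{x_u(\theta)}.$$
Since $x(\theta)$ is a convex combination of two points of $\mcal{B}^{\geq}_\f$, which by Fact~\ref{fact:perm_polytope} is the convex hull of the vectors $\f^\sigma$, every coordinate satisfies $x_u(\theta)\le\f_{\max}\le 1$, so $\Gamma''(\theta)\ge\sum_u(\dot x_u-\rho_u(\theta)\dot y_u)^2$ (keeping the factor $1/\f_{\max}$ only helps).

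Next I would reuse the key algebraic identity from the $\vartheta(t)=t^2$ proof without change: writing $x_u=\rho_u y_u$, $x^\ast_u=\rho^\ast_u y^\ast_u$ and $\epsilon_u(\theta):=\frac{\rho_u(\theta)-\rho^\ast_u}{\rho^\ast_u}$, one has $\dot x_u-\rho_u(\theta)\dot y_u=\rho^\ast_u\big((\epsilon_u-\epsilon_u(\theta))\,y_u+\epsilon_u(\theta)\,y^\ast_u\big)$. Because $\rho_u(\cdot)$ is fractional-linear in $\theta$, hence monotone, $\epsilon_u(\theta)$ lies between $0$ and $\epsilon_u$, so both summands carry the sign of $\epsilon_u$ and do not cancel; combined with $y_u,y^\ast_u\ge\g_{\min}$ and $|\epsilon_u-\epsilon_u(\theta)|+|\epsilon_u(\theta)|=|\epsilon_u|$ this gives $|\dot x_u-\rho_u(\theta)\dot y_u|\ge\rho^\ast_u\,|\epsilon_u|\,\g_{\min}$. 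Summing, $\Gamma''(\theta)\ge\g_{\min}^2\sum_u(\rho^\ast_u\epsilon_u)^2=\g_{\min}^2\|\rho-\rho^\ast\|_2^2\ge\f_{\min}\g_{\min}^2\|\rho-\rho^\ast\|_2^2$ since $\f_{\min}\le 1$. For the coordinate-wise bound, observe $\rho^\ast_u=\frac{x^\ast_u}{y^\ast_u}\ge\frac{\f_{\min}}{\g_{\max}}\ge\f_{\min}$ (as $x^\ast_u\ge\f_{\min}$, $y^\ast_u\le\g_{\max}\le 1$ by the same convex-hull observation), so $\|\rho-\rho^\ast\|_2^2\ge(\rho^\ast_u)^2\epsilon_u^2\ge\f_{\min}^2\epsilon_u^2$, and hence $\Gamma''(\theta)\ge\f_{\min}^3\g_{\min}^2\epsilon_u^2$.

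The only new labor is the two-line recomputation of $\Gamma''$; the rest is transcription. The sole real obstacle is the sign/monotonicity argument showing the two terms of $\dot x_u-\rho_u(\theta)\dot y_u$ cannot cancel, but this is identical to the quadratic case and has already been established there, so here it is simply invoked. Everything else is routine bookkeeping with the normalization $\f_{\max},\g_{\max}\le 1$.
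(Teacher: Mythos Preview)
Your proof is correct and follows essentially the same route as the paper: compute $\Gamma''(\theta)$ explicitly, then reuse verbatim the algebraic identity $\dot x_u-\rho_u(\theta)\dot y_u=\rho^\ast_u\big((\epsilon_u-\epsilon_u(\theta))y_u+\epsilon_u(\theta)y^\ast_u\big)$ together with the sign/monotonicity argument from the quadratic case. In fact your second-derivative formula $\Gamma''(\theta)=\sum_u(\dot x_u-\rho_u(\theta)\dot y_u)^2/x_u(\theta)$ is the correct one---the paper's printed version multiplies by $x_u(\theta)$ rather than divides, evidently a typo---and your bound $1/x_u(\theta)\ge 1\ge\f_{\min}$ reaches the stated inequality just as well.
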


\begin{proof}
Denote $x(\theta) := x^* + \theta \cdot (x - x^*) \in \R^V$
and $y(\theta) := y^* + \theta \cdot (y - y^*) \in \R^V$.

Moreover, denote
$\rho(\theta) \in \R^V$ by
$\rho_u(\theta) := \frac{x_u(\theta)}{y_u(\theta)}$ for $u \in V$.
We have:

$\rho_u'(\theta) = y_u(\theta)^{-1} \cdot \{(x_u - x^*_u) - \rho_u(\theta) \cdot (y_u - y^*_u)\}.$

Then, we have:

$\Gamma(\theta) = \sum_{u \in V} (x^{*}_u+\theta(x_u - x^*_u))\cdot \log{\rho_u(\theta)}$.

$\Gamma'(\theta) = \sum_{u \in V} (x_u - x^*_u)\cdot \log{\rho_u(\theta)}+(x_u - x^*_u)-(y_u - y^*_u)\cdot \rho_u(\theta)$.

Therefore, we have:

\begin{align*}
\Gamma''(\theta) & = \sum_{u \in V} x_{u}(\theta) \cdot
\{(x_u - x^*_u) - \rho_u(\theta) \cdot (y_u - y^*_u)\}^2 \\
& \geq \f_{\min} \sum_{u \in V}  \{(x_u - x^*_u) - \rho_u(\theta) \cdot (y_u - y^*_u)\}^2.
\end{align*}

Denote $\epsilon_u(\theta) := \frac{\rho_u(\theta) - \rho^*_u}{\rho^*_u}$,
where $\epsilon_u = \epsilon_u(1)$.
Observe that $\epsilon_u \cdot \epsilon_u(\theta) \geq 0$ and
and $|\epsilon_u(\theta)| \leq |\epsilon_u|$.

Note that each term in the above sum contains the following expression:

$|(x_u - x^*_u) - \rho_u(\theta) \cdot (y_u - y^*_u)|
= |(\epsilon_u - \epsilon_u(\theta)) \cdot \rho^*_u y_u + \epsilon_u(\theta) \cdot \rho^*_u y^*_u|
\geq |\epsilon_u| \cdot \rho^*_u \cdot \g_{\min} \geq |\epsilon_u| \cdot \frac{\f_{\min} \cdot \g_{\min}}{\g_{\max}}.$

Therefore, we have:

$$\Gamma''(\theta) \geq 
 \f_{\min}\cdot\g_{\min}^2\cdot\sum_{u \in V}  \epsilon_u^2 \cdot (\rho^*_u)^2
= \f_{\min}\cdot\g_{\min}^2\cdot \|\rho - \rho^*\|_2^2
\geq \frac{ \f^3_{\min}\cdot\g^2_{\min}}{\g_{\max}^2} \cdot \epsilon_u^2,$$

for all $u \in V$, as required.

\end{proof}

\begin{corollary}[Approximation Guarantees for Density Vector $\vartheta(t)=t\log{t}$]
After $T$ steps of Frank-Wolfe,
the solution $z^{(T)} \in \mcal{B}^{\geq}_\f\times\mcal{B}^{\leq}_\g$
induces a density vector $\rho^{(T)} \in \R^V$ with the following approximation 
guarantees:

\begin{compactitem}

\item \emph{Absolute error}.

$$\| \rho^{(T)} - \rho^* \|_2 \leq \frac{1}{\f_{\min}^{0.5}\cdot \g_{\min}}\cdot(\frac{1}{\g_{\min}^2}+\frac{1}{\f_{\min}})^{0.5}\cdot\frac{O(1)}{ \sqrt{T+2}}.$$

\item \emph{Coordinate-wise Multiplicative error.}
For all $u \in V$,

$$\left|\frac{\rho^{(T)}_u - \rho^*_u}{\rho^*_u} \right| \leq \frac{1}{\f_{\min}^{1.5}\cdot \g_{\min}}\cdot(\frac{1}{\g_{\min}^2}+\frac{1}{\f_{\min}})^{0.5}\cdot\frac{O(1)}{ \sqrt{T+2}}.$$

\end{compactitem}

\end{corollary}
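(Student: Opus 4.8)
The plan is to combine three ingredients already established above: the Frank--Wolfe convergence guarantee with step size $\gamma_k = \frac{2}{k+2}$, which gives $\Phi(z^{(T)}) - \Phi(z^*) \le C_\Phi \cdot \frac{2}{T+2}$; the upper bound on the curvature constant $C_\Phi \le \mathsf{Diam}(\mcal{B}^{\geq}_\f \times \mcal{B}^{\leq}_\g)^2 \cdot \sup \|\nabla^2 \Phi\|$, which by the diameter lemma and the Hessian-norm lemma for $\vartheta(t)=t\log t$ is at most $4\big(\frac{1}{\g_{\min}^2} + \frac{1}{\f_{\min}}\big)$; and the lower bound $\Gamma''(\theta) \ge \f_{\min}\,\g_{\min}^2\,\|\rho^{(T)} - \rho^*\|_2^2 \ge \f_{\min}^3\,\g_{\min}^2\,\epsilon_u^2$ from the preceding lemma, where $\rho^{(T)}$ is the density vector induced by $z^{(T)}$ and $\epsilon_u := \frac{\rho^{(T)}_u - \rho^*_u}{\rho^*_u}$.

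First I would note that $z^{(T)}$ is feasible: it is a convex combination of the vertices $(\f^\sigma,\g^\sigma)$, hence lies in $\mcal{B}^{\geq}_\f \times \mcal{B}^{\leq}_\g$ by Fact~\ref{fact:perm_polytope2}, so it does induce a well-defined density vector, and by Corollary~\ref{cor:local_maximin} any optimal $z^* = (x^*, y^*)$ has the unique density vector $\rho^*$. Then, following the ``Overall Strategy'' above, set $z(\theta) := z^* + \theta(z^{(T)} - z^*)$ and $\Gamma(\theta) := \Phi(z(\theta))$; Taylor's theorem with integral remainder gives $\Gamma(1) - \Gamma(0) = \Gamma'(0) + \int_0^1 (1-\theta)\,\Gamma''(\theta)\,d\theta$, while $\Gamma'(0) = \langle \nabla\Phi(z^*),\, z^{(T)} - z^* \rangle \ge 0$ by first-order optimality of $z^*$ over the convex feasible set. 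Combining this with the uniform lower bound $\Gamma''(\theta) \ge L$ yields $\Phi(z^{(T)}) - \Phi(z^*) = \Gamma(1) - \Gamma(0) \ge \frac{L}{2}$.

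Chaining the two bounds gives $\frac{L}{2} \le C_\Phi \cdot \frac{2}{T+2} \le \frac{8}{T+2}\big(\frac{1}{\g_{\min}^2} + \frac{1}{\f_{\min}}\big)$. Plugging in $L = \f_{\min}\,\g_{\min}^2\,\|\rho^{(T)} - \rho^*\|_2^2$ and solving for $\|\rho^{(T)} - \rho^*\|_2$ yields the absolute-error bound; plugging in instead $L = \f_{\min}^3\,\g_{\min}^2\,\epsilon_u^2$ and solving for $\epsilon_u$ yields the coordinate-wise multiplicative bound, with the numerical factor absorbed into the $O(1)$. I do not expect a genuine obstacle; the points needing care are confirming $\Gamma'(0) \ge 0$ (exactly the convex first-order optimality condition used in the Frank--Wolfe analysis) and tracking the constants so they collapse into $O(1)$, together with the normalization $\f(V) = \g(V) = 1$ that makes the diameter bound equal to $4$. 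Otherwise the argument is a direct transcription of the $\vartheta(t) = t^2$ case with the two $\vartheta(t) = t\log t$ lemmas substituted in.
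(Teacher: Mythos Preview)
Your proposal is correct and follows essentially the same approach as the paper: combine the Frank--Wolfe convergence bound, the curvature estimate $C_\Phi \le 4\bigl(\tfrac{1}{\g_{\min}^2}+\tfrac{1}{\f_{\min}}\bigr)$ from the diameter and Hessian lemmas, and the lower bound on $\Gamma''(\theta)$ from the preceding lemma, then solve for the error. You even supply the justification for $\Gamma(1)-\Gamma(0)\ge L/2$ (via $\Gamma'(0)\ge 0$ from first-order optimality) that the paper's ``Overall Strategy'' paragraph asserts without spelling out.
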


\subsection{Repeating Analysis for Choosing $\vartheta(t) := - \log t$}

\begin{lemma}[Spectral Norm of Hessian $\vartheta(t)=-\log{t}$]
For $(x,y) \in \mcal{B}^{\geq}_\f \times \mcal{B}^{\leq}_\g$,

$\| \nabla^2 \Phi(x,y) \| = \max_{u \in V} (\frac{1}{x_u}+\frac{1}{y_u^2})
\leq \frac{1}{\g_{\min}}+\frac{1}{\f^2_{\min}}.$

\end{lemma}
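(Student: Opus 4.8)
The plan is to follow the same template already used for the two preceding lemmas (the $\vartheta(t)=t^2$ and $\vartheta(t)=t\log t$ cases): the objective is a sum of per‑coordinate perspective functions, so once we group the pair of variables $(x_u,y_u)$ together for each $u\in V$, the Hessian becomes block diagonal with one $2\times 2$ block per element, and the spectral norm is just the largest eigenvalue over these blocks. First I would write the objective explicitly: for $\vartheta(t)=-\log t$, Definition~\ref{defn:div} gives $\Phi(x,y)=\D_\vartheta(x\|y)=\sum_{u\in V}y_u\log\frac{y_u}{x_u}=\sum_{u\in V}\bigl(y_u\log y_u-y_u\log x_u\bigr)$.

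Next I would record the nonzero partial derivatives, $\frac{\partial\Phi}{\partial x_u}=-\frac{y_u}{x_u}$ and $\frac{\partial\Phi}{\partial y_u}=\log\frac{y_u}{x_u}+1$, together with $\frac{\partial^2\Phi}{\partial x_u^2}=\frac{y_u}{x_u^2}$, $\frac{\partial^2\Phi}{\partial y_u^2}=\frac{1}{y_u}$, and $\frac{\partial^2\Phi}{\partial x_u\partial y_u}=-\frac{1}{x_u}$. The per‑element block $\left(\begin{smallmatrix}y_u/x_u^2 & -1/x_u\\ -1/x_u & 1/y_u\end{smallmatrix}\right)$ has determinant $\frac{y_u}{x_u^2}\cdot\frac{1}{y_u}-\frac{1}{x_u^2}=0$, so its eigenvalues are $0$ and its trace. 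Hence the characteristic polynomial of the full Hessian factors as $\lambda^{|V|}\cdot\prod_{u\in V}\bigl(\lambda-(\tfrac{y_u}{x_u^2}+\tfrac{1}{y_u})\bigr)$, and $\|\nabla^2\Phi(x,y)\|_2=\max_{u\in V}\bigl(\tfrac{y_u}{x_u^2}+\tfrac{1}{y_u}\bigr)$. Note this is the mirror image of the $t\log t$ expression $\tfrac{1}{x_u}+\tfrac{x_u}{y_u^2}$, which is the reward/cost symmetry of Corollary~\ref{cor:symmetric} showing up at the level of Hessians.

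For the upper bound I would invoke two facts about feasible $(x,y)\in\mcal{B}^{\geq}_\f\times\mcal{B}^{\leq}_\g$. First, the normalization $\g_V=1$ forces $y_u\le\sum_{v\in V}y_v=\g(V)=1$, so $\frac{y_u}{x_u^2}\le\frac{1}{x_u^2}$. Second, writing $x$ and $y$ as convex combinations of the vertices $\f^\sigma,\g^\sigma$ via Fact~\ref{fact:perm_polytope} gives the coordinate bounds $x_u\ge\f_{\min}$ and $y_u\ge\g_{\min}$ (here we use that $\f$ is strictly monotone so $\f_{\min}>0$, which also ensures $-\log$ is well defined on the whole domain). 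Combining, $\frac{y_u}{x_u^2}+\frac{1}{y_u}\le\frac{1}{\f_{\min}^2}+\frac{1}{\g_{\min}}$, which is exactly the claimed bound $\frac{1}{\g_{\min}}+\frac{1}{\f_{\min}^2}$; feeding this and the diameter bound $\mathsf{Diam}(\mcal{B}^{\geq}_\f\times\mcal{B}^{\leq}_\g)^2\le 4$ into the curvature‑constant estimate then yields the approximation guarantees for the density vector exactly as in the two previous subsections.

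There is no genuine obstacle here — the computation is routine — but two small points warrant care. One is the bookkeeping of which variable is bounded by the normalization constant: for $\vartheta=-\log$ it is $y_u\le 1$ that we exploit rather than $x_u\le 1$ (as in the $t\log t$ case), which is precisely why the roles of $\f_{\min}$ and $\g_{\min}$ are interchanged relative to that lemma. The other is that the exact per‑block eigenvalue is $\frac{y_u}{x_u^2}+\frac{1}{y_u}$; the expression $\frac{1}{x_u}+\frac{1}{y_u^2}$ appearing in the statement is (at best) a loose upper bound carried over from the $t\log t$ case, so if one insists on an identity the statement should read $\|\nabla^2\Phi(x,y)\|=\max_{u\in V}\bigl(\frac{y_u}{x_u^2}+\frac{1}{y_u}\bigr)$, after which everything downstream — the bound on $C_\Phi$ and the resulting $O(1/\sqrt{T})$‑type convergence for both absolute and multiplicative error — goes through verbatim.
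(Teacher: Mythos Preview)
Your proof is correct and follows exactly the same approach as the paper: compute the per-coordinate $2\times 2$ Hessian block, observe its determinant vanishes so the nonzero eigenvalue is the trace $\frac{y_u}{x_u^2}+\frac{1}{y_u}$, then use $y_u\le 1$ together with $x_u\ge\f_{\min}$ and $y_u\ge\g_{\min}$ to obtain the stated bound. Your observation about the typo is also apt: the paper's own proof ends with $\|\nabla^2\Phi(x,y)\|_2=\max_{u\in V}\bigl(\tfrac{1}{y_u}+\tfrac{1}{x_u^2}\bigr)$ (already using $y_u\le 1$), so the displayed expression $\tfrac{1}{x_u}+\tfrac{1}{y_u^2}$ in the lemma statement is indeed a copy-paste slip from the $t\log t$ case --- the inequality with $\tfrac{1}{\g_{\min}}+\tfrac{1}{\f_{\min}^2}$ only matches the corrected form.
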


\begin{proof}

$\Phi(x,y)=-\sum\limits_{u\in V}y_u\cdot \log{\frac{x_u}{y_u}}$

We consider the non-zero partial derivatives:

$$\frac{\partial{\Phi}(x, y)}{\partial{x_u}}=-{\frac{y_u}{x_u}}, \frac{\partial{\Phi}}{\partial{y_u}}(x,y)=1-\log{\frac{x_u}{y_u}}$$

$$
\frac{\partial^2{\Phi}(x,y)}{\partial{x^2_u}} = \frac{y_u}{x^2_u},
\frac{\partial^2{\Phi}(x, y)}{\partial{y^2_u}}=\frac{1}{y_u},
\frac{\partial^2{\Phi}(x,y)}{\partial{x_u}\partial{y_u}}=-\frac{1}{x_u}.$$

Note that the Hessian matrix has the form 
$\nabla \Phi(x,y) = \begin{pmatrix}
{D}_1&{D}_2\\
D_2&D_3
\end{pmatrix}$, where $D_i$ are $|V|\times |V|$ diagonal matrices for $i=1,2,3$. Then, the eigenvalues of the Hessian matrix are the roots of the polynomial in $\lambda$: 

$$\lambda^{|V|}\cdot\Pi_{u\in V}(\lambda-(\frac{1}{y_u}+\frac{y_u}{x^2_u}))=0.$$ 

Thus, we have $$\|\nabla^{2}\Phi(x,y)\|_2=\max_{u \in V} (\frac{1}{y_u}+\frac{1}{x_u^2}),$$

as required.

\end{proof}

\begin{lemma}[$\vartheta(t)=-\log{t}$]
For $\theta \in [0,1]$ and $u \in V$,
$\Gamma''(\theta) \geq 
\g_{\min}^3 \cdot \|\rho - \rho^*\|_2^2
\geq { \f^2_{\min}\cdot\g^3_{\min}} \cdot \epsilon_u^2$,
where $\epsilon_u := \frac{\rho_u - \rho^*_u}{\rho^*_u}$.
\end{lemma}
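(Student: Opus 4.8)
I would follow exactly the template of the two preceding lemmas (the cases $\vartheta(t)=t^2$ and $\vartheta(t)=t\log t$). Write $\Phi(x,y)=-\sum_{u\in V} y_u\log(x_u/y_u)$ and, with $\dot x_u:=x_u-x^*_u$, $\dot y_u:=y_u-y^*_u$, $x_u(\theta):=x^*_u+\theta\dot x_u$, $y_u(\theta):=y^*_u+\theta\dot y_u$ and $\rho_u(\theta):=x_u(\theta)/y_u(\theta)$, differentiate $\Gamma(\theta)=\Phi(x(\theta),y(\theta))$ twice via the chain rule, using the second partials already recorded in the preceding ``Spectral Norm of Hessian'' lemma. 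The key structural observation is that the $2\times2$ coordinate-$u$ Hessian block $\begin{pmatrix} y_u/x_u^2 & -1/x_u\\ -1/x_u & 1/y_u\end{pmatrix}$ has zero determinant, so $\Gamma''(\theta)$ collapses to a sum of perfect squares:
$$\Gamma''(\theta)=\sum_{u\in V}\frac{y_u(\theta)}{x_u(\theta)^2}\bigl(\dot x_u-\rho_u(\theta)\,\dot y_u\bigr)^2 .$$

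Next I would lower-bound each square exactly as in the $t^2$ and $t\log t$ lemmas. Set $\epsilon_u(\theta):=(\rho_u(\theta)-\rho^*_u)/\rho^*_u$, so $\epsilon_u=\epsilon_u(1)$ and $\epsilon_u(0)=0$. Since $\theta\mapsto\rho_u(\theta)$ is a ratio of affine functions, it is monotone, hence $\epsilon_u(\theta)$ travels monotonically from $0$ to $\epsilon_u$; consequently $\epsilon_u(\theta)$ and $\epsilon_u-\epsilon_u(\theta)$ both carry the sign of $\epsilon_u$ and $|\epsilon_u(\theta)|+|\epsilon_u-\epsilon_u(\theta)|=|\epsilon_u|$. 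A short rewrite gives $\dot x_u-\rho_u(\theta)\dot y_u=\rho^*_u\bigl((\epsilon_u-\epsilon_u(\theta))y_u+\epsilon_u(\theta)y^*_u\bigr)$, and because $y_u,y^*_u>0$ multiply common-sign coefficients there is no cancellation, so $|\dot x_u-\rho_u(\theta)\dot y_u|\ge\rho^*_u\bigl(|\epsilon_u-\epsilon_u(\theta)|+|\epsilon_u(\theta)|\bigr)\g_{\min}=\rho^*_u\,\g_{\min}\,|\epsilon_u|$, using $y_u,y^*_u\ge\g_{\min}$.

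Finally I would bound the coefficient: using the normalization $\f_{\max}\le 1$, together with $x_u(\theta)\le\f_{\max}$ and $y_u(\theta)\ge\g_{\min}$, we get $\frac{y_u(\theta)}{x_u(\theta)^2}\ge\frac{\g_{\min}}{\f_{\max}^2}\ge\g_{\min}$. Combining the last two steps,
$$\Gamma''(\theta)\ge\g_{\min}\sum_{u\in V}(\rho^*_u)^2\,\g_{\min}^2\,\epsilon_u^2=\g_{\min}^3\,\|\rho-\rho^*\|_2^2 ,$$
and discarding all but one summand, together with $\rho^*_u=x^*_u/y^*_u\ge\f_{\min}/\g_{\max}\ge\f_{\min}$, yields $\Gamma''(\theta)\ge\f_{\min}^2\,\g_{\min}^3\,\epsilon_u^2$ for every $u\in V$, which is the claim.

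The two differentiations and the final coefficient estimate are routine bookkeeping; the one step that needs care --- and the natural candidate for the main obstacle --- is the sign/monotonicity tracking of $\epsilon_u(\theta)$ that turns $\bigl|(\epsilon_u-\epsilon_u(\theta))y_u+\epsilon_u(\theta)y^*_u\bigr|$ into $\bigl(|\epsilon_u-\epsilon_u(\theta)|+|\epsilon_u(\theta)|\bigr)\g_{\min}$, since the estimate would fail if the two terms partially cancelled; this is precisely the trick already set up in the preceding two lemmas, so it can be invoked essentially verbatim. Downstream, the corresponding approximation corollary for $\vartheta(t)=-\log t$ then follows by the same route as the $t^2$ case: plug this lower bound into the section's strategy $\mcal{E}(z)=\Gamma(1)-\Gamma(0)\ge\tfrac12\inf_{\theta}\Gamma''(\theta)$, and combine with the diameter bound and the Hessian spectral-norm bound $\|\nabla^2\Phi\|\le\tfrac{1}{\g_{\min}}+\tfrac{1}{\f_{\min}^2}$ via the Frank--Wolfe convergence fact.
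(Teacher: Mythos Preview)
Your proposal is correct and follows essentially the same route as the paper: compute $\Gamma''(\theta)=\sum_u \frac{y_u(\theta)}{x_u(\theta)^2}\bigl(\dot x_u-\rho_u(\theta)\dot y_u\bigr)^2$, bound the coefficient by $\g_{\min}/\f_{\max}^2$, rewrite the square via the identity $\dot x_u-\rho_u(\theta)\dot y_u=\rho^*_u\bigl((\epsilon_u-\epsilon_u(\theta))y_u+\epsilon_u(\theta)y^*_u\bigr)$, and use the common sign of $\epsilon_u(\theta)$ and $\epsilon_u-\epsilon_u(\theta)$ to replace $y_u,y^*_u$ by $\g_{\min}$. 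Your explicit remarks about the zero-determinant Hessian block and the monotonicity of $\theta\mapsto\rho_u(\theta)$ are nice clarifications the paper leaves implicit, but the argument is otherwise identical.
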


\begin{proof}
Denote $x(\theta) := x^* + \theta \cdot (x - x^*) \in \R^V$
and $y(\theta) := y^* + \theta \cdot (y - y^*) \in \R^V$.

Moreover, denote
$\rho(\theta) \in \R^V$ by
$\rho_u(\theta) := \frac{x_u(\theta)}{y_u(\theta)}$ for $u \in V$.
We have:

$\rho_u'(\theta) = y_u(\theta)^{-1} \cdot \{(x_u - x^*_u) - \rho_u(\theta) \cdot (y_u - y^*_u)\}.$

Then, we have:

$\Gamma(\theta) = -\sum_{u \in V} (y^{*}_u+\theta(y_u - y^*_u))\cdot \log{\rho_u(\theta)}$.

$\Gamma'(\theta) = -\sum_{u \in V} (y_u - y^*_u)\cdot \log{\rho_u(\theta)}+\rho^{-1}_{u}(\theta)\cdot\{(x_u - x^*_u)-(y_u - y^*_u)\cdot \rho_u(\theta)\}$.

Therefore, we have:

\begin{align*}
\Gamma''(\theta) & = \sum_{u \in V} \frac{y_u(\theta)}{x^2_u(\theta)} \cdot
\{(x_u - x^*_u) - \rho_u(\theta) \cdot (y_u - y^*_u)\}^2 \\
& \geq \frac{\g_{\min}}{\f_{\max}^2} \sum_{u \in V}  \{(x_u - x^*_u) - \rho_u(\theta) \cdot (y_u - y^*_u)\}^2.
\end{align*}

Denote $\epsilon_u(\theta) := \frac{\rho_u(\theta) - \rho^*_u}{\rho^*_u}$,
where $\epsilon_u = \epsilon_u(1)$.
Observe that $\epsilon_u \cdot \epsilon_u(\theta) \geq 0$ and
and $|\epsilon_u(\theta)| \leq |\epsilon_u|$.

Note that each term in the above sum contains the following expression:

$|(x_u - x^*_u) - \rho_u(\theta) \cdot (y_u - y^*_u)|
= |(\epsilon_u - \epsilon_u(\theta)) \cdot \rho^*_u y_u + \epsilon_u(\theta) \cdot \rho^*_u y^*_u|
\geq |\epsilon_u| \cdot \rho^*_u \cdot \g_{\min} \geq |\epsilon_u| \cdot \frac{\f_{\min} \cdot \g_{\min}}{\g_{\max}}.$

Therefore, we have:

$$\Gamma''(\theta) \geq 
 \frac{\g^3_{\min}}{\f_{\max}^2} \cdot\sum_{u \in V}  \epsilon_u^2 \cdot (\rho^*_u)^2
= \frac{\g^3_{\min}}{\f_{\max}^2} \cdot \|\rho - \rho^*\|_2^2
\geq \frac{ \f^2_{\min}\cdot\g^3_{\min}}{(\f_{\max}\cdot\g_{\max})^2} \cdot \epsilon_u^2,$$

for all $u \in V$, as required.

\end{proof}

\begin{corollary}[Approximation Guarantees for Density Vector $\vartheta(t)=-\log{t}$]
After $T$ steps of Frank-Wolfe,
the solution $z^{(T)} \in \mcal{B}^{\geq}_\f\times\mcal{B}^{\leq}_\g$
induces a density vector $\rho^{(T)} \in \R^V$ with the following approximation 
guarantees:

\begin{compactitem}

\item \emph{Absolute error}.

$$\| \rho^{(T)} - \rho^* \|_2 \leq \frac{1}{ \g^{1.5}_{\min}}\cdot(\frac{1}{\g_{\min}}+\frac{1}{\f^2_{\min}})^{0.5}\cdot\frac{O(1)}{ \sqrt{T+2}}.$$

\item \emph{Coordinate-wise Multiplicative error.}
For all $u \in V$,

$$\left|\frac{\rho^{(T)}_u - \rho^*_u}{\rho^*_u} \right| \leq \frac{1}{\f_{\min}\cdot \g^{1.5}_{\min}}\cdot(\frac{1}{\g_{\min}}+\frac{1}{\f^2_{\min}})^{0.5}\cdot\frac{O(1)}{ \sqrt{T+2}}.$$

\end{compactitem}

\end{corollary}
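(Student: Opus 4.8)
The plan is to combine, for the specific choice $\vartheta(t) = -\log t$, the three ingredients that have just been assembled: the Frank-Wolfe convergence rate $\mcal{F}(x^{(T)}) - \mcal{F}(x^*) \le C_\mcal{F}\cdot\frac{2}{T+2}$ for step size $\gamma_k = \frac{2}{k+2}$; the curvature estimate $C_\Phi \le \mathsf{Diam}(\mcal{B}^{\geq}_\f\times\mcal{B}^{\leq}_\g)^2\cdot\sup\|\nabla^2\Phi\| \le 4\,(\frac{1}{\g_{\min}}+\frac{1}{\f^2_{\min}})$ obtained from the diameter bound and the Hessian spectral-norm bound for $\vartheta(t)=-\log t$; and the along-the-segment lower bound $\Gamma''(\theta)\ge\g_{\min}^3\,\|\rho-\rho^*\|_2^2\ge\f^2_{\min}\g^3_{\min}\,\epsilon_u^2$ from the preceding lemma.

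First I would instantiate the convergence bound: after $T$ iterations of Algorithm~\ref{alg:general_abstract} run as Frank-Wolfe with the greedy permutation oracle of Lemma~\ref{lemma:greedy_perm}, the objective gap satisfies $\mcal{E}(z^{(T)}) := \Phi(z^{(T)}) - \Phi(z^*) \le \frac{8\,(\frac{1}{\g_{\min}}+\frac{1}{\f^2_{\min}})}{T+2}$, where $z^*$ is any optimal solution of the convex program. Next, following the overall strategy of this subsection, I would fix $z = z^{(T)}$, set $z(\theta) = z^* + \theta(z - z^*)$ and $\Gamma(\theta) = \Phi(z(\theta))$, and use that $z^*$ minimizes the differentiable convex $\Phi$ over the compact convex feasible set (differentiability is legitimate since $\f_{\min},\g_{\min} > 0$ force every coordinate of any feasible point to stay bounded away from $0$, so $\log$ is evaluated safely) to conclude $\Gamma'(0) = \langle\nabla\Phi(z^*), z - z^*\rangle \ge 0$. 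Together with the uniform bound $\Gamma''(\theta)\ge L$ and the identity $\Gamma(1) = \Gamma(0) + \Gamma'(0) + \int_0^1 (1-t)\,\Gamma''(t)\,dt$, this gives $\mcal{E}(z^{(T)}) \ge \tfrac{L}{2}$. Taking $L = \g_{\min}^3\|\rho^{(T)} - \rho^*\|_2^2$ yields $\|\rho^{(T)} - \rho^*\|_2^2 \le \frac{2\,\mcal{E}(z^{(T)})}{\g_{\min}^3}$, and plugging in the convergence bound and taking square roots gives the absolute-error claim with $O(1)=4$; taking instead $L = \f^2_{\min}\g_{\min}^3\,\epsilon_u^2$ and repeating gives the coordinate-wise multiplicative bound for every $u\in V$. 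I would also note that, since every optimal $z^*$ induces the same canonical density vector $\rho^*$ by Corollary~\ref{cor:local_maximin}, the estimate is well posed independently of which optimal solution is used as $z^*$.

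I do not expect a genuine obstacle: the argument is the same template already run for $\vartheta(t) = t^2$ and $\vartheta(t) = t\log t$, specialized to the Hessian and $\Gamma''$ estimates for $\vartheta(t) = -\log t$, so the work is purely in tracking constants. The only subtlety worth a sentence is the smoothness justification for $\Gamma'(0)\ge 0$, which relies on Assumption~\ref{assume:g} together with $\f_{\min}>0$; everything else is a routine substitution.
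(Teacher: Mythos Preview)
Your proposal is correct and follows exactly the template the paper uses: combine the Frank--Wolfe convergence rate, the curvature bound $C_\Phi\le 4\bigl(\tfrac{1}{\g_{\min}}+\tfrac{1}{\f_{\min}^2}\bigr)$ from the diameter and Hessian lemmas for $\vartheta(t)=-\log t$, and the lower bound $\Gamma''(\theta)\ge L$ from the preceding lemma to get $\mcal{E}(z^{(T)})\ge L/2$, then solve for $\|\rho^{(T)}-\rho^*\|_2$ and $\epsilon_u$. The paper leaves this corollary unproved precisely because it is the same substitution already carried out for $\vartheta(t)=t^2$; your added remark on why $\Gamma'(0)\ge0$ and why $\f_{\min}>0$ is needed for differentiability of $-\log$ is a welcome clarification the paper omits.
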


\ignore{

**********
************

\begin{lemma}
$\Phi(\vecp)=\sum\limits_{u\in V}g_{u}^{\vecp}\cdot (\rho_{u}^{\vecp})^2$. Estimation of $C_\Phi$
\end{lemma}

\begin{proof}
\begin{align*}
\frac{\partial{\Phi}}{\partial{p_\sigma}}({\vecp})&=\sum_{u\in V}-\g_{u}^{\sigma}\cdot (\rho_{u}^{{\vecp}})^2+2\f^{\sigma}_u\cdot \rho_{u}^{{\vecp}}\\
\frac{\partial^2{\Phi}}{\partial{p_\sigma}\partial{p_\sigma'}}({\vecp})&=2\sum_{u\in V}(\g_{u}^{\vecp})^{-1}\cdot[-\g_{u}^{\sigma}\f_{u}^{\sigma'}\cdot \rho_{u}^{{\vecp}}+\g_{u}^{\sigma}\g_{u}^{\sigma'}\cdot (\rho_{u}^{{\vecp}})^2+\f_{u}^{\sigma}\f_{u}^{\sigma'}-\f_{u}^{\sigma}\g_{u}^{\sigma'}\cdot \rho_{u}^{{\vecp}}]\\
&=2\sum_{u\in V}(\g_{u}^{\vecp})^{-1}\cdot[\g_{u}^{\sigma}\g_{u}^{\sigma'}\cdot (\rho_{u}^{{\vecp}})^2-(\f_{u}^{\sigma}\g_{u}^{\sigma'}+\g_{u}^{\sigma}\f_{u}^{\sigma'})\cdot \rho_{u}^{{\vecp}}+\f_{u}^{\sigma}\f_{u}^{\sigma'}]\\
&=2\sum_{u\in V}(\g_{u}^{\vecp})^{-1}\cdot(\g_{u}^{\sigma}\cdot \rho_{u}^{{\vecp}}- \f_{u}^{\sigma})\cdot(\g_{u}^{\sigma'}\cdot \rho_{u}^{{\vecp}}- \f_{u}^{\sigma'})
\end{align*}
$H_{\sigma\sigma'}(\vecp):=\frac{\partial^2{\Phi}}{\partial{p_\sigma}\partial{p_\sigma'}}({\vecp})$.
For $\|u\|=1$,
\begin{align*}
u^{T}\cdot H\cdot u&=2\sum\limits_{\sigma, \sigma'}u_{\sigma}u_{\sigma'}\sum_{u\in V}(\g_{u}^{\vecp})^{-1}\cdot(\g_{u}^{\sigma}\cdot \rho_{u}^{{\vecp}}- \f_{u}^{\sigma})\cdot(\g_{u}^{\sigma'}\cdot \rho_{u}^{{\vecp}}- \f_{u}^{\sigma'})\\
&=2\sum_{u\in V}(\g_{u}^{\vecp})^{-1}\sum\limits_{\sigma, \sigma'}u_{\sigma}u_{\sigma'}(\g_{u}^{\sigma}\cdot \rho_{u}^{{\vecp}}- \f_{u}^{\sigma})\cdot(\g_{u}^{\sigma'}\cdot \rho_{u}^{{\vecp}}- \f_{u}^{\sigma'})
\end{align*}
Note that for any two column vectors $v_1,v_2\in \mathbb{R}^{n}$, the eigenvalues of $v_1\cdot v_2^{T}$ are $0$ and $v_2^{T}\cdot v_1$.
\begin{align*}
\|\nabla^{2}\Phi(\vecp)\|_2&\leq 2\sum_{u\in V}\sum\limits_{\sigma\in \mathcal{S}_V}(\g_{u}^{\vecp})^{-1}\cdot(\g_{u}^{\sigma}\cdot \rho_{u}^{{\vecp}}- \f_{u}^{\sigma})^2\\
\sup\limits_{\vecp\in \Delta(\mathcal{S}_V)}\|\nabla^{2}\Phi(\vecp)\|_2&= 2\sum_{u\in V}\sum\limits_{\sigma\in \mathcal{S}_V}\sup\limits_{\vecp\in \Delta(\mathcal{S}_V)}(\g_{u}^{\vecp})^{-1}\cdot(\g_{u}^{\sigma}\cdot \rho_{u}^{{\vecp}}- \f_{u}^{\sigma})^2\\
&\leq  2\sum_{u\in V}\sum\limits_{\sigma\in \mathcal{S}_V}(G_{u}^{\min})^{-1}\cdot(\g_{u}^{\sigma}\cdot \frac{F_u^{\max}}{G_{u}^{\min}}- \f_{u}^{\sigma})^2, \text{ where } G_u^{\min}:=\min_{\vecp}\g_{u}^{\vecp}, F_u^{\max}:=\max_{\vecp}\f_{u}^{\vecp}\\
C_{\Phi}&\leq (diam(\Delta(\mathcal{S}_V)))^2\cdot \sup\limits_{\vecp\in \Delta(\mathcal{S}_V)}\|\nabla^2\Phi(\vecp)\|_2\\
&\leq 4\sum_{u\in V}\sum\limits_{\sigma\in \mathcal{S}_V}(G_{u}^{\min})^{-1}\cdot(\g_{u}^{\sigma}\cdot \frac{F_u^{\max}}{G_{u}^{\min}}- \f_{u}^{\sigma})^2
\end{align*}

\end{proof}

********************

\begin{minipage}[t]{0.5\textwidth}
\begin{algorithm}[H]
\caption{General Greedy}\label{alg:ggd}
\begin{algorithmic}[1]
\State \text{\textbf{Input}: } \text{$V,\f,\g,\sigma_0, I$} 

\For{$u\in V$}
\State $\rho^{(0)}_u \gets\frac{{\f}^{\sigma_{0}}_u}{\g^{\sigma_{0}}_u}$
\EndFor

 \For{$i=1,\dots,I$}    
 
 \State $\sigma_{i}$ $\gets$ Decreasingly sorts $\rho^{(i-1)}$

\For{$u\in V$}
\State $\rho^{(i)}_u \gets\frac{\sum\limits_{k=0}^{i}\f^{\sigma_{k}}_u}{\sum\limits_{k=0}^{i}\g^{\sigma_{k}}_u}$
\EndFor

\EndFor
\State \text{\textbf{Return: $\rho^{(I)}$} }  
\end{algorithmic}
\end{algorithm}
\end{minipage}
\hfill
\begin{minipage}[t]{0.5\textwidth}
\begin{algorithm}[H]
\caption{General Frank-Wolfe}\label{alg:GFW}
\begin{algorithmic}[1]
\State \text{\textbf{Input}: } {$\Phi,\vecp^{(0)}, I$} 

 \For{$i=1,\dots,I$}    
 
\State $\gamma \gets \frac{1}{i+1}$
\State $d^{(i)}\gets \text{arg}\min_{d\in \Delta(\mathcal{S}_V)}\langle d,\nabla \Phi(\vecp^{(i-1)}) \rangle$\label{descent direction}
\State $\vecp^{(i)}$ $\gets$ $(1-\gamma)\vecp^{(i-1)}+\gamma d^{(i)}$

\EndFor
\State \text{\textbf{Return: $\vecp^{(I)}$} }  
\end{algorithmic}
\end{algorithm}
\end{minipage}

\begin{lemma}
Let $\Phi(\vecp)$ be the objective function of the general convex program. Given $\hat{\vecp}\in \Delta(\mathcal{S}_V)$, $$\min_{\vecp\in \Delta(\mathcal{S}_V)}\langle\vecp,\nabla \Phi(\hat{\vecp}) \rangle= \min_{\sigma\in \mathcal{S}_V}\frac{\partial{\Phi}}{\partial{p_\sigma}}(\hat{\vecp})$$
\end{lemma}

\begin{proof}
\begin{align*}
\forall \vecp\in \Delta(\mathcal{S}_V), \quad \langle\vecp,\nabla \Phi(\hat{\vecp}) \rangle&=\sum\limits_{\sigma\in \mathcal{S}_V}p_{\sigma}\cdot\frac{\partial{\Phi}}{\partial{p_\sigma}}(\hat{\vecp})\geq \min_{\sigma\in \mathcal{S}_V}\frac{\partial{\Phi}}{\partial{p_\sigma}}(\hat{\vecp})\cdot\sum\limits_{\sigma\in \mathcal{S}_V}p_{\sigma}=\min_{\sigma\in \mathcal{S}_V}\frac{\partial{\Phi}}{\partial{p_\sigma}}(\hat{\vecp})
\end{align*}
The inequality can be active at $\vecp_{\hat{\sigma}}:=1$, $\hat{\sigma}=\text{arg}\min_{\sigma\in\mathcal{S}_V}\frac{\partial{\Phi}}{\partial{p_\sigma}}(\hat{\vecp})$; $\vecp_{\sigma}:=0$, $\sigma\in \mathcal{S}_{V}\setminus\{\hat{\sigma}\}$.
\end{proof}

\begin{lemma}\label{permdes}
Given $\hat{\vecp}\in \Delta(\mathcal{S}_V)$, $\hat{\sigma} $ achieves $\min_{\sigma\in\mathcal{S}_V}\frac{\partial{\Phi}}{\partial{p_\sigma}}(\hat{\vecp})$, where $\hat{\sigma}$ is the order that decreasingly sorts $\rho^{\hat{\vecp}}$. Furthermore, the descent direction $d$ (algorithm \ref{alg:GFW} line \ref{descent direction}) is given by $d_{\hat{\sigma}}=1$ and $0$ otherwise. 
\end{lemma}

\begin{proof}
Equivalently, we show that if there exists an order $\sigma'$ with two adjacent vertices $u\prec_{\sigma'}v$ s.t. $\rho_{u}^{\hat{\vecp}}<\rho_{v}^{\hat{\vecp}}$, we can swap these two vertices to obtain a lower objective value. We denote the new order by $\sigma''$. 
According to the chain rule, we have
\begin{align*}
&\frac{\partial{\Phi}}{\partial{p_\sigma}}(\hat{\vecp})=\sum_{w\in V}\g_{w}^{\sigma}\cdot(\vartheta(\rho_{w}^{\hat{\vecp}})-\rho_{w}^{\hat{\vecp}}\vartheta'(\rho_w^{\hat{\vecp}}))+\sum_{w\in V}\f^{\sigma}_w\cdot\vartheta'(\rho_{w}^{\hat{\vecp}})
\end{align*}
We then compare the objective values as follows.
\begin{align*}
&\frac{\partial{\Phi}}{\partial{p_{\sigma'}}}(\hat{\vecp})-\frac{\partial{\Phi}}{\partial{p_{\sigma''}}}(\hat{\vecp})=\sum_{w=u,v}(\g_{w}^{\sigma'}-\g_{w}^{\sigma''})\cdot(\vartheta(\rho_{w}^{\hat{\vecp}})-\rho_{w}^{\hat{\vecp}}\vartheta'(\rho_w^{\hat{\vecp}}))+\sum_{w=u,v}(\f^{\sigma'}_w-\f^{\sigma''}_w)\cdot\vartheta'(\rho_{w}^{\hat{\vecp}})
\end{align*}
As what we did in lemma \ref{lemma:maximin_opt}, define $F:=\f^{\sigma'}_v-\f^{\sigma''}_v=\f^{\sigma''}_u-\f^{\sigma'}_u\geq 0$ and $G:=\g^{\sigma'}_u-\g^{\sigma''}_u=\g^{\sigma''}_v-\g^{\sigma'}_v\geq 0$.
By strict convexity, for $0\leq x<y$, we have $\vartheta'(x)<\vartheta'(y)$ and further $\vartheta(x)-x\vartheta'(x)>\vartheta(y)-y\vartheta'(y)$. Thus, we have
\begin{align*}
\frac{\partial{\Phi}}{\partial{p_{\sigma'}}}(\hat{\vecp})-\frac{\partial{\Phi}}{\partial{p_{\sigma''}}}(\hat{\vecp})&=G\cdot\{[\vartheta(\rho_{u}^{\hat{\vecp}})-\rho_{u}^{\vecp}\vartheta'(\rho_u^{\hat{\vecp}})]-[\vartheta(\rho_{v}^{\hat{\vecp}})-\rho_{v}^{\hat{\vecp}}\vartheta'(\rho_v^{\hat{\vecp}})]\}+F\cdot[\vartheta'(\rho_{v}^{\hat{\vecp}})-\vartheta'(\rho_{u}^{\hat{\vecp}})]\\
&\geq 0
\end{align*}
Hence, the order that decreasingly sorts $\rho^{\hat{\vecp}}$ will reach $\min_{\sigma\in\mathcal{S}_V}\frac{\partial{\Phi}}{\partial{p_\sigma}}(\hat{\vecp})$.
\end{proof}

\begin{theorem}
Algorithm \ref{alg:ggd} and algorithm \ref{alg:GFW} are equivalent.
\end{theorem}

\begin{proof}
Let $\tilde{\vecp}^{(i)}$ and $\hat{\vecp}^{(i)}$ denote the distributions induced by algorithm \ref{alg:ggd} and algorithm \ref{alg:GFW} respectively.
We initialize algorithm \ref{alg:ggd} by arbitrarily choosing a permutation $\sigma_0$. Then, define $\hat{\vecp}^{(0)}_{\sigma_0}=1$ and $0$ otherwise. At the same time, initialize algorithm \ref{alg:GFW} with $\hat{\vecp}^{(0)}$. Our goal is to show in each iteration $i$, the two algorithms get the same distribution. We prove by induction on the number of iterations. \\
\textbf{Induction base: } Consider iteration $i=1$ and the current obtained permutations in algorithm \ref{alg:ggd} are $\{\sigma_0,\sigma_1\}$. From lemma \ref{permdes}, $\sigma_1$ gives the descent direction $d^{(1)}$ and we then obtain 
\begin{align*}
\hat{\vecp}^{(1)}_{\sigma_1}=\frac{1}{2}\cdot\hat{\vecp}^{(0)}_{\sigma_1}+\frac{1}{2}, \quad \hat{\vecp}^{(1)}_{\sigma}=\frac{1}{2}\hat{\vecp}^{(0)}_{\sigma} \quad \forall \sigma \neq \sigma_1
\end{align*}
Note that in this iteration of algorithm \ref{alg:ggd}, each $ \sigma'\in \{\sigma_0,\sigma_1\}$ is uniformly distributed with probability $\frac{1}{2}$ ($\sigma'$ might be repeated). More specifically, the distribution $\tilde{\vecp}^{(1)}$ can be viewed as :
\begin{align*}
\forall \sigma\in \mathcal{S}_V, \quad \tilde{p}^{(i)}_{\sigma}=\frac{\#\{\sigma'\in \{\sigma_0,\sigma_1\}: \sigma'=\sigma\}}{2}
\end{align*}
Then, $\tilde{\vecp}^{(1)}=\hat{\vecp}^{(1)}$.\\
\textbf{Induction Process: }Suppose $\tilde{\vecp}^{(i-1)}=\hat{\vecp}^{(i-1)}$ and consider the $i$th iteration.\\
From lemma \ref{permdes}, $\sigma_i$ gives the descent direction $d^{(i)}$ and we then obtain 
\begin{align*}
\hat{\vecp}^{(i)}_{\sigma_i}=\frac{i}{i+1}\cdot\hat{\vecp}^{(i-1)}_{\sigma_i}+\frac{1}{i+1}, \quad \hat{\vecp}^{(i)}_{\sigma}=\frac{i}{i+1}\hat{\vecp}^{(i-1)}_{\sigma} \quad \forall \sigma \neq \sigma_i
\end{align*}
In general, in the $i$th iteration of algorithm \ref{alg:ggd}, each $ \sigma_k\in \{\sigma_k\}_{k=0}^{i}$ is uniformly distributed with probability $\frac{1}{i+1}$ ($\sigma_k$ might be repeated). More specifically, the distribution $\tilde{\vecp}^{(i)}$ can be viewed as :
\begin{align*}
\forall \sigma\in \mathcal{S}_V, \quad \tilde{p}^{(i)}_{\sigma}=\frac{\#\{\sigma'\in \{\sigma_k\}_{k=0}^{i}: \sigma'=\sigma\}}{i+1}
\end{align*} 
Note that 
\begin{align*}
\text{For } \sigma\neq \sigma_i, \quad \tilde{p}^{(i)}_{\sigma}&=\frac{\#\{\sigma'\in \{\sigma_k\}_{k=0}^{i-1}: \sigma'=\sigma\}}{i+1}\\
&=\frac{i}{i+1}\cdot\frac{\#\{\sigma'\in \{\sigma_k\}_{k=0}^{i-1}: \sigma'=\sigma\}}{i}\\
&=\frac{i}{i+1}\cdot \tilde{p}^{(i-1)}_{\sigma}\\
\text{For } \sigma= \sigma_i, \quad \tilde{p}^{(i)}_{\sigma}&=\frac{\#\{\sigma'\in \{\sigma_k\}_{k=0}^{i-1}: \sigma'=\sigma\}+1}{i+1}\\
&=\frac{i}{i+1}\cdot\frac{\#\{\sigma'\in \{\sigma_k\}_{k=0}^{i-1}: \sigma'=\sigma\}}{i}+\frac{1}{i+1}\\
&=\frac{i}{i+1}\cdot \tilde{p}^{(i-1)}_{\sigma}+\frac{1}{i+1}\\
\end{align*} 
Then, by induction hypothesis, $\tilde{\vecp}^{(i)}=\hat{\vecp}^{(i)}$. We finish the proof.
\end{proof}

\begin{theorem}[Convergence ~\cite{DBLP:conf/icml/Jaggi13,DBLP:conf/esa/HarbQC23}]
Let $\Phi(\vecp)$ be the objective function of the general convex program and $\vecp^{*}$ be a corresponding optimal solution. Suppose $\vecp^{(i+1)}$ is the output from each iteration in algorithm \ref{alg:GFW}  Then, we can conclude that $\Phi(\vecp^{(i+1)})-\Phi(\vecp^{*})\leq \frac{C_{\Phi}}{2(i+1)}H_{i+1}$, where $H_{i+1}$ denotes the harmonic term.
\end{theorem}

\begin{proof}

By the definition of curvature constant $C_{\Phi}$, we have $$\Phi(\vecp^{(i+1)})\leq \Phi(\vecp^{(i)})+\lambda\langle d^{(i)}-\vecp^{(i)},\nabla \Phi(\vecp^{(i)})\rangle+\frac{\lambda^2 C_{\Phi}}{2}, \forall \lambda\in[0,1]$$
Let ${\epsilon}_{i}:=\Phi(\vecp^{(i)})-\Phi(\vecp^{*})$ denote the error, where $\vecp^{*}$ is the exact optimal solution of the convex program. From the property of the duality gap and the definition of $d^{(i)}$, we have
\begin{align*}
\Phi(\vecp^{(i+1)})&\leq \Phi(\vecp^{(i)})-\lambda\langle\vecp^{(i)}-d^{(i)},\nabla \Phi(\vecp^{(i)})\rangle+\frac{\lambda^2 C_{\Phi}}{2}=\Phi(\vecp^{(i)})-\lambda D(\vecp^{(i)})+\frac{\lambda^2 C_{\Phi}}{2}\\
&\leq \Phi(\vecp^{(i)})-\lambda (\Phi(\vecp^{(i)})-\Phi(\vecp^{*}))+\frac{\lambda^2 C_{\Phi}}{2}
\end{align*}
Thus, we have $\epsilon_{i+1}\leq (1-\lambda)\epsilon_i+\frac{\lambda^2C_{\Phi}}{2}$ for any $\lambda\in [0,1]$.\\
In our algorithm, step size $\lambda=\frac{1}{i+1}$. Then,we obtain the following recursions, 
\begin{align*}
(i+1)\epsilon_{i+1}\leq i\epsilon_i+\frac{C_{\Phi}}{2(i+1)},\cdots , 2\epsilon_{2}\leq \epsilon_1+\frac{C_{\Phi}}{2(1+1)}
\end{align*}
By summing up all the inequalities, we have
\begin{align*}
\epsilon_{i+1}&\leq \frac{C_{\Phi}}{2(i+1)}\sum\limits_{j=1}^{i}\frac{1}{j+1}\leq \frac{C_{\Phi}}{2(i+1)}H_{i+1}
\end{align*}

\end{proof}

\begin{theorem}
Given an input instance $(V;\f,\g)$, suppose a distribution $\vecp\in \Delta(\mathcal{S}_V)$ induces density vector $\rho^{\vecp}$. Then, we have $||\rho^{\vecp}-\rho^{\vecp^{*}}||\leq \Phi(\vecp)-\Phi(\vecp^{*})$, where $\vecp^{*}$ is an optimal solution to the general convex program and $\rho^{\vecp}$ is the optimal density vector.
\end{theorem}

*********
**********
}

\section{Locally Maximin Solution as a Universal Minimizer for Data-Processing Divergences}
\label{sec:general_div}

In Section~\ref{sec:technical},
we see that given an instance $(V; \f, \g)$ with strictly monotone $\g$,
there is some distribution $\vecp \in \Delta(\mcal{S}_V)$
that induces a \emph{universal} pair $(\f^\vecp, \g^\vecp)$
that, for all convex $\vartheta$, attains the minimum
for:

$$\min_{(x,y) \in \mcal{B}^{\geq}_\f \times \mcal{B}^{\leq}_\g} \D_\vartheta(x \| y).$$

We show that the result can be generalized to 
a wider class of divergences that satisfy the data processing inequality.
As before, we assume that the functions $\f$ and $\g$ are normalized
to $\f(V) = \g(V) = 1$.

\begin{definition}[Divergences Satisfying Data Processing Inequality]
\label{defn:dpi}
A divergence measure is a function 
$\D$ that takes two distributions $P$ and $Q$ on the same space such that
$\D(P \| Q) \geq 0$, where equality holds \emph{iff} the distributions $P = Q$ are identical.

A divergence $\D$ satisfies the \emph{data processing inequality} if, for any stochastic transformation (or channel) $T$ that maps the original space to another space, the following inequality holds:

$$\D(T(P) \| T(Q)) \leq \D(P \| Q),$$

where $T(P)$ and $T(Q)$
are the resulting distributions after applying the transformation $T$
to $P$ and $Q$, respectively.

We use $\mathcal{D}_{\mathsf{DPI}}$ to denote the class of divergence that satisfies the data processing inequality.
\end{definition}

\begin{definition}[Universally Closest Pair]
\label{defn:universal}
Suppose $\mcal{A}$ and $\mcal{B}$ are collections of distributions
on the same space and $\mcal{D}$ is a class of divergences.
Then, a pair $(x, y) \in \mcal{A} \times \mcal{B}$ is universally closest 
from $\mcal{A}$ to $\mcal{B}$ with respect to $\mcal{D}$ if
for all $\D \in \mcal{D}$, the pair $(x,y)$ attains
the optimal:

$\min_{(x,y) \in \mcal{A} \times \mcal{B}} \D(x \| y)$.
\end{definition}

We generalize the above result in the following theorem.

\begin{theorem}[Locally Maximin Induces a Universally Closest Pair]
\label{th:universal_closest}
Given an instance $(V; \f, \g)$ with strictly monotone $\g$,
suppose the distribution $\vecp \in \Delta(\mcal{S}_V)$
is locally maximin as in Definition~\ref{defn:local_maximin}.
Then, 
the induced pair $(\f^\vecp, \g^\vecp)$
is universally closest from 
$\mcal{B}^{\geq}_\f$ to  $\mcal{B}^{\leq}_\g$
with respect to $\mcal{D}_{\mathsf{DPI}}$.
\end{theorem}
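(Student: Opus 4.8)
The plan is to reduce the statement to the hockey-stick case, which has already been established as a special case of Theorem~\ref{th:main_maximin} via Lemma~\ref{lemma:dual_contracts}, and then invoke the data-processing machinery to upgrade it to all of $\mcal{D}_{\mathsf{DPI}}$. First I would recall that by Lemma~\ref{lemma:dual_contracts} and Lemma~\ref{lemma:maximin_density}, for every $\gamma \geq 0$ the locally maximin pair $(\f^\vecp, \g^\vecp)$ attains $\mathsf{HS}_\gamma(\f^\vecp \| \g^\vecp) = \max_{S \subseteq V}\{\f(S) - \gamma\cdot\g(S)\} = \min_{(x,y) \in \Bfg} \mathsf{HS}_\gamma(x \| y)$; so the pair is already universally closest from $\mcal{B}^{\geq}_\f$ to $\mcal{B}^{\leq}_\g$ with respect to the class of hockey-stick divergences $\{\mathsf{HS}_\gamma : \gamma \geq 0\}$. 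The goal then becomes: show that being universally closest with respect to all hockey-stick divergences implies being universally closest with respect to $\mcal{D}_{\mathsf{DPI}}$.

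The key step — and this is where I would cite the argument of \cite[Theorem~5.12]{DBLP:conf/innovations/ChanX25} and reproduce it for completeness — is a two-part structural fact about $\mcal{D}_{\mathsf{DPI}}$. Part one: any $\D \in \mcal{D}_{\mathsf{DPI}}$ is a ``black-box'' function of the pair $(P, Q)$ only through the multiset of likelihood-ratio values $\{P(u)/Q(u)\}_{u}$ weighted by $Q$; concretely, for two pairs $(P,Q)$ and $(P',Q')$ over possibly different spaces, if the joint distributions $(\tfrac{P}{Q}, Q)$ and $(\tfrac{P'}{Q'}, Q')$ coincide as measures on $\R_{\geq 0}$, then $\D(P \| Q) = \D(P' \| Q')$ — this follows by applying the data processing inequality in both directions using the channel that collapses each element to its likelihood ratio and then a ``splitting'' channel that reconstructs the original space, the latter being information-lossless. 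Part two: the entire joint law of $(\tfrac{P}{Q}, Q)$ — equivalently the function $\gamma \mapsto \sum_u \max\{P(u) - \gamma Q(u), 0\} = \mathsf{HS}_\gamma(P\|Q)$, which determines and is determined by that joint law (it is, up to normalization, the integral of the tail $Q(\{u : P(u)/Q(u) \geq \gamma\})$) — is dominated in the appropriate sense when one pair is ``universally closest in hockey-stick.'' Precisely: if $\mathsf{HS}_\gamma(\f^\vecp \| \g^\vecp) \leq \mathsf{HS}_\gamma(x \| y)$ for all $\gamma \geq 0$ and all feasible $(x,y)$, then the likelihood-ratio law of $(\f^\vecp, \g^\vecp)$ is a ``fusion'' (mean-preserving contraction) of that of any competitor, hence there is a channel $T$ with $T(\f^\vecp) = $ (law-equivalent to) a garbling of $x$ and similarly for $y$; applying the data processing inequality to any $\D \in \mcal{D}_{\mathsf{DPI}}$ gives $\D(\f^\vecp \| \g^\vecp) \leq \D(x \| y)$.

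I would organize the write-up as: (1) state and prove the reduction to hockey-stick optimality using Lemmas~\ref{lemma:dual_contracts} and~\ref{lemma:maximin_density}; (2) prove the representation lemma that $\D \in \mcal{D}_{\mathsf{DPI}}$ depends only on the likelihood-ratio law, using data processing in both directions; (3) prove that universal hockey-stick optimality is equivalent to the likelihood-ratio law of $(\f^\vecp,\g^\vecp)$ being a garbling of every competitor's, and construct the witnessing channel; (4) combine via data processing. The main obstacle I anticipate is step~(3): turning the analytic domination $\mathsf{HS}_\gamma(\f^\vecp\|\g^\vecp) \leq \mathsf{HS}_\gamma(x\|y)$ for all $\gamma$ into an \emph{explicit stochastic channel} realizing the garbling. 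The cleanest route is probably to note that $\mathsf{HS}_\gamma$ domination for all $\gamma$ is exactly the statement that, viewing each pair as a binary hypothesis-testing problem between $P$ and $Q$, the ROC curve (or equivalently the testing region / Blackwell order) of $(\f^\vecp, \g^\vecp)$ is dominated by that of $(x,y)$; Blackwell's theorem then hands us the channel, and the data processing inequality for $\D$ finishes the proof. I would need to be careful that the normalization $\f(V) = \g(V) = 1$ makes both $\f^\vecp$ and $\g^\vecp$ genuine distributions so that the hypothesis-testing interpretation is valid, and that Assumption~\ref{assume:g} keeps all coordinates of $\g^\vecp$ positive so the likelihood ratios are well-defined.
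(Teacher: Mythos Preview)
Your proposal is correct and follows essentially the same approach as the paper: reduce to universal closeness for hockey-stick divergences (the paper does this via Corollary~\ref{cor:local_maximin} applied to the convex $\vartheta_\gamma(t) = \max\{t-\gamma,0\}$, whereas you invoke Lemma~\ref{lemma:dual_contracts} directly---both routes are equivalent and both appear in the paper), then upgrade to all of $\mcal{D}_{\mathsf{DPI}}$ by invoking the Blackwell-order argument of \cite[Theorem~5.12]{DBLP:conf/innovations/ChanX25}. The only difference is that the paper treats that last implication as a black-box fact (Fact~\ref{fact:HStoDPI}) without reproducing the proof, while you plan to spell out the likelihood-ratio representation and the Blackwell-channel construction; your sketch of that argument is accurate.
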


\noindent \textbf{Restricted Class of Hockey-Stick Divergences.}
As in~\cite{DBLP:conf/innovations/ChanX25},
the universally closest pair problem is first considered
on a more restricted class of divergences.

\begin{definition}[Hockey-Stick Divergence]
\label{hockeystick}
Given distributions $P$ and $Q$ on the same
sample space~$\Omega$ and $\gamma \geq 0$,
the hockey-stick divergence\footnote{Strictly speaking, $\mathsf{HS}_\gamma$
is a divergence only for $\gamma \geq 1$.
For $0 \leq \gamma < 1$, $\mathsf{HS}_\gamma(P \| P) = 1 - \gamma > 0$;
hence, to make it a divergence,
we have to add a constant and consider 
$\mathsf{HS}_\gamma(P \| Q) - 1 + \gamma \geq 0$,
where equality holds \emph{iff} $P = Q$.
However, it is still fine to consider $\mathsf{HS}_\gamma$ as an objective function.}
is defined as 

$\mathsf{HS}_{\gamma}(P \| Q) := \sup_{S \subseteq \Omega} P(S) - \gamma \cdot Q(S).$\\

Note that for finite $\Omega$,
by considering convex $\vartheta_\gamma(t) := \max\{t - \gamma, 0\}$,
we have:

$\mathsf{HS}_\gamma(P \| Q) = \D_{\vartheta_\gamma}(P \| Q)$.

We denote the class of hockey-stick divergences as $\mcal{D}_{\mathsf{HS}} := \{\mathsf{HS}_\gamma: \gamma \geq 0\}$.
\end{definition}

To prove Theorem~\ref{th:universal_closest},
we use the following fact
that is derived from the proof 
in~\cite[Theorem 5.12]{DBLP:conf/innovations/ChanX25}.

\begin{fact}[From Hockey-Stick to Data Processing Inequality]
\label{fact:HStoDPI}
Suppose $\mcal{A}$ and $\mcal{B}$ are collections of distributions
on the same space and 
a pair $(x, y) \in \mcal{A} \times \mcal{B}$ is universally closest 
from $\mcal{A}$ to $\mcal{B}$ with respect to $\mcal{D}_{\mathsf{HS}}$.
Then, $(x,y)$ is also universally closest
from $\mcal{A}$ to $\mcal{B}$ with respect to $\mcal{D}_{\mathsf{DPI}}$.
\end{fact}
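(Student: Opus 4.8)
The plan is to route the implication through a channel construction. Given a pair $(x,y) \in \mcal{A}\times\mcal{B}$ that is universally closest with respect to $\mcal{D}_{\mathsf{HS}}$, I would build, for each competitor $(x',y') \in \mcal{A}\times\mcal{B}$, a stochastic transformation $T$ (a Markov kernel between the underlying finite spaces) with $T(x')=x$ and $T(y')=y$. Once such a $T$ is in hand, the data processing inequality immediately gives, for every $\D \in \mcal{D}_{\mathsf{DPI}}$, that $\D(x \| y) = \D(T(x') \| T(y')) \le \D(x' \| y')$. Since $(x,y)$ itself lies in $\mcal{A}\times\mcal{B}$, this shows $(x,y)$ attains $\min_{(x,y)\in\mcal{A}\times\mcal{B}} \D(x \| y)$ for every such $\D$, i.e.\ $(x,y)$ is universally closest with respect to $\mcal{D}_{\mathsf{DPI}}$.

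The bridge from the hockey-stick data to the channel is the observation that a pair of distributions is a \emph{binary experiment}, whose comparison theory is classical. For distributions $P,Q$ on a finite sample space $\Omega$, define the information region $\mcal{R}(P \| Q) := \mathrm{conv}\{(Q(S), P(S)) : S \subseteq \Omega\} \subseteq [0,1]^2$. Complementing $S$ sends $(Q(S),P(S))$ to $(1-Q(S), 1-P(S))$, so $\mcal{R}(P \| Q)$ is point-symmetric about $(\frac12,\frac12)$. Its support function in direction $(-\gamma,1)$ is exactly $\sup_S\,(P(S)-\gamma Q(S)) = \mathsf{HS}_\gamma(P \| Q)$; by the point symmetry the support function in direction $(\gamma,-1)$ equals $\gamma-1+\mathsf{HS}_\gamma(P \| Q)$, and the support functions in the two single-sign directions are trivial (identically $0$ on the negative quadrant, and $\gamma+\eta$ in direction $(\gamma,\eta)$ with $\gamma,\eta\ge 0$, independent of $P,Q$). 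Hence the family $\{\mathsf{HS}_\gamma(P \| Q) : \gamma \ge 0\}$ determines the closed convex set $\mcal{R}(P \| Q)$ completely.

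Now, since $(x,y)$ is universally closest with respect to $\mcal{D}_{\mathsf{HS}}$, for the fixed competitor $(x',y')$ we have $\mathsf{HS}_\gamma(x \| y) \le \mathsf{HS}_\gamma(x' \| y')$ simultaneously for all $\gamma\ge 0$, and by the previous paragraph this forces $\mcal{R}(x \| y) \subseteq \mcal{R}(x' \| y')$. By Blackwell's comparison theorem for dichotomies, containment of information regions is equivalent to $(x',y')$ being garblable into $(x,y)$, i.e.\ to the existence of a channel $T$ with $T(x')=x$ and $T(y')=y$; on finite spaces this is an elementary linear-feasibility statement, which can alternatively be derived directly by a minimax/duality argument against the separating hyperplanes of $\mcal{R}(x' \| y')$. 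Feeding this $T$ into the data processing inequality as in the first paragraph finishes the proof.

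The main obstacle I expect is precisely the binary-experiment comparison step: pinning down, with the correct direction of inequality, that $\mcal{R}(x \| y) \subseteq \mcal{R}(x' \| y')$ yields a garbling $x' \mapsto x$, $y' \mapsto y$ (not the reverse), and making sure the hockey-stick bounds are applied to a single fixed competitor across \emph{all} $\gamma$ at once — which is exactly what $\mcal{D}_{\mathsf{HS}}$-universal closeness supplies. A minor point to handle is the regime $0 \le \gamma < 1$, where $\mathsf{HS}_\gamma$ differs from a genuine $f$-divergence by the additive constant $1-\gamma$; this shift is the same for $(x,y)$ and $(x',y')$, so it affects neither the region-containment argument nor which pair is the minimizer.
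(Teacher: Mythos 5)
Your proof is correct, and it supplies an argument the paper itself omits: the paper does not prove this Fact, it simply cites it from the proof of Theorem~5.12 of Chan and Xue. The machinery behind that citation (visible in commented-out material in the source) is the power-function / tradeoff-function formalism: one shows that every $\D \in \mcal{D}_{\mathsf{DPI}}$ equals a monotone functional $\ell_{\D}$ of the power function of $(P,Q)$, that $\mathsf{HS}_\gamma(P\|Q)$ is the $y$-intercept of the slope-$\gamma$ tangent to that power function, and hence that simultaneous hockey-stick minimality forces pointwise dominance of power functions, which by monotonicity of $\ell_{\D}$ gives DPI minimality. Your Blackwell-theoretic argument is the dual packaging of exactly this material: the power function is the upper boundary of your information region $\mcal{R}(P\|Q)$; the hockey-stick values are its support-function values in the second-quadrant directions (and, by the point-symmetry about $(\tfrac12,\tfrac12)$, the fourth-quadrant ones), while the first- and third-quadrant directions are instance-independent, so simultaneous $\mathsf{HS}$-minimality against a fixed competitor $(x',y')$ forces $\mcal{R}(x\|y)\subseteq\mcal{R}(x'\|y')$; Blackwell's comparison theorem for dichotomies converts containment into an explicit garbling $T$ with $T(x')=x$, $T(y')=y$; and a single direct appeal to the definition of the data processing inequality finishes. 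The two routes are mathematically equivalent (region containment, power-function dominance, and existence of a garbling are all the same statement), but yours is arguably cleaner because it invokes the DPI definition on a concrete channel rather than relying on a representation theorem that hides the channel step inside $\ell_{\D}$. You also correctly handle the two places such an argument typically slips: the $0\le\gamma<1$ regime, where $\mathsf{HS}_\gamma$ differs from a genuine divergence by the additive constant $1-\gamma$ on both sides so neither the containment nor the minimization is affected; and the orientation of Blackwell's theorem, namely that the $\mathsf{HS}$-smaller experiment has the smaller region and is the garbled one.
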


The hypothesis in Fact~\ref{fact:HStoDPI}
can be paraphrased from Corollary~\ref{cor:local_maximin}
in our setting as follows.

\begin{lemma}[Paraphrasing Corollary~\ref{cor:local_maximin}]
\label{lemma:hs}
Any locally maximin~$\vecp$ induces a pair $(\f^\vecp, \g^\vecp)$
that is universally closest 
from $\mcal{B}^{\geq}_\f$ to  $\mcal{B}^{\leq}_\g$
with respect to $\mcal{D}_{\mathsf{HS}}$.
\end{lemma}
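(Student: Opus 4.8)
The plan is to recognize Lemma~\ref{lemma:hs} as a direct instantiation of Corollary~\ref{cor:local_maximin}. Recall from Definition~\ref{hockeystick} that on a finite ground set $\mathsf{HS}_\gamma(x \| y) = \D_{\vartheta_\gamma}(x \| y)$ with $\vartheta_\gamma(t) := \max\{t-\gamma,0\}$, and observe that $\vartheta_\gamma$ is convex for every $\gamma \geq 0$, being the pointwise maximum of two affine functions. So I would fix $\gamma \geq 0$, take $\vartheta = \vartheta_\gamma$ in the dual-modular convex program of Definition~\ref{defn:general_convex}, and invoke the last sentence of Corollary~\ref{cor:local_maximin}: as long as $\vartheta$ is convex (strict convexity is not needed), every locally maximin solution is an optimal solution of that convex program. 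Hence, if $\vecp$ is locally maximin in the sense of Definition~\ref{defn:local_maximin}, the induced pair $(\f^\vecp, \g^\vecp)$ attains $\min_{(x,y)\in\Bfg} \mathsf{HS}_\gamma(x\|y)$. Since the argument is uniform in $\gamma$, the same pair minimizes $\mathsf{HS}_\gamma$ for all $\gamma \geq 0$ simultaneously, which is exactly the definition (Definition~\ref{defn:universal}) of being universally closest from $\mcal{B}^{\geq}_\f$ to $\mcal{B}^{\leq}_\g$ with respect to $\mcal{D}_{\mathsf{HS}}$.

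As a cross-check, I would also verify the same conclusion directly from the strong duality of Lemma~\ref{lemma:dual_contracts}, which does not route through the convex-program analysis. Fix $\gamma \geq 0$ and let $i$ be the largest index with $\rho_i \geq \gamma$ (with $S_{\leq i} = \emptyset$ if no such index exists). For an arbitrary feasible $(x,y) \in \Bfg$, Lemma~\ref{lemma:dual_contracts} gives $\mathsf{HS}_\gamma(x\|y) \geq \f(S_{\leq i}) - \gamma\cdot\g(S_{\leq i})$; and because Fact~\ref{fact:local_represent} tells us that the locally maximin $\vecp$ induces a locally maximin allocation $(\f^\vecp, \g^\vecp)$, the equality clause of the same lemma gives $\mathsf{HS}_\gamma(\f^\vecp\|\g^\vecp) = \f(S_{\leq i}) - \gamma\cdot\g(S_{\leq i})$. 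Combining the two, $\mathsf{HS}_\gamma(\f^\vecp\|\g^\vecp) \leq \mathsf{HS}_\gamma(x\|y)$ for every feasible $(x,y)$, and letting $\gamma$ range over $[0,\infty)$ completes the proof.

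There is no genuinely hard step: the real work has already been done in Corollary~\ref{cor:local_maximin} (equivalently Lemma~\ref{lemma:dual_contracts}). The only points I would spell out carefully are minor bookkeeping. First, for $0 \leq \gamma < 1$ the function $\mathsf{HS}_\gamma$ is not literally a divergence, and one replaces it by $\mathsf{HS}_\gamma(\cdot\|\cdot) - 1 + \gamma$ as in the footnote to Definition~\ref{hockeystick}; since this is a constant shift independent of $(x,y)$, it leaves the minimizers — and hence the universally-closest conclusion — unchanged. Second, I would make explicit that the boundary cases (ties at $\gamma = \rho_i$, and $\gamma$ exceeding $\rho_1$ so that $S_{\leq i} = \emptyset$) are all subsumed by the statement of Lemma~\ref{lemma:dual_contracts} as written, so no additional case analysis is needed.
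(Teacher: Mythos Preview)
Your proposal is correct and takes essentially the same approach as the paper: both recognize that $\vartheta_\gamma(t)=\max\{t-\gamma,0\}$ is convex (though not strictly so) and invoke the last clause of Corollary~\ref{cor:local_maximin} to conclude that any locally maximin solution minimizes $\D_{\vartheta_\gamma}=\mathsf{HS}_\gamma$ for every $\gamma\geq 0$. Your additional cross-check via Lemma~\ref{lemma:dual_contracts} and the bookkeeping remarks are sound but go beyond what the paper records, which is a one-sentence proof.
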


\begin{proof}
Observe that for any $\gamma \geq 0$,
the hockey-stick divergence $\mathsf{HS}_\gamma$
can be expressed as $\D_{\vartheta_\gamma}$ for some (not necessarily strictly) convex function~$\vartheta_\gamma$.  
\end{proof}

\begin{proofof}{Theorem~\ref{th:universal_closest}}
This is a direct application of Fact~\ref{fact:HStoDPI}.
The hypothesis of Fact~\ref{fact:HStoDPI} is achieved
by Lemma~\ref{lemma:hs}.
The conclusion of Fact~\ref{fact:HStoDPI} finishes the proof.
\end{proofof}

\ignore{
****************

\begin{definition}[$f$-Divergence\footnote{We modify the definition into a discrete version.}~\cite{minasyan2018hockey-stick}]
The $f$-divergence family is a family of divergences between probability distributions
given by $$\D_{f}(P\|Q)=\sum\limits_{x\in X}Q(x)\cdot f(\frac{P(x)}{Q(x)}) $$
where $f:\mathbb{R}_{\geq 0}\rightarrow \mathbb{R}$ is a convex, lower semicontinuous function and $P,Q$ are two distributions on sample space $X$.
\end{definition}

\begin{fact}[~\cite{Sason_2016,minasyan2018hockey-stick}]
\label{hsisf}
The hockey-stick divergence \footnote{We use an equivalent definition here.} is the $f$-divergence corresponding to the ‘hockey-stick’
function $f_{\gamma}(t)=\max\{t-\gamma,0\}$ with $\gamma\geq 1$, 
$$\D_{\gamma}(Q\|P)=\D_{f_\gamma}(Q\|P)=\sum\limits_{x\in X}P(x)\cdot \max\{\frac{Q(x)}{P(x)}-\gamma,0\} =\sum\limits_{{Q(x)}\geq \gamma \cdot P(x)} {Q(x)}-\gamma \cdot P(x)  $$
Furthermore, $$\{\D_{\gamma}(P\|Q),\gamma\geq 1\}=\{\D_{\gamma}(Q\|P),0<\gamma\leq 1\}$$
\end{fact}

\begin{theorem}[Locally Maximin Solution is Closest for 
$\mcal{D}_{\textnormal{HS}}$]
\label{th:earlylocal_hs}
Given an input instance $(V; \f,\g)$, suppose $(\vecp,\vecq)\in \Delta(\mcal{S}_V)^2$ is a locally maximin solution, then the induced pair $(\f^{\vecp},\g^{\vecq})$ is universally closest with respect to the class $\mcal{D}_{\textnormal{HS}}$ of hockey-stick divergences.

\end{theorem}

\begin{proof}
Note that we don't need the strict convexity to guarantee the unchangeable property of density at this point. Thus, the statement can be directly proved based on Lemma \ref{lemma:maximin_density}, Lemma \ref{lemma:maximin_opt} and Fact \ref{hsisf}.
\end{proof}

\begin{definition}[Power Function ~\cite{kadane1968discrete,DBLP:conf/innovations/ChanX25}]
\label{defn:power}
Suppose $P$ and $Q$ are distributions on the same sample space~$\Omega$.

\begin{compactitem}
\item 
For a discrete sample space~$\Omega$, the power function $\Pow(P \| Q): [0,1] \rightarrow [0,1]$ can be
defined in terms of the \emph{fractional knapsack problem}.

Given a collection $\Omega$ of items,
suppose $\omega \in \Omega$ has weight $P(\omega)$ and value $Q(\omega)$.
Then, given $x \in [0,1]$,
$\Pow(P \| Q)(x)$ is the maximum value attained with weight capacity
constraint~$x$, where items may be taken fractionally.

\item For a continuous sample space~$\Omega$,
given $x \in [0,1]$, $\Pow(P \| Q)(x)$ is the supremum
of $Q(S)$ over measurable subsets $S \subseteq \Omega$
satisfying $P(S) = x$.

\end{compactitem}
\end{definition}

\begin{fact}[Valid Power Functions~\cite{dong2022gaussian,DBLP:conf/innovations/ChanX25}]
\label{fact:power}
Suppose $g : [0,1] \rightarrow [0,1]$ is a function.  Then, there exist
distributions $X$ and $Y$ such that $\Pow(X \| Y) = g$ \emph{iff}
$g$ is continuous, concave and $g(x) \geq x$ for all $x \in [0,1]$.
\end{fact}

\begin{fact}[Recovering Divergence from Power Function ~\cite{dong2022gaussian,DBLP:conf/innovations/ChanX25}]
\label{fact:earlydiv_pow}
For any divergence $\mathsf{D} \in \mcal{D}_{\textnormal{DPI}}$,
there exists a functional
$\ell_\mathsf{D}: [0,1]^{[0,1]} \rightarrow \R$ 
such that for any distributions $P$ and $Q$ on
the same sample space, 
$\mathsf{D}(P \| Q) = \ell_\mathsf{D}(\Pow(P \|Q))$.

Moreover, if $g_1 \preceq g_2$, then 
$\ell_\mathsf{D}(g_1) \leq \ell_\mathsf{D}(g_2)$.
\end{fact}

\begin{definition}[Subgradient ~\cite{DBLP:conf/innovations/ChanX25}]
Suppose $g: [0,1] \rightarrow [0,1]$ is a power function
and $x \in [0,1]$.
Then, the subgradient of $g$ at $x$
is defined as the collection $\partial g(x)$ of real numbers satisfying: 

$\gamma \in \partial g(x)$ \emph{iff} for all $y \in [0,1]$, $g(x) + \gamma \cdot (y - x) \geq g(y)$.

In other words, the line segment with slope~$\gamma$ touching the curve $g$ at $x$ never goes below the curve.
\end{definition}

\begin{fact}[Recovering Hockey-Stick Divergence From Power Function ~\cite{DBLP:conf/innovations/ChanX25}]
\label{fact:earlyhs_pow}
Suppose $P$ and $Q$ are distributions on the same sample space~$\Omega$, where
$g = \Pow(P \| Q)$ is the power function between them.
Then, for any $\gamma \geq 0$,
there exists $x \in [0,1]$ such that $\gamma \in \partial g(x)$
and $\mathsf{D}_\gamma(P \| Q) = g(x) - \gamma x$.

In other words, one can find a line segment with slope $\gamma$
touching the curve $g$ and the $y$-intercept
of the line will give $\mathsf{D}_\gamma(P \| Q)$.
\end{fact}

\begin{theorem}[Locally Maximin Solution induces Minimal Power Function]
\label{th:local_power}
Given an input instance $(V; \f,\g)$, suppose $(\vecp,\vecq)\in \Delta(\mcal{S}_V)^2$ is a locally maximin solution and induces the pair $(\f^{\vecp},\g^{\vecq})$.
Then, the power function $\widetilde{g}$  between any other induced pair $(\f^{\widetilde{\vecp}},\g^{\widetilde{\vecq}})$
satisfies
$\Pow(\f^{{\vecp}} \| \g^{{\vecq}}) \preceq \widetilde{g}$.
In conclusion, any such pair induced by the locally maximin solution gives the same power function, and
we denote $\Pow^* = \Pow(\f^{\vecp} \| \g^{\vecq})$.

From Fact \ref{fact:earlyhs_pow},
the induced pair $(\f^{\vecp},\g^{\vecq})$ is universally closest with respect to $\mcal{D}_{\textnormal{DPI}}$.
\end{theorem}

\begin{proof}
Suppose $g = \Pow(\f^{\vecp} \| \g^{\vecq})$
is the power function between the pair $(\f^{\vecp}, \g^{\vecq})$
induced by the locally maximin solution $(\vecp,\vecq)$.
By contradiction,
suppose there exists a solution $(\widetilde{\vecp},\widetilde{\vecq})\in \Delta(\mcal{S}_V)^2$
induces the power function $\widetilde{g}$
such that for some $x \in [0,1]$,
$g(x) > \widetilde{g}(x)$.

We consider the line segment $\widetilde{\ell}$ with fixed slope  $\gamma\in \partial \widetilde{g}(x)$ touching
curve~$\widetilde{g}$ at $x$.  From Fact~\ref{fact:earlyhs_pow},
the hockey-stick divergence $\mathsf{D}_\gamma(\f^{\widetilde{\vecp}}\|\g^{\widetilde{\vecq}})$
is the $y$-intercept of $\widetilde{\ell}$.

Note that  $g(x) > \widetilde{g}(x)$ implies $\mathsf{D}_\gamma(\f^{{\vecp}}\|\g^{{\vecq}})
> \mathsf{D}_\gamma(\f^{\widetilde{\vecp}}\|\g^{\widetilde{\vecq}})$. This is because 
we have to move $\widetilde{\ell}$ strictly upwards to find a line segment with slope $\gamma$ touching curve $g$. Thus, this contradicts with Theorem~\ref{th:earlylocal_hs}
which states that 
$(\f^{{\vecp}},\g^{{\vecq}})$ is universally closest
with respect to $\mcal{D}_{\textnormal{HS}}$.

From Fact~\ref{fact:earlydiv_pow},
we can further conclude that  
$(\f^{{\vecp}},\g^{{\vecq}})$ induced by the local maximin is universally closest with respect to $\mcal{D}_{\textnormal{DPI}}$.
\end{proof}

}

\newpage
\bibliographystyle{alpha}
\bibliography{ref,density,related work}


\end{document}